\documentclass[reprint,
 superscriptaddress,
nofootinbib,
 aps,
pra,
onecolumn,
]{revtex4-2}

\usepackage[utf8]{inputenc} 
\usepackage{amsmath,amsfonts,amsthm,amssymb} 
\usepackage[T1]{fontenc}    
\usepackage{url}            
\usepackage{xurl}            
\usepackage{multirow}
\usepackage{booktabs}       
\usepackage{nicefrac}       
\usepackage{microtype}      
\usepackage{graphicx}
\usepackage{xcolor, etoolbox}
\usepackage{mathtools}
\usepackage{ragged2e}
\usepackage{braket}      
\usepackage[english]{babel} 
\usepackage{enumitem}
\usepackage{xspace}
\usepackage{dsfont}
\usepackage{bm}
\usepackage{bbm}
\usepackage{dcolumn}
\usepackage{mathrsfs}
\usepackage{tikz}
\usepackage{graphicx} 
\usepackage{lipsum}
\usepackage{lmodern}
\usepackage{leftindex}
\usepackage{comment}
\usepackage{tensor}
\usepackage{blkarray}
\usepackage{hyperref}       
\usepackage[capitalise]{cleveref}
\usepackage{pgfplots}

\newcommand{\HS}{\mathcal{H}}

\DeclarePairedDelimiter\abs{\lvert}{\rvert}

\newcommand{\norm}[1]{\left\|#1\right\|}

\newcommand{\eps}{\varepsilon}

\newcommand{\mc}[1]{{\mathcal{#1}}}

\newcommand{\ketbra}[2]{\ket{#1}\!\bra{#2}}
\newcommand{\proj}[1]{\ketbra{#1}{#1}}
\newcommand{\cq}{\coloneqq}

\newtheorem{theorem}{Theorem}
\newtheorem{lemma}[theorem]{Lemma}
\newtheorem{proposition}[theorem]{Proposition}
\newtheorem{corollary}{Corollary}[theorem]

\newtheorem{definition}{Definition}
\newtheorem{problem}{Problem}

\newtheorem{repeatedproposition}{Proposition}

\newenvironment{repeatProp}[1]
  {\renewcommand{\therepeatedproposition}{\ref{#1}, informal}%
   \begin{repeatedproposition}}
  {\end{repeatedproposition}}

  \newenvironment{repeatProp2}[2]
  {\renewcommand{\therepeatedproposition}{\ref{#1} and \ref{#2}, informal}%
   \begin{repeatedproposition}}
  {\end{repeatedproposition}}

\hypersetup{
    colorlinks,
    linkcolor=blue,
    citecolor=blue,
    urlcolor=blue,
    breaklinks=true 
}

\title{Optimal transducers - QIP}
\date{August 2026}

\begin{document}

\title{Optimal transducers using symmetries}

\author{Benoît Dubus}
    \email{benoit.dubus@ulb.be}
    \affiliation{Centre for Quantum Information and Communication, École polytechnique de Bruxelles, CP 165, Université libre de Bruxelles, 1050 Brussels, Belgium}
\author{Julien Ladeuze}
    \email{julien.ladeuze@ulb.be}
    \affiliation{Centre for Quantum Information and Communication, École polytechnique de Bruxelles, CP 165, Université libre de Bruxelles, 1050 Brussels, Belgium}
\author{Jeremie Roland}
    \email{jeremie.roland@ulb.be}
    \affiliation{Centre for Quantum Information and Communication, École polytechnique de Bruxelles, CP 165, Université libre de Bruxelles, 1050 Brussels, Belgium}

\begin{abstract}
Transducers (Belovs, Jeffery and Yolcu, 2024) are a quantum computing framework describing a quantum algorithm as a unitary converting an input state into a target state using a catalyst, an auxiliary vector that is left unchanged. They are a powerful tool in quantum algorithm design, especially in the context of quantum query complexity: feasible points of the (dual) adversary semidefinite program (SDP) directly translate into transducers and the optimal transduction complexity is equal to the adversary bound, \textit{i.e.} the Las Vegas complexity, which is known to characterize bounded-error quantum query complexity. Moreover, contrary to bounded-error algorithms, transducers compose exactly, which limits overheads due to controlling errors in algorithms constructed by composition. Constructing efficient, let alone optimal, transducers in terms of quantum query complexity nevertheless remains a hard task since it still requires solving the adversary SDP and constructing the unitary to obtain an explicit algorithm. 
In this paper, we show how using the symmetry group of state-conversion problems simplifies both steps. First, using a symmetrization argument, we prove an optimal catalyst can always be chosen covariant under a representation of the symmetry group. Second, we prove that the transducer intertwines two different representations of the group and can thus be chosen block diagonal in the isotypic decomposition of the Hilbert space. Using those methods, we then derive optimal transducers, with optimal constants, for different widely used quantum algorithmic primitives, such as unstructured search, amplitude amplification and amplitude estimation. Our approach extends previous work on the use of representation theory to compute adversary lower bounds (H{\o}yer, Lee, and {\v S}palek, 2007; Ambainis, Magnin, Roetteler and Roland, 2011) to the systematic construction of optimal algorithms.
\end{abstract}
\maketitle
\section{Introduction}
Transducers are a new model for quantum computing recently introduced by Belovs, Jeffery and Yolcu~\cite{belovsTamingQuantumTime2024}. Even though transducers equivalent to bounded-error quantum algorithms, transducers are particularly practical in the design of new quantum algorithms as they allow for errorless composition leading to new algorithms with better complexities for various problems such as the welded tree problem~\cite{belovsGlobalPhaseHelps2024}, electric flow sampling~\cite{apersElfsTransducersQuantum2026}, ground state energy estimation~\cite{jefferyOptimalQuantumAlgorithm2026} or Hamiltonian/Lindbladian simulations~\cite{chenTimeDependentHamiltonianSimulation2026,CGJLWZ26,CGWZ26lind,WY26}, as well as new complexity results~\cite{JW26qma}.

Informally, a transducer is a unitary $S$ acting on the direct sum of two Hilbert spaces $\mc H$ and $\mc L$, respectively, the public and private spaces. We say that it transduces a state $\ket{\xi}\in\mc H$ into a state $\ket{\tau}\in\mc H$, denoted $\ket{\xi}\overset{S}{\rightsquigarrow}\ket{\tau}$ if there exists a vector $\ket{v}\in\mc{L}$, called the catalyst, such that
\begin{equation}
     S\left(\ket\xi\oplus\ket v\right)=\ket\tau\oplus\ket{v}.
\end{equation}
The transduction complexity is then defined as $W=\norm{v}^2$, and Belovs, Jeffery and Yolcu~\cite{belovsTamingQuantumTime2024} show that from such a transducer, it is possible to construct a (standard) quantum algorithm that converts $\ket{\xi}$ into a state $\epsilon$-close to $\ket{\tau}$ using $O(1+W/\epsilon^2)$ calls to the unitary $S$.

In the context of quantum query complexity, Belovs, Jeffery and Yolcu~\cite{belovsTamingQuantumTime2024} define the notion of canonical transducers. Considering the state conversion problem $\{\ket{\xi_i}\in\mc H\}_{i\in\mc I}\mapsto\{\ket{\tau_i}\in\mc H\}_{i\in\mc I}$ for some input set $\mc I$, with access to an oracle from a set $\{O_i\in\mc M\}_{i\in\mc{I}}$, a canonical transducer is given by an input-independent unitary $S^\circ$ and a family of catalysts $\{\ket{v_i}\}_{i\in\mc I}$ living in the private space $\mc L=\mc W\otimes\mc M$ that perform the transduction
\begin{equation}
    S^\circ(\mathds{1}_{\mc H}\oplus \mathds{1}_{\mc W}\otimes O_i)(\ket{\xi_i}\oplus\ket{v_i})=\ket{\tau_i}\oplus\ket{v_i}\qquad\forall i\in \mc I.
\end{equation}
We see that here $S^\circ(\mathds{1}_{\mc H}\oplus \mathds{1}_{\mc W}\otimes O_i)$ plays the role of the transducer. The query complexity for this algorithm is then defined as the worst-case transduction complexity over all inputs: $W=\max_i \norm{v_i}^2.$

This model of computation is particularly promising as it is error-free and satisfies powerful composition properties, allowing to design algorithms for complex problems from simpler building blocks without incurring overheads due to controlling errors. It is therefore essential to derive methods for building efficient transducers and good algorithms for quantum algorithmic primitives, which is precisely the main goal of this manuscript.
In particular, we prove a large set of mathematical propositions leading to methods exploiting the symmetries of a problem for the design of transducers and apply them to find optimal transducers for several standard quantum algorithmic primitives.

In~\cref{Sec:symTra}, we derive consequences of the symmetries of the problem for the efficient construction of transducers and algorithms. First we properly define the symmetry group of a state conversion problem as the subgroup $\mc{G}$ of automorphisms of the input set that transforms the input states, target states and oracles using representations of the symmetry group. We then define two notions of symmetry for a transducer, the weak and strong covariance. A transducing algorithm is weakly covariant if there exists a unitary representation $\phi^{\mc L}$ of $\mc G$ that maps the catalyst $\ket{v_i}$ for the input $i$ onto $\ket{v_{g(i)}}$ for the input $g(i)$ for all elements $g$ of $\mc G$. The algorithm is strongly covariant if that representation is fixed by the problem parameters as described in~\cref{def:strongGov}. In~\cref{sec:symCat}, we derive theorems to help us construct the optimal catalyst. \cref{theorem:symmetryCat} states that any algorithm can be symmetrized into a weakly covariant algorithm with a lower or equal transduction complexity. This immediately implies that there exists an optimal catalyst that is weakly covariant (\cref{corr:optCat}). For many problems, constructing a transducing algorithm will require to impose strong covariance. However, in general there exist problems with no feasible strongly covariant algorithm, or with no optimal strongly covariant algorithm, as exemplified respectively by~\cref{prob:SearchCE1} and~\cref{prob:CosineCE}. \cref{Corr:SpaceRed} states that a weakly covariant catalyst $\ket{v_i}$ necessarily lies in the subspace stabilized by $\mathrm{Stab}(i)$, greatly reducing the possible space. In~\cref{sec:SymTra} we then derive consequences for the construction of the transducing algorithm. \cref{theorem:symmetryTrans} states that $S^\circ$ is an intertwiner and thus we derive~\cref{Corr:Niiiice} that necessarily establishes its block-diagonality on the irreducible representations of the group.

With those results, we are able to construct optimal transducers for widely used quantum algorithmic primitives in~\cref{sec:app}. In particular, we tackle
\begin{itemize}
    \item $\mathrm{Search}^N_M$ and $\mathrm{Search}^N_{\geq M}$: Unstructured search on $N$ elements, with $M$ and at least $M$ marked elements, respectively (\cref{sec:searchM} and~\ref{sec:searchgeqM}). These problems take as input a bit-string $x\in\{0,1\}^N$ and convert the uniform superposition over all indices into the uniform superposition over the marked indices in $x$:
    \begin{equation}
        \ket{\xi_x}=N^{-1/2}\sum_i\ket{i}\longmapsto\ket{\tau_x}=\ket{x}\cq\abs{x}^{-1/2}\sum_{i:x_i=1}\ket{i}.
    \end{equation}
    For search $\mathrm{Search}^N_M$ we consider that $\abs{x}=M$, whereas for $\mathrm{Search}^N_{\geq M}$ we consider that $\abs{x}\geq M$. The symmetry group of those problems is $\mc S_n$. To find the catalyst we first specify the adversary bound to the problem. Then for $\mathrm{Search}^N_M$ we assume that the catalyst is weakly covariant, as allowed by~\cref{corr:optCat}, and then show that among all possible representations, the strong covariance is optimal. Finding the optimal catalyst from there is a simple two-dimensional optimization problem. For $\mathrm{Search}^N_{\geq M}$ we directly assume that the catalyst is strongly covariant and are able to conclude that this is optimal as the resulting complexity matches that of $\mathrm{Search}^N_M$, which constitutes a lower bound that is therefore tight.
    More precisely, we find
    \begin{repeatProp2}{prop:searchM}{prop:searchgeqm}
\begin{equation}
    W\left(\mathrm{Search}^N_{M}\right)=W\left(\mathrm{Search}^N_{\geq M}\right)=\sqrt{\frac{N-M}{4M}}.
\end{equation}
\end{repeatProp2}
Using~\cref{theorem:symmetryTrans}, we are then able to derive explicit optimal transducing algorithms for these primitives (note that their complexity is optimal up to the constant: there is no Big-$O$ in the obtained complexities).
\item $\mathrm{Amp}_\eps$ and $\mathrm{Amp}_{\geq \eps}$: Amplitude amplification with a promise $\eps$ or at least $\eps$ for the overlap on a marked subspace (\cref{sec:AmpFix} and~\ref{sec:AmpPin}). These problems take as input a vector $\ket{\xi}$ and require to project it onto some marked subspace accessed by a projector $\Pi_m$:
    \begin{equation}
        \ket{\xi}\longmapsto\ket{\tau_{\ket{\xi}}}=\frac{\Pi_m\ket{\xi}}{\norm{\Pi_m\ket{\xi}}}.
    \end{equation}
    For $\mathrm{Amp}_\eps$ we have the promise that $\norm{\Pi_m\ket{\xi}}=\eps$, whereas for $\mathrm{Amp}_{\geq \eps}$, the promise is $\norm{\Pi_m\ket{\xi}}\geq\eps$. The symmetry group of those problems is the subgroup of $U(\mc H)$ that commutes with $\Pi_m$. The procedure to find the catalyst is similar to the one for unstructured search: we derive a lower bound for the transduction complexity by considering a subset of inputs, then build a strongly covariant catalyst and show that it saturates the lower bound and is therefore optimal. We find
    \begin{repeatProp2}{prop:AmpFix}{prop:AmpMin}
\begin{equation}
    W\left(\mathrm{Amp}_\eps\right)=W\left(\mathrm{Amp}_{\geq \eps}\right)=\frac{\sqrt{1-\eps^2}}{2\eps}.
\end{equation}
\end{repeatProp2}
We are able then able to derive optimal transducing algorithms using~\cref{theorem:symmetryTrans}. Note that for $\mathrm{Amp}_{\geq \eps}$, the construction is more technical and requires functional analysis as it involves an infinite dimensional Hilbert space, namely $\mathbb{C}\oplus L^2(\mathbb{R})$.
\item $\mathrm{Est}_{g}$: Amplitude estimation (\cref{sec:AmpEstimation}). In this problem, we have access to an oracle reflecting through a state $\ket{\theta}=\cos\theta\ket{0}+\sin\theta\ket{1}$, where $\theta$ is unknown, and the goal is to convert a fixed state $\ket{\xi_0}$ into a state $\ket{f_\theta}$ that depends on $\theta$ (intuitively, the state could encode information about $\theta$):
\begin{equation}
    \ket{\xi_0}\longmapsto\ket{f_\theta}.
\end{equation}
We impose additionally that $\ket{f_\theta}$ is $\pi$-periodic, and that the inner product $\braket{f_\theta|f_{\theta'}}$ is a function $g(\theta-\theta')$. This definition might seem peculiar but subsumes a large class of problems, in particular the more standard problem of determining an estimate of the amplitude $\cos\theta$, as argued at the start of the section. To derive the optimal catalyst for this particular problem, we identify the symmetry group of the problem as $U(1)$, we impose (without loss of generality, as follows from our results) that the optimal catalyst is weakly covariant and that it has a component for all characters of the group. Finding each of those components individually then reduces to a two-dimensional optimization problem that we can directly solve, leading to
\begin{repeatProp}{prop:ampEst}
    \begin{equation}
        W(\mathrm{Est}_{g})=\sum_{n\in2\mathbb Z+1}\abs{\hat h_n},\quad\text{with }h(\omega)=\lim_{\omega'\rightarrow\omega}\frac{1-g(\omega')}{2\sin\omega'}=\sum_{n\in2\mathbb Z+1}\hat h_ne^{i\omega n}
    \end{equation}
\end{repeatProp}
The relation with the Fourier series emerges as the characters of $U(1)$ form the Fourier basis.
\end{itemize}

Note that while the bounded-error quantum query complexity of such primitives was already known, the novelty of this work lies in the explicit expression of transducers doing the transduction optimally, and the machinery developed to construct catalysts efficiently, which can be reused to tackle any symmetric problem.

The algorithms presented in this work are optimal or efficient in terms of quantum query complexity, but we do not study time or space complexity: as constructed, the transducing algorithms that we obtain might involve unitaries that are hard to implement. We leave as open problem for future work how to make these transducing algorithms time or space efficient. We note that the symmetrization process itself, while a key element to construct algorithms that are optimal in terms of query complexity, requires to enlarge the Hilbert space, possibly to infinite dimension as symmetrizing over a Lie group requires to introduce a Hilbert space that is isomorphic to this Lie group. For that reason, amplitude estimation $\mathrm{Est}_{g}$ requires a register $\ell^2(\mathbb{N})\otimes \mathbb{C}^2$ and amplitude amplification $\mathrm{Amp}_{\geq \eps}$ a register $\mathbb{C}\oplus L^2(\mathbb{R})$. All those Hilbert spaces are isomorphic to $\ell^2(\mathbb{N})$.
An open question concerns the conversion of the transducer over those infinite Hilbert spaces into practical quantum algorithms. Either we will need to consider quantum computing architectures able to represent those Hilbert spaces faithfully (there exists many physical systems realizing infinite dimensional Hilbert spaces, such as bosonic systems where a mode is precisely described by Hilbert space $\ell^2(\mathbb{N})$), or we could design techniques to embed the necessary unitaries inside finite dimensional Hilbert spaces, either exactly or via truncation, which could result in trade-offs between the space complexity and the error introduced by the truncation.

\newpage
\section{Preliminary}
In this section we present the mathematical notions used in the manuscript. 
\subsection{State Conversion Problems and Their Equivalence}\label{sec:ScpEq}
The main object of study of this manuscript is the quantum query complexity of \emph{state conversion problems}~\cite{LMRSS11,AMRR11,Bel15var}. Given an input set $\mc{I}$, the goal is to convert an input state $\ket{\xi_i}$ into a target state $\ket{\tau_i}$ for all indices $i\in\mc I$, with access to a unitary oracle $O_i$ that encodes information about $i$ using a single algorithm that has no knowledge of $i$ beyond $O_i$. This generalizes the more standard problem of computing a function $f$, where $\ket{\xi_i}=\ket{0}$ and $\ket{\tau_i}=\ket{f(i)}$). Formally:
\begin{definition}[State conversion problem]\label{def:state-con}
    A \emph{state conversion problem} $P$ (or simply a \emph{problem}) over an input set $\mc{I}$ is a triple $P=(\mc{X}, \mc{T}, \mc{O})$ 
    where $\mc{X}, \mc{T}$ and $\mc{O}$ are, respectively, the indexed families of input states, target states and oracles:
        \begin{equation}
            \mc{X} \cq \{\ket{\xi_i}\}_{i\in \mc{I}}\quad 
            \mc{T} \cq  \{\ket{\tau_i} \}_{i\in \mc{I}}\quad
            \mc{O} \cq \{O_i \}_{ i \in \mc{I}}\qquad\text{with}\qquad\ket{\xi_i},\ket{\tau_i}\in\mc{H}\quad \norm{\xi_i}=\norm{\tau_i}=1\quad O_i\in U(\mc{M}),
        \end{equation}
        for some Hilbert spaces $\mc{H}$ and $\mc{M}$, with $U(\mc{M})$ denoting unitary group on $\mc{M}$.\footnote{Note that transducers, as defined in~\cref{Subsec:PreTra}) will act on the orthogonal direct sum $\mc H\oplus\mc W\otimes \mc M$, where $\mc W$ is some work space.}
\end{definition}

In the context of quantum query complexity, input-independent unitaries are free, so applying a fixed unitary to $\mc{X}$, $\mc{T}$, or $\mc{O}$ does not change the structure of the problem, in particular, it does not change its query complexity. We therefore define the following equivalence relation.
\begin{definition}[Unitary equivalence] Let $P=(\mc{X}, \mc{T}, \mc{O})$ and $P'=(\mc{X}', \mc{T}', \mc{O}')$ be two problems over the same input set $\mc{I}$ and Hilbert spaces $\mc H$ and $\mc M$. We say that $P$ and $P'$ are \emph{unitarily equivalent}, denoted $P\cong P'$, if
    \begin{equation}
       \exists U^\xi,U^\tau\in U(\mc{H}),~\exists U^L,U^R\in U(\mc{M}):\forall i\in\mc{I}:\begin{cases}\ket{\xi_i'}=U^\xi\ket{\xi_i}\\
        \ket{\tau_i'}=U^\tau\ket{\tau_i}\\
        O_i'=U^L O_iU^{R\dagger}\\
        \end{cases}.
    \end{equation}
\end{definition}
 Indeed, in that case an algorithm for $P$ can be transformed into an algorithm for $P'$ by pre-processing and post-processing each oracle call with $U^R$ and $U^{L\dagger}$, and pre-processing and post-processing the whole algorithm with $U^{\xi\dagger}$ and $U^\tau$
 
This equivalence relation is thus induced by the group action of $U(\mc H)^2\times U(\mc M)^2$. An algorithm for $P$ can be transformed into an algorithm solving $P'$ by applying $U^{\xi\dagger}$ at the start, then $U^\tau$ at the end, and $U^{R}$ and $U^{L\dagger}$ respectively before and after each query (see \cref{fig:unitary-equiv}).
We can now introduce invariants under this equivalence relation. First, we define the input and target Gram matrices (or kernels if $\mc{I}$ is infinite) $X$ of $\mc{X}$ and $T$ of $\mc T$,
\begin{equation}
    X(i,j)\cq\braket{\xi_i|\xi_j}\qquad T(i,j)\cq\braket{\tau_i|\tau_j}.
\end{equation}
Second, a natural candidate on the oracle side is the relative oracle operator-valued kernel
\begin{equation}
    K:\mc{I}\times\mc{I}\to U(\mc{M}): (i,j)\mapsto K(i,j)=O_i^\dagger O_j.
\end{equation}
However, although $K$ is invariant under left multiplication, since $U^L$ cancels out, it is not invariant under the right multiplication $O_i\mapsto O_iU^{R\dagger}$
\begin{equation}
    \left[O'_i=U^L O_iU^{R\dagger}\right]\Rightarrow \left[K'(i,j)={O'}_i^{\dagger}O_j'=U^RO_i^\dagger U^{L\dagger}U^LO_jU^{R\dagger}=U^RO_i^\dagger O_jU^{R\dagger}=U^R K(i,j)U^{R\dagger}\right].
\end{equation}
We instead consider $[K]$ the orbit of $K$ under simultaneous conjugation
\begin{equation}\label{Eq:EqKer}
    [K]\cq \left\{(i,j)\longmapsto U^R K(i,j)U^{R\dagger}:U^R\in U(\mc M)\right\}.
\end{equation}
\begin{figure}
    \begin{tikzpicture}
    \tikzset{
    gate/.style={
        draw=black,
        fill=white,
        rounded corners=2pt,
        minimum width=1.2cm,
        minimum height=3cm
    },
    oracle/.style={
        draw=black,
        fill=white,
        rounded corners=2pt,
        minimum width=1.2cm,
        minimum height=1cm
    }
}
\draw[thick] (0,4) -- (12,4);
\draw[thick] (0,2) -- (12,2);

\node at (-1,3) {$\mathcal{A}$};

\node[gate] at (2.6,3) {$U_0$};
\node[oracle] at (5.8, 2) {$O_i$};
\node[gate] at (8.8, 3) {$U_1$};

\draw[dotted, thick] (0.2, 4.5) -- (0.2, 1.6) node[below] {$\ket{\xi_i}$};
\draw[dotted, thick] (11.2, 4.5) -- (11.2, 1.6) node[below] {$\ket{\tau_i}$};


\draw[thick] (0,0) -- (12,0);
\draw[thick] (0,-2) -- (12,-2);

\node at (-1,-1) {$\mathcal{A'}$};

\node[gate, fill=gray] at (1,-1) {$U^{\xi\dagger}$}; 
\node[gate] at (2.6,-1) {$U_0$};
\node[oracle, fill=gray] at (4.2, -2) {$U^R$};
\node[oracle] at (5.8, -2) {$O'_i$};
\node[oracle, fill=gray] at (7.4, -2) {$U^{L\dagger}$};
\node[gate] at (8.8, -1) {$U_1$};
\node[gate, fill=gray] at (10.4, -1) {$U^\tau$};

\draw[dotted, thick] (0.2, 0.5) -- (0.2, -2.6) node[below] {$\ket{\xi'_i}$};
\draw[dotted, thick] (1.8, 0.5) -- (1.8, -2.6) node[below] {$\ket{\xi_i}$};

\draw[dotted, thick] (9.6, 0.5) -- (9.6, -2.6) node[below] {$\ket{\tau_i}$};

\draw[dotted, thick] (11.2, 0.5) -- (11.2, -2.6) node[below] {$\ket{\tau'_i}$};

\end{tikzpicture}
\caption{\centering An example of two algorithms solving two problems that are unitarily equivalent. The algorithm $\mathcal{A}$ solves the problem $\ket{\xi_i}\mapsto \ket{\tau_i}$ whereas the algorithm $\mathcal{A}'$ solves the problem $\ket{\xi'_i} \mapsto \ket{\tau'_i}$. This example shows that a single oracle call is needed with the application of two independent unitaries $U_0$ and $U_1$.\label{fig:unitary-equiv}}
\end{figure}

Note that the same unitary $U^R$ acts simultaneously on every entry of the kernel.
This leads to the following proposition,
\begin{proposition}\label{Prop:EqProb}
    Let $P,P'$ be two state conversion problems over the same input set and Hilbert spaces. Assume that $\mc H$ is finite dimensional, or more generally that $\dim\overline{\operatorname{Span}}(\ket{\xi_i})^\perp=\dim\overline{\operatorname{Span}}(\ket{\xi'_i})^\perp$ and similarly for $\ket{\tau_i}$ and $\ket{\tau_i'}$. Then 
    \begin{equation}
        P\cong P'\iff\begin{cases}
            X=X'\\T=T'\\ [K]=[K']
        \end{cases}.
    \end{equation}
     Consequently, the class of unitarily equivalent state conversion problems is determined by and determines the triple $\left(X,T,[K]\right)$.
\end{proposition}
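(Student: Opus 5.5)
The forward direction is a direct computation. If $P\cong P'$ via $(U^\xi,U^\tau,U^L,U^R)$, then unitarity gives
\[
X'(i,j)=\bra{\xi_i}U^{\xi\dagger}U^\xi\ket{\xi_j}=X(i,j),
\]
and likewise $T'=T$. The computation displayed just before \cref{Eq:EqKer} gives $K'(i,j)=U^RK(i,j)U^{R\dagger}$ for every pair $(i,j)$ with one and the same $U^R$. Hence $K'\in[K]$, and since $[\cdot]$ is an orbit under a group action, $[K']=[K]$.

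For the converse I treat the states and the oracles separately.

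\textbf{States.} I would prove the standard fact that two families with equal Gram kernels are related by a unitary. Define $V:\operatorname{Span}(\ket{\xi_i})\to\operatorname{Span}(\ket{\xi'_i})$ by
\[
V\sum_i c_i\ket{\xi_i}=\sum_i c_i\ket{\xi'_i}.
\]
This is well defined and isometric, because $\norm{\sum_i c_i\xi'_i}^2=\sum_{i,j}\bar c_ic_jX'(i,j)=\norm{\sum_i c_i\xi_i}^2$; in particular a vanishing combination on one side vanishes on the other. It is surjective by construction, so it extends by continuity to a unitary between the closed spans. The hypothesis on the dimensions of the orthogonal complements, which is automatic when $\mc H$ is finite dimensional, gives a unitary between the complements. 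The direct sum of the two is $U^\xi\in U(\mc H)$ with $U^\xi\ket{\xi_i}=\ket{\xi'_i}$. The same construction applied to $T=T'$ gives $U^\tau$.

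\textbf{Oracles.} Since $[K]=[K']$, there is $W\in U(\mc M)$ with $O_i'^\dagger O'_j=W O_i^\dagger O_jW^\dagger$ for all $i,j$. If $\mc I$ is empty there is nothing to do. Otherwise fix a base point $i_0\in\mc I$ and set $U^R\cq W$ and $U^L\cq O'_{i_0}WO_{i_0}^\dagger$, both unitary. Taking the kernel identity at $(i_0,j)$ and solving for $O'_j$,
\[
O'_j=O'_{i_0}WO_{i_0}^\dagger O_jW^\dagger=U^LO_jU^{R\dagger},
\]
which is the required relation for every $j$.

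The statement that the equivalence class is determined by and determines $(X,T,[K])$ is then a restatement of the two directions.

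The main obstacle is the state step when $\mc H$ is infinite dimensional, and it is handled by checking two points carefully. First, well-definedness of $V$ on finite linear combinations, which follows from the norm identity. Second, the extension to closures, followed by matching the orthogonal complements using the dimension hypothesis. Without that hypothesis the statement can fail: for example, the family could span all of $\mc H$ in one problem but only a proper subspace of infinite codimension in the other.
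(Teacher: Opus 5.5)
Your proof is correct and follows the paper's argument essentially step for step. For the states, you define the isometry $V$ on finite linear combinations, extend it by continuity to the closed spans, and complete it on the orthogonal complements using the dimension hypothesis. For the oracles, you fix a base point $i_0$ and set $U^L=O'_{i_0}U^RO_{i_0}^\dagger$. Your explicit forward-direction computation and your example showing that the dimension hypothesis is genuinely needed are both accurate.
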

The forward implication is immediate for $X$ and $T$ and follows from the definition of~\cref{Eq:EqKer} for $[K]$. The converse is more involved, so a full proof of this proposition is delayed to~\cref{App:ProofPropEqProb}. Note that the condition $\dim\overline{\mathrm{Span}}\{\ket{\xi_i}\}^\perp=\dim\overline{\mathrm{Span}}\{\ket{\xi_i'}\}^\perp$ is a technicality that has no practical consequence, as it is trivially satisfied for finite dimensional Hilbert spaces, and can always be achieved in the infinite-dimensional case by a simple modification of $P$ and $P'$: it suffices to embed the Hilbert space $\mc{H}$ into $\mc{H}\otimes \ell^2(\mathbb{N})$, and to replace states $\ket{\xi_i}$ by $\ket{\xi_i}\otimes\ket{0}$ where $\ket{0}$ is some default state in $\ell^2(\mathbb{N})$ (similarly for $\ket{\xi_i'}$, $\ket{\tau_i}$ and $\ket{\tau_i'}$).

\subsection{Quantum Query Complexity}

Having defined state conversion problems, we will define two models of computation to solve such problems: quantum query algorithms (in this section), as well as transducers (in the next sections). We will also define the corresponding notions of complexity, quantum query complexity and transduction complexity, and show that these notions are closely related.

\begin{definition}[Quantum Query Algorithm] A \emph{quantum query algorithm} $\mc A$, or in short, a query algorithm, for a state conversion problem $P$ is a sequence of unitary $\left\{U_k\right\}_{0\leq k\leq n}$ with $U_k\in U\left(\mc H\oplus\mc W\otimes \mc M\right)$ such that
\begin{equation}
    \ket{\tau_i}=\prod_{k=1}^n\left(U_k\left(\mathds{1}_{\mc{H}}\oplus\mathds{1}_{\mc W}\otimes O_i\right)\right)U_0\ket{\xi_i}\qquad\forall i\in \mc I
\end{equation}
    The query complexity of $\mc A$ is the number of oracle calls $n$.
\end{definition}

Given the definition of a state conversion problem and of a query algorithm; quantum query complexity, in short query complexity for this paper, is then defined as follows:

\begin{definition}[Exact query complexity]
    The exact query complexity of a state conversion problem $P\cong(X,T,[K])$, denoted $Q(P)$, is the minimum number of oracle calls a query algorithm needs to perform the conversion. If no exact algorithm exists, then $Q(P)=\infty$.
\end{definition}

Using the argument presented in~\cref{sec:ScpEq}, one can note that the query complexity is well defined on equivalence classes of problems, justifying the equivalence $P\cong(X,T,[K])$. In many cases, however, exact conversion is impossible or extremely expensive, and one therefore usually considers the bounded-error query complexity.
\begin{definition}[Bounded-error query complexity]
    The bounded-error query complexity of a state conversion problem $P\cong(X,T,[K])$ written $Q_\eps(P)$ is the minimum number of oracle calls a query algorithm needs to perform the conversion approximately, with error at most $\eps\geq0$
    \begin{equation}
        Q_\eps(P)=\inf_{\substack{\{\ket{\tau'_i}\in\mc H\}_{i\in\mc I}\\\norm{\ket{\tau_i'}-\ket{\tau_i}}\leq\eps~\forall i\\T'(i,j)=\braket{\tau'_i|\tau'_j}}}Q(X,T',[K]).
    \end{equation}
\end{definition}
This definition is hard to use directly when bounding the complexity, justifying the introduction of a quantity called the adversary bound~\cite{Amb02,hoyerNegativeWeightsMake2007a}, that is the objective value of an optimization program, that characterizes this complexity. We present here what is historically called the dual of the adversary bound~\cite{hoyerNegativeWeightsMake2007a,LMRSS11}, as in this paper we will mainly use this formulation since its feasible points can be converted into transducers~\cite{belovsTamingQuantumTime2024}.
\begin{definition}[Adversary bound]
    The adversary bound of a state-conversion problem $P$, written $\mathrm{Adv}(P)$, is the optimal value of the following optimisation program
    {\small\begin{equation}\label{Eq:DefAdv}
        \mathrm{Adv}(P)=\inf_{\ket{v_i}\in\mathcal{L}}\sup_{i\in \mc{I}} \norm{v_i}^2 \quad \text{s.t.}\quad\left({\bra{\xi_i}}\oplus\bra{v_i}(\mathds{1}_\mathcal{W}\otimes O_i^\dagger)\right) \left(\ket{\xi_j}\oplus(\mathds{1}_\mathcal{W}\otimes O_j)\ket{v_j}\right)=\left({\bra{\tau_i}}\oplus{\bra{v_i}}\right)\left(\ket{\tau_j}\oplus\ket{v_j}\right)\quad\forall i,j\in\mc{I},
    \end{equation}}
    where $\mathcal{L}=\mc W\otimes\mc M$, with the infimum also taken over the Hilbert space $\mc{W}$, or equivalently
    \begin{equation}
    \mathrm{Adv}(P)=\inf_{\ket{v_i}\in\mathcal{L}}\sup_{i\in \mc{I}} \norm{v_i}^2 \qquad \text{s.t.} \quad X(i,j)-T(i,j)=\braket{v_i|v_j}-\braket{v_i|\mathds{1}\otimes K(i,j)|v_j}\quad\forall i,j\in\mc{I}.
    \end{equation}
\end{definition}
Once again, the adversary bound is well defined on equivalence classes of problems, as $U^\xi, U^\tau$ and $U^L$ automatically cancel in~\cref{Eq:DefAdv} since they preserve the inner products, and $U^R$ can be absorbed by $\ket{v_i}\mapsto(\mathds{1}_{\mc W}\otimes U^R)\ket{v_i}$. Belovs and Yolcu~\cite{belovsOneWayTicketVegas2023} proved that the adversary bound is equal to the Las Vegas query complexity of a problem, that represents the expected number of queries for a stochastic algorithm. Similarly to query complexity, we define the bounded-error adversary bound:
\begin{definition}[Bounded-error adversary bound]
    The bounded-error adversary bound is the minimum over all approximate conversions  with error at most $\eps\geq0$
    \begin{equation}
    \mathrm{Adv}_\eps(P)=\inf_{\substack{\{\ket{\tau'_i}\in\mc H\}_{i\in\mc I}\\\norm{\ket{\tau_i'}-\ket{\tau_i}}\leq\eps~\forall i\\T'(i,j)=\braket{\tau'_i|\tau'_j}}}\mathrm{Adv}(X,T',[K]).
    \end{equation}
\end{definition}
Extending the state conversion bounds of~\cite{LMRSS11}, Belovs and Yolcu~\cite{Bel15var,belovsOneWayTicketVegas2023} proved that the adversary bound characterizes the quantum query complexity of a problem, up to a $1/\eps^2$ factor:
\begin{equation}
    \mathrm{Adv}_\eps(P)\leq Q_\eps(P)\leq O\left(\frac{\mathrm{Adv}_{0}(P)}{\eps^2}\right).
\end{equation}
\subsection{Transducers}\label{Subsec:PreTra}
Transducers belong to a line of work that emerges from the adversary bound~\cite{Amb02}, later expressed as a semidefinite program (SDP)~\cite{hoyerNegativeWeightsMake2007a}. Span programs~\cite{RS12} were then developed in an effort to convert feasible points of the dual of the adversary bound for functions into algorithms, a connection later shown to yield nearly tight algorithms~\cite{Rei09}; learning graphs~\cite{Bel12sp} brought a complementary constructive approach. The scope soon widened from Boolean functions to arbitrary state conversion~\cite{LMRSS11} and to general input oracles~\cite{Bel15var}. A parallel effort sought to account for time rather than queries alone, through electric networks, quantum walks and span programs~\cite{Bel13elec,JZ23,CJOP20}, culminating in a thrifty, but not exact, composition of quantum subroutines~\cite{Jef22}. More recently, the adversary bound has also been equated with the Las Vegas query complexity of the problem~\cite{belovsOneWayTicketVegas2023}; transducers~\cite{belovsTamingQuantumTime2024} unify the two threads, obtaining both thriftiness and exactness in time.

Formally, we define:
\begin{definition}[Transducers and catalysts]
    The transducer model of computation represents an algorithm by a pair $\mc A=(S,\ket{v})$ where $S\in U(\mc{H}\oplus\mc{L})$ is the \emph{transducer}, a unitary operator acting on the direct sum of $\HS$ and $\mathcal{L}$, the public and private spaces and $\ket{v}\in\mc{L}$ is the \emph{catalyst}. The transducer maps the input state $\ket{\xi}\in \HS$ onto the target state $\ket{\tau} \in \HS$, denoted $\ket\xi\overset S\rightsquigarrow\ket\tau$ with the \emph{help} of the catalyst. The catalyst is accessed through the direct-sum structure and is not modified by $S$. Explicitly, we have
\begin{equation}
    S: \HS \oplus \mathcal{L} \rightarrow \HS \oplus \mathcal{L}: \ket{\xi} \oplus \ket{v} \mapsto \ket{\tau} \oplus \ket{v}.
\end{equation}
The transduction complexity is then 
\begin{equation}
    W\left(\mc A\right)=\norm{v}^2.
\end{equation}
\end{definition}
\begin{figure}
   \begin{tikzpicture}

  \draw[rounded corners=5pt]
    (1, -1) rectangle (3, 1);
  \node[font=\large] at (2, 0) {$S$};

  \node (input_state)    at (0,  0.5) {$\ket{\xi}$};
  \node (input_catalyst) at (0, -0.5) {$\ket{v}$};
  \node (output_state)   at (4,  0.5) {$\ket{\tau}$};
  \node (output_catalyst) at (4, -0.5) {$\ket{v}$};

  \draw[|-, thick] (input_state)    -- (1,  0.5);
  \draw[|-, thick] (input_catalyst) -- (1, -0.5);
  \draw[->, thick] (3,  0.5) -- (output_state);
  \draw[->, thick] (3, -0.5) -- (output_catalyst);
    
\end{tikzpicture}
    \caption{Graphical representation of a transducer applied to some vector $\ket{\xi}\oplus \ket{v}$ to produce a vector $\ket{\tau}\oplus \ket{v}$.}
    \label{fig:transducer}
\end{figure}

In this manuscript, we will only consider transducers in the context of quantum query complexity, where one can focus on so-called \textit{canonical transducers} (see \cref{fig:canonical-transducer}). In this case, the transducers $\{S_i\}_{i\in\mc I}$ and the catalysts $\{\ket{v_i}\}_{i\in\mc I}$ are families indexed by the input set, and the transducer can be decomposed as the product of an input-independent unitary and an oracle call, acting only on a subspace of the private space. 

\begin{definition}[Canonical transducer]
     Let $\HS$ be the public space, $\mathcal{L} = \mathcal{W} \otimes \mathcal{M}$ the private space, with $\mc{M}$ the space where the oracle call is applied, and $\mc{W}$ an additional register. Given the problem $P$ defined over the input set $\mc{I}$, with $\ket{\xi_i},\ket{\tau_i}\in\mc{H}$ and $O_i\in U(\mc{M})$, a canonical transducer is the family of unitary operators $S_i\in U(\mc{H}\oplus\mc{L})$ such that 
     \begin{equation}
          S_i = S^\circ(\mathds{1}_\HS\oplus\mathds{1}_{\mathcal{W}} \otimes O_i),
     \end{equation}
     with an input-independent unitary operator $S^\circ\in U(\HS \oplus \mathcal{L})$ that realizes the conversion 
     \begin{equation}
         S_i (\ket{\xi_i}\oplus \ket{v_i})=\ket{\tau_i}\oplus\ket{v_i}\quad\forall i\in\mc I.
     \end{equation}
     Formally, we can represent a transducing algorithm for a problem $P$ by the pair $\mc{A}=(S^\circ,\{\ket{v_i}\}_{i\in\mc{I}})$. The transduction complexity for a specific input is defined as 
     \begin{equation}
         W_i(\mc A)=\norm{v_i}^2,
     \end{equation}
     and for the full input set as
     \begin{equation}
         W(\mc A)=\sup_i W_i(\mc A).
     \end{equation}
\end{definition}
In turn, wec can define the transduction complexity of a problem.
\begin{definition}{Transduction complexity of a state conversion problem}.
Let $P=(\mc{X}, \mc{T}, \mc{O})$ be a state conversion problem. The transduction complexity of $P$, denoted $W(P)$ is defined as the infimum over all transducing algorithms for $P$.
\end{definition}

Given a family of catalysts, by the same argument\footnote{Note that the argument here requires the same condition concerning $\dim\overline{\operatorname{Span}}\left(\ket{\xi_i}\oplus\left(\mathds{1}_{\mc{W}}\otimes O_i\right)\ket{v_i}\right)^\perp$ as discussed as the end of~\cref{sec:ScpEq}, which can be satisfied by extending $\mc{L}$ with an additional $\ell^2(\mathbb{N})$ register.} as in the proof of~\cref{Prop:EqProb}, there exists an input-independent unitary $S^\circ$ converting $\ket{\xi_i}\oplus(\mathds{1}_{\mc W}\otimes O_i)\ket{v_i}$ into $\ket{\tau_i}\oplus\ket{v_i}$ for all $i\in\mc I$ if and only if the corresponding Gram matrices match, that is, if
\begin{equation}
    \left(\bra{\xi_i}\oplus\bra{v_i}(\mathds{1}_{\mc W}\otimes O^\dagger_i)\right)\left(\ket{\xi_j}\oplus(\mathds{1}_{\mc W}\otimes O_j)\ket{v_j}\right)=\left(\bra{\tau_i}\oplus\bra{v_i}\right)\left(\ket{\tau_j}\oplus\ket{v_j}\right)\quad\forall i,j\in\mc I,
\end{equation}
which is precisely the condition of the adversary bound in~\cref{Eq:DefAdv}. This observation implies the following fundamental relation~\cite{belovsTamingQuantumTime2024}, connection quantum query complexity (via the adversary bound $\operatorname{Adv}(P)$) to transduction complexity:
\begin{proposition}
For any state conversion problem $P$, we have
\begin{equation}
    \operatorname{Adv}(P)=W(P).
\end{equation}
\end{proposition}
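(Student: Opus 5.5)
We prove $\operatorname{Adv}(P)=W(P)$ by showing that feasible points of the two optimisation problems are in objective-preserving correspondence. Both quantities are infima over families of catalysts $\{\ket{v_i}\}_{i\in\mc I}\subset\mc L=\mc W\otimes\mc M$ of the same objective $\sup_i\norm{v_i}^2$, with the infimum also taken over the work space $\mc W$. It therefore suffices to show that a family $\{\ket{v_i}\}$ is feasible for the adversary program~\eqref{Eq:DefAdv} if and only if there is an input-independent unitary $S^\circ$ making $(S^\circ,\{\ket{v_i}\})$ a canonical transducer for $P$.

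For the direction $\operatorname{Adv}(P)\le W(P)$, take any transducing algorithm $(S^\circ,\{\ket{v_i}\})$. Set $\ket{a_i}=\ket{\xi_i}\oplus(\mathds 1_{\mc W}\otimes O_i)\ket{v_i}$ and $\ket{b_i}=\ket{\tau_i}\oplus\ket{v_i}$. Then $S^\circ\ket{a_i}=\ket{b_i}$ for all $i$, and since $S^\circ$ is unitary, $\braket{a_i|a_j}=\braket{b_i|b_j}$ for all $i,j$. Expanding this identity gives exactly the constraint of~\eqref{Eq:DefAdv}. Hence $\{\ket{v_i}\}$ is feasible with the same objective value, and taking infima yields the inequality.

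For the converse $W(P)\le\operatorname{Adv}(P)$, take a feasible family $\{\ket{v_i}\}$ with value $\sup_i\norm{v_i}^2$. Feasibility says precisely that the Gram kernels of $\{\ket{a_i}\}$ and $\{\ket{b_i}\}$ coincide. We define a linear map on $\operatorname{Span}\{\ket{a_i}\}$ by $\sum_i c_i\ket{a_i}\mapsto\sum_i c_i\ket{b_i}$. Equality of Gram kernels makes this map well defined and isometric: the two images have the same norm, so an expression that vanishes in the $\ket{a_i}$ also vanishes in the $\ket{b_i}$. It extends by continuity to an isometry $V$ from $\overline{\operatorname{Span}}\{\ket{a_i}\}$ onto $\overline{\operatorname{Span}}\{\ket{b_i}\}$. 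To obtain a unitary $S^\circ$ on $\mc H\oplus\mc L$, we extend $V$ by any unitary between the orthogonal complements of these two closed spans. This is the same argument used for the converse of~\cref{Prop:EqProb} in~\cref{App:ProofPropEqProb}, which we reuse here.

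The main obstacle is that this extension requires the two orthogonal complements to have equal dimension. In finite dimension this is automatic, because $V$ matches the dimensions of the two spans. In infinite dimension it can fail, and we handle it as in the footnote. We replace $\mc W$ by $\mc W\otimes\ell^2(\mathbb N)$ and each catalyst $\ket{v_i}$ by $\ket{v_i}\otimes\ket 0$, and likewise pad $\mc H$ with an $\ell^2(\mathbb N)$ register. This changes neither the norms $\norm{v_i}$ nor the Gram kernels. Both complements now contain a copy of $\ell^2(\mathbb N)$ and are infinite dimensional. Being subspaces of the same separable space, they therefore have the same Hilbert dimension $\aleph_0$; if separability is not assumed, we pad with a large enough $\ell^2$ space instead. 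This yields a canonical transducer $S_i=S^\circ(\mathds 1_{\mc H}\oplus\mathds 1_{\mc W}\otimes O_i)$ with complexity $\sup_i\norm{v_i}^2$. Taking the infimum over feasible families gives $W(P)\le\operatorname{Adv}(P)$, which completes the equality.
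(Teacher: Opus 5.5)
Your proof is correct and is essentially the paper's argument: equality of the Gram kernels of $\ket{\xi_i}\oplus(\mathds{1}_{\mc W}\otimes O_i)\ket{v_i}$ and $\ket{\tau_i}\oplus\ket{v_i}$ is exactly the adversary constraint, and the isometry-extension argument from the proof of \cref{Prop:EqProb}, with an $\ell^2(\mathbb N)$ padding to equalize the orthogonal complements, yields $S^\circ$. One minor correction: pad only the work space $\mc W$, as the paper's footnote does. That alone already makes both complements infinite-dimensional, whereas padding $\mc H$ changes the public space and hence the problem $P$ itself.
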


\begin{figure}[h]
    \begin{tikzpicture}


\tikzset{
    box/.style={draw, rounded corners=5pt},
    state/.style={thick},
}

\node at (-0.4,0.5) {$\ket{\xi_i}$};
\node at (-0.4,-0.5) {$\ket{v_i}$};
\node at (6.9,0.5) {$\ket{\tau_i}$};
\node at (6.9,-0.5) {$\ket{v_i}$};

\node[box, minimum width=1.5cm, minimum height=1cm]
    at (1.75,-0.5) {$\mathds{1}_{\mathcal{W}}\otimes O_i$};

\node[box, minimum width=2cm, minimum height=2cm]
    at (4.5,0) {$S^\circ$};

\draw[|-, state] (0,-0.5) -- (1,-0.5);

\draw[|-, state] (0,0.5) -- (3.5,0.5);
\draw[state] (2.5,-0.5) -- (3.5,-0.5);

\draw[->,state] (5.5,0.5) -- (6.5,0.5);
\draw[->,state] (5.5,-0.5) -- (6.5,-0.5);

\end{tikzpicture}

    \caption{\centering
        Graphical representation of a canonical transducer applied to some vectors $\ket{\xi_i} \oplus \ket{v_i}$ to produce a vector $\ket{\tau_i} \oplus \ket{v_i}$ regarding a certain input $i\in \mathcal{I}$.\label{fig:canonical-transducer}
}
    
\end{figure}

\subsection{Elements of Representation Theory}\label{sec:PreSym}
Symmetries play a major role in our construction, mostly through representation theory: they constrain the form of optimal catalysts and offer a block diagonalization of the transducer. This section presents the notations and standard results we need in the form in which we use them, without proof. For further references, we direct readers to~\cite{serreLinearRepresentationsFinite1977} for the informations concerning finite group,~\cite{brockerRepresentationsCompactLie1985} for Lie groups,~\cite{simonRepresentationsFiniteCompact1995} specifically for the questions concerning infinite dimensional representations and multiplicity spaces and~\cite{follandCourseAbstractHarmonic1994} for the construction of the Group Hilbert space.

Throughout this paper, $\mc{G}$ represents either a finite group or a compact Lie group, and all spaces are complex Hilbert spaces. 
\paragraph{Representations}
Representations map group elements and their action on a vector space. Formally, we first need to define the map between groups, called homomorphisms, and then specify them for representation.
\begin{definition}[Group homomorphism] A group homomorphism from a group $\mc{G}$ equipped with a composition law $\circ$, to a group $\mc{H}$ equipped with $*$, is a function $\phi:\mc{G}\mapsto\mc{H}$ that preserves the composition
\begin{equation}
    \phi(g\circ g')=\phi(g)*\phi(g')\qquad\forall g,g'\in\mc{G}.
\end{equation}
We denote $\operatorname{Hom}(\mc G,\mc H)$ the set of all homomorphisms from $\mc G$ to $\mc H$.
\end{definition}
Then we can define a representation the following way:
\begin{definition}[Representation of $\mc G$]
    A representation of $\mc{G}$ on a Hilbert space $V$  is a group homomorphism
    \begin{equation}
        \phi : \mc{G} \rightarrow {GL}(V)
    \end{equation}
    where $GL(V)$ is the group of bounded endomorphisms of $V$, invertible under composition. For finite Hilbert spaces, $GL(V)$ is simply the general linear group. $V$ is called the representation space. We denote $\operatorname{Hom}(\mc G,GL(V))$ the set of all representations.
\end{definition}
For instance, the simplest representation is the \emph{trivial representation} is the homomorphism $\phi:g\mapsto\phi_g=\mathds{1}_V~\forall g\in\mc G$. Representations $\phi$ and $\phi'$ acting respectively on $V$ and $V'$ can be combined in various ways, and two of the most prominent, we will use extensively in this paper are:
\begin{itemize}
    \item the \emph{direct sum} $(\phi\oplus\phi')_g=\phi_g\oplus\phi'_g$ on $V\oplus V'$,
    \item the \emph{tensor product} $(\phi\otimes\phi')_g=\phi_g\otimes\phi'_g$ on $V\otimes V'$.
\end{itemize}
In this manuscript, we will only consider unitary representations, that is, homomorphism not to the whole $GL(V)$ group, but only to the unitary subgroup $U(V)$. 
Such representations then belong to $\operatorname{Hom}(\mc G,U(V))$. Further, we will also impose that the representations are (strongly) continuous for compact Lie groups, as this assumption will be required for the Haar averaging in~\cref{proof:SymmetryCat} and for complete reducibility~\cref{Th:CompletRed}.
\paragraph{Square summable function over $\mc G$ and regular representation}
In~\cref{proof:SymmetryCat} we will introduce $\mc{H}_{\mc{G}}\cq L^2(\mc G,dg)$, the vector spaces of square summable functions over the group, or in short, the group Hilbert space, with $dg$ a fixed Haar measure on $\mc{G}$, with total weight $\mu(\mc{G})$. We denote by $\{\ket{g}:g\in\mathcal G\}$ the canonical group-element basis, satisfying
\begin{equation}\label{eq:canGroupBasis}
    \braket{g|h}=\delta_{\mc{G}}(g^{-1}h),
\end{equation}
where $\delta_{\mc{G}}$ is the Dirac delta on the group topology\footnote{In short for any function $f$ on $\mc{G}$, $\int_{\mc{G}}f(g)\delta_{\mc G}(g)dg=f(e)$ where $e$ denotes the identity.}.
In that context the left regular representation $L_g$ is a unitary representation acting as the left translation/composition
\begin{equation}
    L_gf(h)\cq f(g\circ h),\qquad L_g\ket{h}=\ket{g\circ h}.
\end{equation}
We could have similarly defined the right regular representation.
For finite groups, we consider $\int_{\mc G}dg$ as the sum $\sum_{\mc G}$ with $\mu(\mc G)=\abs{G}$ and $\delta_{\mc{G}}$ is replaced by the Kronecker delta.
\paragraph{Intertwiners} 
Intertwiners are linear maps commuting with the group action. They are a central part of this manuscript as in~\cref{theorem:symmetryTrans} we will prove that the input-independent unitary is an intertwiner.
\begin{definition}[Intertwiner and equivalence]
    Let $\phi^V$ and $\phi^W$ be two representations of $\mc{G}$ acting respectively on $V$ and $W$. A bounded linear map $S:V\mapsto W$ is an \emph{intertwiner} if
    \begin{equation}
        S\phi^V_g=\phi^W_g S\qquad\forall g\in\mc{G}.
    \end{equation}
    An invertible intertwiner is an \emph{equivalence}. If such an equivalence exists, $\phi^V$ and $\phi^W$ are \emph{equivalent}, written $\phi^V\cong\phi^W$, and by extension $V$ and $W$ are also equivalent, written $V\cong_{\mc{G}} W$.
\end{definition}
Shortly, two equivalent representations describe the same group action viewed from two different bases. If two unitary representations are equivalent, then there always exists a unitary equivalence.
\paragraph{Irreducible Representations and Schur's Lemma}
\begin{definition}[$\mc{G}$-invariant subspace, and irreducibility]
Let $\phi$ be a representation of $\mc{G}$ acting on $V$.
A closed subspace $W$ of $V$ is \emph{$\mc{G}$-invariant} if
\begin{equation}
    \phi_gW\subseteq W\qquad\forall g\in\mc{G}.
\end{equation}
The map $\phi|_W:\mc{G}\mapsto GL(W)$ is then a representation on $W$ and called a \emph{subrepresentation} of $\phi$. The representation $\phi$, and by extension $V$, are irreducible if $V\neq\{\vec0\}$ and its only $\mc{G}$-invariant are $\{\vec0\}$ and $V$ itself.
\end{definition}
When considering unitary representation, the orthogonal complement of an invariant subspace is necessary invariant as well. One of the most important results in representation theory is Schur's lemma. It can be stated in terms of intertwiners between irreducible representations.
\begin{lemma}[Schur's lemma]\label{lem:Schur}
    If $S$ is an intertwiner between two irreducible representations $\phi$ and $\phi'$ then
    \begin{itemize}
        \item either $S=0$, or $S$ is an equivalence,
        \item if, moreover, $\phi=\phi'$, then $S=c\mathds{1}$ for some complex number $c$.
    \end{itemize}
\end{lemma}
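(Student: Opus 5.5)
The plan follows the standard route through kernels and eigenspaces. Let $S:V\to W$ be an intertwiner, so $S\phi_g=\phi'_g S$ for all $g\in\mc G$, with $\phi$ and $\phi'$ irreducible.

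\emph{First claim.} I would show that $\ker S\subseteq V$ and the closure of $\operatorname{Im}S\subseteq W$ are $\mc G$-invariant closed subspaces.
\begin{itemize}
\item If $Sv=0$, then $S\phi_g v=\phi'_g Sv=0$, so $\ker S$ is invariant. It is closed because $S$ is bounded.
\item We have $\phi'_g Sv=S\phi_g v\in\operatorname{Im}S$. By continuity of $\phi'_g$, the closure $\overline{\operatorname{Im}S}$ is invariant as well.
\end{itemize}
Irreducibility then leaves only two options. Either $\ker S=V$, so $S=0$. Or $\ker S=\{0\}$, in which case $\overline{\operatorname{Im}S}\neq\{0\}$ and hence $\overline{\operatorname{Im}S}=W$.

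Thus a nonzero $S$ is injective with dense range. In finite dimension this already gives invertibility, so $S$ is an equivalence. In the infinite-dimensional setting allowed by the paper, I would use that the representations are unitary:
\begin{itemize}
\item Taking adjoints, $S^\dagger\phi'_g=\phi_g S^\dagger$, so $S^\dagger S$ is a positive self-adjoint intertwiner of $\phi$ with itself.
\item Every spectral projection of $S^\dagger S$ then commutes with $\phi$. Its range is therefore an invariant closed subspace, hence $\{0\}$ or $V$.
\item So the spectrum of $S^\dagger S$ is a single point, giving $S^\dagger S=c\mathds 1$ with $c>0$ since $S\neq0$.
\item The same argument gives $SS^\dagger=c'\mathds 1$ on $W$, and $c'=c$ because $SS^\dagger S=cS$.
\end{itemize}
Hence $c^{-1/2}S$ is unitary and $S$ is an equivalence. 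This also recovers the remark that equivalent unitary representations are unitarily equivalent.

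\emph{Second claim.} Now take $\phi=\phi'$. In finite dimension, pick an eigenvalue $c$ of $S$, which exists over $\mathbb C$. Then $S-c\mathds1$ is an intertwiner with nontrivial kernel, so by the first claim it is not an equivalence and must vanish. In infinite dimension, I would split $S=A+iB$ with $A=(S+S^\dagger)/2$ and $B=(S-S^\dagger)/(2i)$. Both are self-adjoint intertwiners, since $S^\dagger$ intertwines by unitarity. The spectral-projection argument above shows each is a scalar, so $S=c\mathds1$.

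The main obstacle is the infinite-dimensional case, where eigenvalues may not exist and dense range does not imply surjectivity. The spectral theorem for bounded self-adjoint operators, together with the fact that spectral projections of an operator commuting with every $\phi_g$ also commute with every $\phi_g$, is what closes both gaps.
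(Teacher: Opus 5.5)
Your proof is correct. The paper itself does not prove this lemma: it lists it in the preliminaries among standard facts stated ``without proof'', with pointers to Serre, Br\"ocker--tom Dieck and Simon. So there is no paper argument to compare against, and what you give is the standard textbook proof.

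Two remarks on how your proof sits relative to the paper.

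First, in the paper's setting ($\mc G$ finite or a compact Lie group, with strongly continuous unitary representations), every irreducible representation is finite-dimensional. The paper notes this itself just after the lemma. Your kernel/image argument plus the existence of an eigenvalue therefore already covers every case the paper uses. The spectral-theorem branch is a correct extension that makes the lemma hold for unitary irreducible representations of arbitrary groups. It is exactly there that unitarity is indispensable, since unitarity is what makes $S^\dagger$ (and hence $S^\dagger S$, $SS^\dagger$, $S+S^\dagger$, $S-S^\dagger$) an intertwiner.

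Second, your aside that the argument ``recovers'' the fact that equivalent unitary representations are unitarily equivalent only does so for irreducible ones. For the general statement in the paper, take the polar decomposition $S=U(S^\dagger S)^{1/2}$ of an invertible intertwiner. The factor $(S^\dagger S)^{1/2}$ commutes with every $\phi_g$, so $U=S(S^\dagger S)^{-1/2}$ is a unitary intertwiner.

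Neither point affects the validity of your proof of the lemma.
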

\paragraph{Isotypic Decomposition}
Irreducible representations of compact groups are necessarily finite dimensional. We can classify all irreducible representations in equivalence classes denoted by a label $\lambda$, then for each equivalence class we can choose an arbitrary representation $\phi^\lambda$, acting on a representation space $V_\lambda$. It is then possible to decompose any representation into irreducible representations in the following way:
\begin{theorem}[Complete reducibility and irreducible decomposition]\label{Th:CompletRed}
Let $\phi$ be a representation\footnote{We require this representation to be strongly continuous for Lie groups.} on $V$. There exists a unique decomposition 
\begin{equation}
    V=\bigoplus_\lambda V^{(\lambda)}\qquad \text{with}\qquad V^{(\lambda)}\cong_{\mc{G}} V_\lambda\otimes \mathbb{C}^{m_\lambda}.
\end{equation}
such that each $V_\lambda$ is an irreducible representation.
Shortly, $V^{(\lambda)}$, the closure of all invariant subspaces $W\cong_{\mc{G}} V_\lambda$ with $\phi|_W\cong\phi^\lambda$, is called the \emph{isotypic component}, and  $m_\lambda$ is the multiplicity of the corresponding irreducible representation $\phi^\lambda$. Both are uniquely determined by $\phi$. Globally we have that
\begin{equation}
    \phi\cong\bigoplus_\lambda\phi^\lambda\otimes \mathds{1}_{m_\lambda}.
\end{equation}
As a consequence, two representations are equivalent if they have the same multiplicities for each irreducible subrepresentation. When considering finite dimensional representations, this theorem is usually called complete reducibility.
\end{theorem}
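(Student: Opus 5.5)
The plan is to build the decomposition in three stages: first show that every nonzero representation contains a finite-dimensional irreducible subrepresentation, then exhaust $V$ by an orthogonal family of irreducibles, and finally group the pieces by equivalence class and prove uniqueness with Schur's lemma (\cref{lem:Schur}). Since all our representations are unitary, the orthogonal complement of a $\mc G$-invariant closed subspace is again invariant. This gives the splitting $V=W\oplus W^\perp$ at every step without any averaging of inner products. For a non-unitary $\phi$ on a finite-dimensional space, I would first replace the inner product by its Haar average $\langle x,y\rangle'=\int_{\mc G}\langle\phi_g x,\phi_g y\rangle\,dg$, which makes $\phi$ unitary.

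\textbf{Step 1: existence of an irreducible subrepresentation.} If $\dim V<\infty$, I would take a nonzero invariant subspace of minimal dimension; it is irreducible. In the infinite-dimensional case, which the paper needs for $L^2(\mc G)$ and $\ell^2(\mathbb N)$ registers, I would pick a nonzero finite-rank positive operator $A$ and average it to $\tilde A=\int_{\mc G}\phi_g A\phi_g^\dagger\,dg$. Strong continuity of $\phi$ and compactness of $\mc G$ make this integral converge to a compact positive operator. Choosing $A=\proj{x}$ with $x\neq 0$ gives $\braket{x|\tilde A|x}>0$, so $\tilde A\neq 0$, and $\tilde A$ commutes with all $\phi_g$. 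By the spectral theorem for compact self-adjoint operators, any eigenspace of $\tilde A$ for a nonzero eigenvalue is finite dimensional and $\mc G$-invariant. The finite-dimensional argument then yields an irreducible subspace inside it. I expect this step to be the main obstacle: it is exactly the Peter--Weyl input, and it is where strong continuity is genuinely used, both to make $\tilde A$ well defined and to ensure it is compact.

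\textbf{Step 2: exhaustion.} I would apply Zorn's lemma to families of mutually orthogonal irreducible invariant subspaces, ordered by inclusion. Take a maximal family $\{W_\alpha\}$. If $\overline{\bigoplus_\alpha W_\alpha}\neq V$, its orthogonal complement is a nonzero invariant subspace. Step 1 then gives an irreducible subspace in that complement, contradicting maximality. Hence $V=\overline{\bigoplus_\alpha W_\alpha}$.

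\textbf{Step 3: isotypic components and uniqueness.} For each label $\lambda$, I would set $V^{(\lambda)}$ to be the closed span of all $W_\alpha$ with $\phi|_{W_\alpha}\cong\phi^\lambda$. Choosing unitary equivalences $W_\alpha\to V_\lambda$ gives $V^{(\lambda)}\cong_{\mc G}V_\lambda\otimes\mathbb C^{m_\lambda}$, where $m_\lambda$ is the (possibly infinite) cardinality of that subfamily. To show this is intrinsic, let $W\subseteq V$ be any irreducible invariant subspace with $\phi|_W\cong\phi^\lambda$. For each $\alpha$, the composition of the orthogonal projection onto $W_\alpha$ with the inclusion of $W$ is an intertwiner $W\to W_\alpha$. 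By Schur's lemma it vanishes unless $W_\alpha\cong V_\lambda$. Hence $W\subseteq V^{(\lambda)}$, so $V^{(\lambda)}$ equals the closure of all such $W$, independent of the choices made in Step 2.

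For the multiplicities, I would identify $\mathbb C^{m_\lambda}$ with the space of intertwiners $\operatorname{Hom}_{\mc G}(V_\lambda,V)$. By Schur's lemma this space has Hilbert dimension $m_\lambda$, so $m_\lambda$ depends only on $\phi$. The final claim, that two representations are equivalent if and only if they have the same multiplicities, then follows. If the multiplicities agree, the blockwise unitary equivalences assemble into a global one. Conversely, an equivalence preserves the dimensions of the spaces $\operatorname{Hom}_{\mc G}(V_\lambda,\cdot)$.
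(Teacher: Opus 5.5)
The paper does not prove this theorem. It is listed among the background results stated ``without proof'', with pointers to Serre, Br\"ocker--tom Dieck, Simon and Folland, so the only comparison is with those references.

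Your argument is correct and is the standard one found there, in essentially Folland's form:
\begin{enumerate}
\item average $\proj{x}$ into a nonzero compact positive intertwiner $\tilde A$;
\item take the finite-dimensional invariant eigenspace of a nonzero eigenvalue;
\item exhaust $V$ by Zorn's lemma;
\item use Schur's lemma to make the isotypic components and multiplicities intrinsic.
\end{enumerate}

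Writing it out gains three things over the paper's bare citation. It shows exactly where the footnoted strong-continuity hypothesis enters: norm continuity of $g\mapsto\proj{\phi_g x}$, hence compactness of $\tilde A$. It covers the infinite-multiplicity case the paper actually relies on for its $L^2(\mc G)$ and $\ell^2(\mathbb N)$ registers. And Step 1 applied to an irreducible $W$ also proves the paper's preceding assertion that irreducibles of compact groups are finite dimensional.

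Two routine points deserve a line each.
\begin{itemize}
\item Positivity: $\braket{x|\tilde A|x}=\int_{\mc G}\abs{\braket{x|\phi_g x}}^2dg>0$. The integrand is continuous and equals $\norm{x}^4$ at $g=e$, and nonempty open sets have positive Haar measure.
\item The Hilbert structure on $\operatorname{Hom}_{\mc G}(V_\lambda,V)$ should be made explicit. For instance, define it by $S^\dagger T=\langle S,T\rangle\mathds{1}_{V_\lambda}$, which Schur's lemma allows. Let $U_\alpha:W_\alpha\to V_\lambda$ be your chosen unitaries and define $c_\alpha$ by $P_\alpha T=c_\alpha U_\alpha^\dagger$. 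Then $T\mapsto(c_\alpha)_\alpha$ is a unitary onto $\ell^2$, which justifies the claimed Hilbert dimension $m_\lambda$.
\end{itemize}
Alternatively, since you have already shown $V^{(\lambda)}$ is intrinsic, $m_\lambda$ is determined by $\dim V^{(\lambda)}=m_\lambda\dim V_\lambda$ with $\dim V_\lambda<\infty$. That shortcut avoids the Hom space altogether.
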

When considering infinite-dimensional vector spaces, $m_\lambda$ can become infinite, and then one should replace $\mathbb{C}^{m_\lambda}$ by $\ell^2(\mathbb{N})$, and $\mathds{1}_{m_\lambda}$ by $\mathds{1}_{\ell^2(\mathbb N)}$. We can combine the isotypic decomposition with Schur's lemma on each block to derive a corollary that we will use to derive~\cref{Corr:Niiiice}.
\begin{corollary}[Structure of intertwiners]\label{Corr:StrucIntert}
    Given two representations $\phi\cong\bigoplus_\lambda \phi_\lambda\otimes\mathds{1}_{m_\lambda}$ and $\phi\cong\bigoplus_\lambda \phi_\lambda\otimes\mathds{1}_{m'_\lambda}$ acting respectively on $V$ and $V'$, any intertwiner $S:V\to V'$ has the following structure in the isotypic decomposition:
    \begin{equation}
        S=\bigoplus_\lambda \mathds{1}_{V_\lambda}\otimes S_\lambda\qquad S_\lambda:\mathbb{C}^{m_\lambda}\to\mathbb{C}^{m'_\lambda}.
    \end{equation}
    $S$ is an equivalence if and only if $\phi\cong\phi'$ (hence $m_\lambda=n_\lambda$), and if every $S_\lambda$ is invertible. Naturally, $S$ is unitary if and only if every $S_\lambda$ is a unitary.
\end{corollary}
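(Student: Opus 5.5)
Let $P_\lambda$ and $P'_\lambda$ denote the orthogonal projections onto the isotypic components $V^{(\lambda)}$ and $V'^{(\lambda)}$ given by \cref{Th:CompletRed}, and fix unitary equivalences $V^{(\lambda)}\cong_{\mc G} V_\lambda\otimes\mathbb{C}^{m_\lambda}$ and $V'^{(\lambda)}\cong_{\mc G} V_\lambda\otimes\mathbb{C}^{m'_\lambda}$. The argument proceeds in three steps.

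\textbf{Step 1: $S$ preserves isotypic types.} For each pair of labels $(\lambda,\mu)$, consider the block $S_{\mu\lambda}\cq P'_\mu S P_\lambda$. Since $\phi$ and $\phi'$ preserve their isotypic components, the projections commute with the group action, so $S_{\mu\lambda}$ is itself an intertwiner from $V^{(\lambda)}$ to $V'^{(\mu)}$.

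Pick orthonormal bases $\{e_a\}$ of $\mathbb{C}^{m_\lambda}$ and $\{f_b\}$ of $\mathbb{C}^{m'_\mu}$. Let $\iota_a: V_\lambda\to V_\lambda\otimes\mathbb{C}^{m_\lambda}$ be the isometric embedding $x\mapsto x\otimes e_a$, which is an intertwiner. Let $\pi_b$ be the adjoint of the analogous embedding $y\mapsto y\otimes f_b$, which is also an intertwiner. The composition $\pi_b S_{\mu\lambda}\iota_a$ is then an intertwiner $V_\lambda\to V_\mu$ between irreducible representations. By \cref{lem:Schur}, it vanishes whenever $\lambda\neq\mu$, because inequivalent irreducibles admit no equivalence. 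Since these matrix elements determine $S_{\mu\lambda}$, we get $S_{\mu\lambda}=0$ for $\lambda\neq\mu$. Hence $S=\bigoplus_\lambda S_{\lambda\lambda}$.

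\textbf{Step 2: structure of each diagonal block.} For $\lambda=\mu$, both sides use the same chosen representative $\phi^\lambda$. The second part of \cref{lem:Schur} then gives $\pi_b S_{\lambda\lambda}\iota_a=c_{ba}\mathds{1}_{V_\lambda}$ for some $c_{ba}\in\mathbb{C}$. Collecting these coefficients into a matrix $S_\lambda\cq(c_{ba})$ yields $S_{\lambda\lambda}=\mathds{1}_{V_\lambda}\otimes S_\lambda$. Boundedness of $S$ implies $|c_{ba}|\leq\norm{S}$ and, more strongly, that $S_\lambda$ is bounded as an operator between $\ell^2$ spaces.

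\textbf{Step 3: equivalence and unitarity.} From the block form, $S$ is invertible if and only if every $\mathds{1}\otimes S_\lambda$ is invertible, which is equivalent to every $S_\lambda$ being invertible. In the infinite case one also needs uniform boundedness of the inverses, which holds automatically when $S^{-1}$ is bounded. Invertibility of $S_\lambda$ forces $m_\lambda=m'_\lambda$ for all $\lambda$, and by the uniqueness part of \cref{Th:CompletRed} this is the same as $\phi\cong\phi'$. Conversely, if $\phi\cong\phi'$, an equivalence exists by definition. Finally, $S^\dagger S=\bigoplus_\lambda\mathds{1}\otimes S_\lambda^\dagger S_\lambda$, and likewise for $SS^\dagger$, so $S$ is unitary if and only if each $S_\lambda$ is unitary.

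\textbf{Main obstacle.} The technical care lies in infinite multiplicities, where $\mathbb{C}^{m_\lambda}$ must be replaced by $\ell^2(\mathbb{N})$. There one has to check that the matrix $(c_{ba})$ defines a bounded operator and that the direct sum converges. I would handle this by writing $S_\lambda=(\bra{\psi}\otimes\mathds{1})S_{\lambda\lambda}(\ket{\psi}\otimes\mathds{1})$ for a fixed unit vector $\psi\in V_\lambda$. This is manifestly a bounded operator of norm at most $\norm{S}$, and it avoids handling the matrix entries individually.
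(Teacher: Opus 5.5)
Your proposal is correct and follows the route the paper indicates. The paper gives no detailed proof, only the remark that the corollary follows by applying Schur's lemma blockwise in the isotypic decomposition; your Steps 1--2 carry this out, and your use of a fixed unit vector $\psi\in V_\lambda$ to bound $S_\lambda$ handles infinite multiplicities cleanly. Your caveat in Step 3 is also right: when infinitely many labels $\lambda$ occur, the converse direction of the equivalence claim genuinely needs $\sup_\lambda\norm{S_\lambda^{-1}}<\infty$, a condition the statement as written omits.
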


\section{Symmetric Transducers}\label{Sec:symTra}
In this manuscript we will focus on how the symmetry properties of a problem help in finding optimal transducers. To that end, we first need to formally define the symmetry group of the problem:
\begin{definition}[Symmetry group of $P$]\label{def:symmetry-group}
The \emph{symmetry group $\mc{G}$ of the problem $P$} is the group of automorphisms of $(\mc{X},\mc{T},\mc{O})$. Explicitly, it is the maximal subgroup of bijections $g$ in $\mathcal{I}$ such that there exists four unitary representations $\phi^\xi,\phi^\tau\in\mathrm{Hom}(\mc{G},U(\mc{H}))$ and $\phi^R,\phi^L\in\mathrm{Hom}(\mc{G},U(\mc{M}))$ satisfying
\begin{equation}\label{eq:DefRepProb}
\begin{cases}
    \ket{\xi_{g(i)}}=\phi^\xi_g\ket{\xi_i}\\
    \ket{\tau_{g(i)}}=\phi^\tau_g\ket{\tau_i}\\
    O_{g(i)}=\phi^L_gO_i\phi_g^{R\dagger}\\
\end{cases}\forall i\in \mathcal{I}.
\end{equation}
This condition implies
\begin{equation}
    \begin{cases}
    X(g(i),g(j))=X(i,j)\\
    T(g(i),g(j))=T(i,j)\\
    K(g(i),g(j))=\phi^R_gK(i,j)\phi_g^{R\dagger}
    \end{cases}\forall i,j\in \mathcal{I},
\end{equation}
which means that the symmetry group is a subgroup of the automorphisms that leave the triplet $(X,T,[K])$ invariant.
\begin{equation}
    \begin{cases}
        X\circ(g\times g)=X\\
        T\circ(g\times g)=T\\
        [K\circ(g\times g)]=[K]\\
    \end{cases}
\end{equation}
\end{definition}

\subsection{Symmetries of the Catalyst}\label{sec:symCat}
In this section, we consider a fixed state conversion problem $P$ with symmetry group $\mc{G}$, which by \cref{def:symmetry-group} implies the existence of unitary representations $\phi^\xi,\phi^\tau\in\mathrm{Hom}(\mc{G},U(\mc{H}))$ and $\phi^R,\phi^L\in\mathrm{Hom}(\mc{G},U(\mc{M}))$ satisfying \cref{eq:DefRepProb}. In what follows, we consider $\phi^\xi,\phi^\tau,\phi^R$ and $\phi^L$ to be fixed as well.

We can expect that the symmetry group $\mc{G}$ of $P$ induces symmetry properties for the transducing algorithms for $P$. We will prove that the catalysts satisfy a property that we call weak covariance.
\begin{definition}[Weak covariance]\label{def:weakCov}
    A family of catalysts $\{\ket{v_i}\}_{i\in\mc{I}}$ is weakly covariant if there exists a unitary representation $\phi^{\mc{W}}$ of $\mc{G}$ in $\mc{W}$ , that is strongly continuous if $\mc G$ is a compact Lie group,such that
    \begin{equation}
        \ket{v_{g(i)}}=\phi^{\mc{L}}_g\ket{v_i}\qquad \forall i\in\mc{I},~\forall g\in\mc{G},\qquad\text{with}\qquad\phi^{\mc{L}}_g=\phi^{\mc{W}}_g\otimes\phi^R_g.
    \end{equation}
    By extension, a transducing algorithm $\mc A=(S^\circ,\{\ket{v_i}\}_{i\in\mc{I}})$ is weakly covariant if the family of catalysts is weakly covariant.
\end{definition}
In many cases, we will require a stronger condition on the catalyst to easily find explicit solutions, which we call strong covariance:
\begin{definition}[Strong covariance]\label{def:strongGov}
 A family of catalysts $\{\ket{v_i}\}_{i\in\mc{I}}$ is strongly covariant if
    \begin{equation}
        \ket{v_{g(i)}}=\phi^{\mc{L}}_g\ket{v_i}\qquad \forall i\in\mc{I},~\forall g\in\mc{G},\qquad\text{with}\qquad\phi^{\mc{L}}_g=\mathds{1}_{\mc{W}}\otimes\phi^R_g.
    \end{equation}
    By extension, a transducing algorithm $\mc{A}=(S^\circ,\{\ket{v_i}\}_{i\in\mc{I}})$ is strongly covariant if the family of catalysts is strongly covariant.
\end{definition}

However, we can find examples of problems $P$ where the optimal transducing algorithm is not strongly covariant, or that do not admit (even suboptimal) strongly covariant transducing algorithms.

The first main theorem of this manuscript states that for any algorithm $\mc A$, there exists a weakly covariant algorithm $\tilde{\mc A}$ with a lower or equal transduction complexity. This new algorithm is obtained by symmetrization of the catalyst. For each group elements, we can define a rotated family of catalysts with the same transduction complexity as the original, then build a symmetrized catalyst as a controlled uniform superposition over all group elements. Then, the bound on average implies that this new family of catalysts, and thus the corresponding algorithm, has a lower or equal complexity. A formal version of the proof is given in~\cref{proof:SymmetryCat}.
\begin{theorem}\label{theorem:symmetryCat}
Let $P$ be a problem with symmetry group $\mc{G}$. For any transducing algorithm $\mc A=(S^\circ,\{\ket{v_i}\})$ solving $P$, there exists a weakly covariant algorithm $\tilde{A}=(\tilde{S}^\circ,\{\ket{\tilde{v}_i}\})$ with lower or equal transduction complexity:
\begin{equation}
    W(\tilde{\mc A})\leq W(\mc A).
\end{equation}
\end{theorem}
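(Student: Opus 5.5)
The plan is to symmetrize the given catalysts over the group and then rebuild the input-independent unitary from the Gram-matrix characterization, i.e.\ the equivalence $\operatorname{Adv}(P)=W(P)$. Concretely, starting from $\mc A=(S^\circ,\{\ket{v_i}\})$, we need a new family $\{\ket{\tilde v_i}\}$ with three properties. It must satisfy the adversary constraint
\begin{equation*}
X(i,j)-T(i,j)=\braket{\tilde v_i|\tilde v_j}-\braket{\tilde v_i|\mathds{1}\otimes K(i,j)|\tilde v_j}.
\end{equation*}
It must be weakly covariant, and it must obey $\sup_i\norm{\tilde v_i}^2\le\sup_i\norm{v_i}^2$. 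Once we have these, the same argument as in \cref{Prop:EqProb} yields an input-independent $\tilde S^\circ$, after padding $\mc L$ with an $\ell^2(\mathbb N)$ register if needed to satisfy the dimension condition. That gives $\tilde{\mc A}$.

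\textbf{Step 1: rotated catalysts.} For each $g\in\mc G$, define $\ket{v^g_i}\cq(\mathds{1}_{\mc W}\otimes\phi^{R\dagger}_g)\ket{v_{g(i)}}$. Using the symmetry relations $X(g(i),g(j))=X(i,j)$, $T(g(i),g(j))=T(i,j)$ and $K(g(i),g(j))=\phi^R_gK(i,j)\phi^{R\dagger}_g$, one checks directly that $\{\ket{v^g_i}\}_i$ satisfies the constraint. We also have $\sup_i\norm{v^g_i}^2=\sup_i\norm{v_i}^2$, since $g$ is a bijection of $\mc I$.

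\textbf{Step 2: averaging.} Enlarge the work register to $\tilde{\mc W}=\mc H_{\mc G}\otimes\mc W$ and set
\begin{equation*}
\ket{\tilde v_i}\cq\mu(\mc G)^{-1/2}\int_{\mc G}\ket{g}\otimes\ket{v^g_i}\,dg.
\end{equation*}
The orthogonality $\braket{g|h}=\delta_{\mc G}(g^{-1}h)$ collapses the double integral. This gives $\braket{\tilde v_i|\tilde v_j}=\mu(\mc G)^{-1}\int\braket{v^g_i|v^g_j}dg$, and similarly for the $K$ term, because $\mathds{1}\otimes K$ acts trivially on the $\mc H_{\mc G}$ factor. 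The constraint is linear in these Gram entries, so it holds after averaging. For the norm, $\norm{\tilde v_i}^2$ is the average of $\norm{v_{g(i)}}^2$ over $g$, which is at most $\sup_j\norm{v_j}^2$.

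\textbf{Step 3: weak covariance.} Take $\phi^{\tilde{\mc W}}_h=R_h\otimes\mathds{1}_{\mc W}$, with $R$ a regular representation on $\mc H_{\mc G}$ chosen so that $\ket g\mapsto\ket{gh^{-1}}$. The order and inverse will be fixed so that the computation closes. We then compute $(\phi^{\tilde{\mc W}}_h\otimes\phi^R_h)\ket{\tilde v_i}$ and substitute $g'=gh^{-1}$. This turns $\phi^R_h\phi^{R\dagger}_g$ into $\phi^{R\dagger}_{g'}$ and $v_{g(i)}$ into $v_{g'(h(i))}$, and left/right invariance of the Haar measure on a compact group then gives $\ket{\tilde v_{h(i)}}$. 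Here we also need to fix whether the action of $\mc G$ on $\mc I$ composes as $g(h(i))=(gh)(i)$, consistently with $\phi$ being a homomorphism.

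\textbf{Main obstacle.} I expect the hardest part to be making Step 2 rigorous for compact Lie groups, since $\ket g$ is not a normalizable vector. I would define $\ket{\tilde v_i}$ as the $L^2(\mc G,\mc W\otimes\mc M)$-valued function $g\mapsto\mu(\mc G)^{-1/2}\ket{v^g_i}$. This requires measurability and square integrability of $g\mapsto\ket{v^g_i}$. Square integrability is immediate if $\sup_i\norm{v_i}<\infty$, and we may assume this, since otherwise $W(\mc A)=\infty$ and there is nothing to prove. Measurability is the delicate point, because the map $g\mapsto v_{g(i)}$ is arbitrary. I would first try to reduce to a measurable selection, or to replace $\{v_i\}$ by an equivalent family built from the Gram kernel, which is itself continuous in $g$. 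Strong continuity of $\phi^{\tilde{\mc W}}$ then follows from that of the regular representation.
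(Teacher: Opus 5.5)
Your proposal is essentially the paper's proof. You rotate the catalysts by $\phi^{R\dagger}_g$; the paper also inserts an arbitrary $\phi^c_g$ on $\mc W$, which your trivial choice shows is unnecessary. You then superpose them over a group register, get weak covariance from a regular representation plus a Haar change of variables, bound the norm by the orbit average, and rebuild $\tilde S^\circ$ from Gram matching. Your $\ket{g}$ with $\ket{g}\mapsto\ket{gh^{-1}}$ is just the paper's $\ket{g^{-1}}$ with the left regular representation after relabelling.

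The measurability issue you flag for compact Lie groups is real, and the paper does not address it; it simply writes the integral. However, your remedy via continuity of the catalyst Gram kernel in $g$ does not work as stated. The adversary constraints do not force $\braket{v_i|v_j}$ to be continuous, or even measurable. For example, for reflection oracles you can add components $c(i)\,\ket{e}\otimes\ket{m_i}$, with $\ket{e}$ a fresh work direction, $O_i\ket{m_i}=\ket{m_i}$, and $c$ an arbitrary bounded function, without affecting feasibility.
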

And we can directly find the following corollary:
\begin{corollary}\label{corr:optCat}
For any problem $P$ with symmetry group $\mc{G}$, there exists an optimal transducing algorithm that is weakly covariant.
\end{corollary}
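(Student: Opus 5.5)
The corollary follows from \cref{theorem:symmetryCat} once an optimal transducing algorithm is known to exist. So I would argue in two steps: first show that the infimum defining $W(P)$ is attained, then symmetrize a minimizer.

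\textbf{Step 1 (symmetrization).} Suppose $\mc A^\star=(S^\circ,\{\ket{v_i}\})$ solves $P$ with $W(\mc A^\star)=W(P)$. \cref{theorem:symmetryCat} gives a weakly covariant algorithm $\tilde{\mc A}$ with $W(\tilde{\mc A})\leq W(\mc A^\star)=W(P)$. Since $W(P)$ is an infimum over all transducing algorithms, $W(\tilde{\mc A})\geq W(P)$. Hence $W(\tilde{\mc A})=W(P)$, and $\tilde{\mc A}$ is optimal and weakly covariant.

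\textbf{Step 2 (existence of a minimizer).} This is the main obstacle. By the identification $\operatorname{Adv}(P)=W(P)$ and the Gram-matrix criterion for the existence of $S^\circ$, it suffices to show that the adversary program (\cref{Eq:DefAdv}) attains its infimum.

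I would rewrite the program in terms of an operator-valued kernel rather than vectors. Set
\[
\Gamma(i,j)=\operatorname{Tr}_{\mc W}\bigl(\ket{v_j}\!\bra{v_i}\bigr),
\]
a positive semidefinite kernel from $\mc I\times\mc I$ into the trace-class operators on $\mc M$. Both the objective and the constraints are linear in $\Gamma$:
\[
\braket{v_i|v_j}=\operatorname{Tr}\Gamma(i,j),\qquad \braket{v_i|\mathds 1\otimes K(i,j)|v_j}=\operatorname{Tr}\bigl(K(i,j)\Gamma(i,j)\bigr).
\]

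Now take a minimizing sequence $\Gamma_n$ with $\sup_i\operatorname{Tr}\Gamma_n(i,i)\leq W(P)+1/n$. Positivity gives the entrywise bound $\norm{\Gamma_n(i,j)}_1\leq \sup_k\operatorname{Tr}\Gamma_n(k,k)$. By Banach--Alaoglu and Tychonoff, applied entrywise in the weak-$*$ topology of trace class as the dual of the compact operators, a subnet converges entrywise to some $\Gamma$. The limit has the following properties:
\begin{itemize}
\item $\Gamma$ is positive semidefinite, since positivity is tested against finite families of compact operators.
\item The diagonal bound $\operatorname{Tr}\Gamma(i,i)\leq W(P)$ holds by lower semicontinuity of the trace on positive elements.
\end{itemize}

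The delicate point is the constraint $\operatorname{Tr}\Gamma(i,j)-\operatorname{Tr}(K(i,j)\Gamma(i,j))=X(i,j)-T(i,j)$. The map $\operatorname{Tr}(\cdot)$ is not weak-$*$ continuous when $\mc M$ is infinite dimensional, so the constraint need not pass to the limit. In finite dimension, or after the usual normalization, it does. Otherwise I would work in the weak topology of the Hilbert space $\mc L$ directly on the vectors $\ket{v_i}$, using an embedding into a common $\mc W=\ell^2(\mathbb N)$ and a joint diagonal-subsequence argument.

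Finally, I would factor $\Gamma$ through a Kolmogorov/Stinespring-type decomposition. This produces a Hilbert space $\mc W$ and vectors $\ket{v_i}\in\mc W\otimes\mc M$ realizing $\Gamma$, which gives an optimal catalyst family. By the Gram-matrix argument (enlarging $\mc L$ by $\ell^2(\mathbb N)$ if needed), it comes with an input-independent unitary $S^\circ$, completing the existence of $\mc A^\star$ required in Step 1.

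If full attainment proves too technical in the general infinite case, the fallback is to read ``optimal'' as ``achieving the infimum when it is attained''. In that reading the statement follows from Step 1 alone, and I would add the separate remark that for every $\epsilon>0$ there is a weakly covariant algorithm within $\epsilon$ of $W(P)$.
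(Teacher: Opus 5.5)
Your Step~1 is exactly the paper's proof. The paper reads the corollary off \cref{theorem:symmetryCat} in one line: symmetrizing an optimal algorithm cannot raise its complexity, and the complexity cannot fall below $W(P)$, so it stays optimal. The paper never shows that an optimal algorithm exists. It tacitly assumes the infimum defining $W(P)$ is attained, which is in effect your fallback reading.

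Your Step~2 is therefore a genuine addition, and it is sound whenever every $\mathds 1-K(i,j)$ is a compact operator (and $W(P)<\infty$). In that case:
\begin{itemize}
\item the constraint $\operatorname{Tr}\bigl((\mathds 1-K(i,j))\Gamma(i,j)\bigr)=X(i,j)-T(i,j)$ is a pairing of trace class against compacts, so it survives the weak-$*$ limit;
\item positivity and the diagonal bound pass to the limit as you say;
\item the Kolmogorov factorization $\Gamma(i,j)=A_jA_i^\dagger$, with $A_i$ Hilbert--Schmidt of squared norm $\operatorname{Tr}\Gamma(i,i)$, returns catalysts $\ket{v_i}\in\mc W\otimes\mc M$.
\end{itemize}
This covers $\dim\mc M<\infty$ with an arbitrary, even uncountable, input set. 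It also covers the amplitude-amplification problems with infinite-dimensional $\mc H$, since there $\mathds 1-K$ has rank at most two.

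Outside this case your proposed repair does not work. Weak convergence of $\ket{v_i^{(n)}}$ and $\ket{v_j^{(n)}}$ in $\ell^2(\mathbb N)\otimes\mc M$ does not give convergence of the sesquilinear quantities $\braket{v_i^{(n)}|A|v_j^{(n)}}$. For example, $\ket{e_n}\otimes\ket{m}$ and $\ket{e_n}\otimes\ket{m'}$ both tend weakly to $0$ while their inner product stays $\braket{m|m'}$. So mass escapes exactly as in the trace-class picture, and ``the usual normalization'' does not name an actual argument. In full generality you therefore establish only the conditional statement, which is also all the paper proves.
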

One can now ask whether there always exists an optimal transducing algorithm that is strongly covariant. However, we can find counter-examples:
\begin{itemize}
    \item no strongly covariant algorithm solves \cref{prob:SearchCE1}, it is a variation of unstructured search with a single element marked by a weak oracle. One of the main ingredients is that given the structure of the oracle, the symmetry group of the problem is not the symmetric group $\mc{S}_n$ anymore, but the cyclic group $\mc C_n$. Note that while we expressed the problem as a state conversion problem for consistency, it can easily be transformed into a functional problem.
    \item \cref{prob:CosineCE} can be solved by a strongly covariant algorithm, but it is not optimal. For this problem, the oracle is redundant in the way it provides information, but imposing strong covariance implies some symmetrization that causes a wasteful, and therefore sub-optimal, exploitation of this redundancy.
\end{itemize}

The symmetry of the catalyst greatly reduces the space in which it can live, and thus simplifies the search for optimal transducing algorithms.
\begin{proposition}\label{Corr:SpaceRed}
A weakly covariant family of catalysts is left invariant by any group element that stabilizes the corresponding input, that is,
    \begin{equation}
        \forall i\in\mc I:\quad \ket{v_i}=\phi_g^{\mc{L}}\ket{v_i}\quad\forall g\in\mathrm{Stab}(i).
    \end{equation}
Equivalently, the catalyst lives in the subspace of $\mc{L}$ stabilized by the stabilizator of its relative input
\begin{equation}
    \ket{v_i}\in\Pi_{\mathrm{Stab(i)}}\mc{L}.
\end{equation}
\end{proposition}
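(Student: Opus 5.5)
The argument is essentially a direct unpacking of \cref{def:weakCov}. Fix an input $i\in\mc I$ and a group element $g\in\mathrm{Stab}(i)$, that is, a bijection $g$ of $\mc I$ belonging to $\mc G$ with $g(i)=i$. Weak covariance gives, for this particular pair $(g,i)$, the relation $\ket{v_{g(i)}}=\phi^{\mc L}_g\ket{v_i}$ with $\phi^{\mc L}_g=\phi^{\mc W}_g\otimes\phi^R_g$. Substituting $g(i)=i$ on the left-hand side yields $\ket{v_i}=\phi^{\mc L}_g\ket{v_i}$, which is the first claim. Nothing about the transducer $S^\circ$ is used; only the catalyst family and the representation $\phi^{\mc L}$ matter.

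For the equivalent formulation, I will first note that $\mathrm{Stab}(i)$ is a subgroup of $\mc G$. It is closed under composition and inverses, and it is closed in the topology of $\mc G$ when $\mc G$ is a compact Lie group, because the action on $\mc I$ is by the group elements themselves. Hence $\mathrm{Stab}(i)$ is again a finite or compact group. The restriction $\phi^{\mc L}|_{\mathrm{Stab}(i)}$ is a unitary, strongly continuous representation on $\mc L$.

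I will define $\Pi_{\mathrm{Stab}(i)}$ as the orthogonal projector onto the fixed-point subspace
\begin{equation}
\mc L^{\mathrm{Stab}(i)}\cq\{\ket{w}\in\mc L:\phi^{\mc L}_g\ket{w}=\ket{w}~\forall g\in\mathrm{Stab}(i)\}.
\end{equation}
This subspace is closed as an intersection of kernels of the bounded operators $\phi^{\mc L}_g-\mathds{1}$. For completeness, I will express $\Pi_{\mathrm{Stab}(i)}$ as the Haar average $\frac{1}{\mu(\mathrm{Stab}(i))}\int_{\mathrm{Stab}(i)}\phi^{\mc L}_h\,dh$, taken in the weak or strong sense. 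Left-invariance of the Haar measure shows this operator is idempotent and self-adjoint, with range exactly the fixed vectors; equivalently, it is the projector onto the trivial isotypic component in \cref{Th:CompletRed}. The equivalence is then immediate in both directions. A vector fixed by every $\phi^{\mc L}_g$ with $g\in\mathrm{Stab}(i)$ lies in $\mc L^{\mathrm{Stab}(i)}=\Pi_{\mathrm{Stab}(i)}\mc L$. Conversely, any vector in that range is fixed by definition.

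The only point needing care is the identification of the projector, not the invariance itself. In the Lie-group case, this means justifying the Haar integral via strong continuity, or simply defining $\Pi_{\mathrm{Stab}(i)}$ directly as the projector onto the closed fixed subspace, which avoids the issue. I will also stress that $\Pi_{\mathrm{Stab}(i)}$ depends on the chosen $\phi^{\mc W}$, so the reduction applies relative to the representation witnessing weak covariance.
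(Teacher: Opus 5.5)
Your proof is correct and is exactly the paper's argument: you specialize the weak covariance relation $\ket{v_{g(i)}}=\phi^{\mc L}_g\ket{v_i}$ to $g\in\mathrm{Stab}(i)$ and use $g(i)=i$, and the second formulation then just reads $\Pi_{\mathrm{Stab}(i)}$ as the projector onto the closed fixed-point subspace. Your Haar-average discussion is extra rather than needed, and your justification that $\mathrm{Stab}(i)$ is closed is not really an argument; this is harmless, since the fixed subspace is closed regardless and, by strong continuity, coincides with the fixed subspace of the (compact) closure of $\mathrm{Stab}(i)$.
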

The proof is quite immediate from the definition of weak covariance as if $g(i)=i$, then simultaneously $\ket{v_{g(i)}}=\phi_g^{\mc{L}}\ket{v_i}$ and $\ket{v_{g(i)}}=\ket{v_i}$.

\subsection{Symmetry Theorem for the Trandsucer}\label{sec:SymTra}

The joint action of the group on the input and target states, the oracle and the catalyst implies the following consequence on the input-independent unitary $S^\circ$ of a transducing algorithm:
\begin{theorem}\label{theorem:symmetryTrans}
    If a transducing algorithm is weakly covariant, then its input-independent unitary $S^\circ$ is an intertwiner from $\mc{X}=\mathrm{Span}\left(\ket{\xi_i}\oplus\left(\mathds{1}_{\mc{W}}\otimes O_i\right)\ket{v_i},\forall i\in\mc{I}\right)$ to $\mc{T}=\mathrm{Span}\left(\ket{\tau_i}\oplus\ket{v_i},\forall i\in\mc{I}\right)=S\mc{X}$:
        \begin{equation}
            S^\circ\left.\left(\phi^\xi\oplus\phi^{\mc{W}}\otimes\phi^L\right)\right|_{\mc{X}}=\left.\left(\phi^\tau\oplus\phi^{\mc{W}}\otimes\phi^R\right)\right|_{\mc{T}}S^\circ,
        \end{equation}
        and therefore
        \begin{equation}
            \mc X\cong_{\mc G}\mc T\quad\text{and}\quad \left.\left(\phi^\xi\oplus\phi^{\mc{W}}\otimes\phi^L\right)\right|_{\mc{X}}\cong\left.\left(\phi^\tau\oplus\phi^{\mc{W}}\otimes\phi^R\right)\right|_{\mc{T}}.
        \end{equation}
\end{theorem}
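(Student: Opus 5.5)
The plan is to check the intertwining relation directly on the spanning vectors of $\mc X$ and then extend it by linearity and continuity. We write
\begin{equation*}
\ket{x_i}\cq\ket{\xi_i}\oplus(\mathds{1}_{\mc W}\otimes O_i)\ket{v_i},\qquad \ket{t_i}\cq\ket{\tau_i}\oplus\ket{v_i},
\end{equation*}
and introduce the two unitary representations $\Phi^{\mathrm{in}}_g\cq\phi^\xi_g\oplus\phi^{\mc W}_g\otimes\phi^L_g$ and $\Phi^{\mathrm{out}}_g\cq\phi^\tau_g\oplus\phi^{\mc W}_g\otimes\phi^R_g$ on $\mc H\oplus\mc L$. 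These are direct sums and tensor products of unitary (strongly continuous, in the Lie case) representations, so they are again representations of the same kind. By definition of a canonical transducer, $S^\circ\ket{x_i}=\ket{t_i}$ for every $i\in\mc I$.

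\textbf{Step 1: covariance of the spanning vectors.} We show that both families transform covariantly, namely
\begin{equation*}
\ket{x_{g(i)}}=\Phi^{\mathrm{in}}_g\ket{x_i}\quad\text{and}\quad\ket{t_{g(i)}}=\Phi^{\mathrm{out}}_g\ket{t_i}.
\end{equation*}
The output relation follows directly from $\ket{\tau_{g(i)}}=\phi^\tau_g\ket{\tau_i}$ in \cref{eq:DefRepProb} together with weak covariance, $\ket{v_{g(i)}}=(\phi^{\mc W}_g\otimes\phi^R_g)\ket{v_i}$. The input relation combines $\ket{\xi_{g(i)}}=\phi^\xi_g\ket{\xi_i}$ with the computation
\begin{equation*}
(\mathds{1}\otimes O_{g(i)})\ket{v_{g(i)}}=(\mathds{1}\otimes\phi^L_gO_i\phi^{R\dagger}_g)(\phi^{\mc W}_g\otimes\phi^R_g)\ket{v_i}=(\phi^{\mc W}_g\otimes\phi^L_g)(\mathds{1}\otimes O_i)\ket{v_i}.
\end{equation*}
The oracle's twisting $\phi^L_g(\cdot)\phi^{R\dagger}_g$ is exactly what converts the $\phi^R$ in the catalyst's representation into the $\phi^L$ in $\Phi^{\mathrm{in}}$.

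\textbf{Step 2: intertwining on generators and extension.} For each $i$ and $g$,
\begin{equation*}
S^\circ\Phi^{\mathrm{in}}_g\ket{x_i}=S^\circ\ket{x_{g(i)}}=\ket{t_{g(i)}}=\Phi^{\mathrm{out}}_g\ket{t_i}=\Phi^{\mathrm{out}}_gS^\circ\ket{x_i}.
\end{equation*}
Since $g$ is a bijection of $\mc I$, $\Phi^{\mathrm{in}}_g$ permutes the generators of $\mc X$ and $\Phi^{\mathrm{out}}_g$ permutes those of $\mc T$. Hence both spans are invariant, and so are their closures, because the operators are bounded. The identity then extends by linearity to the span and by continuity (boundedness of $S^\circ$ and of the representations) to the closure. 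This gives $S^\circ\Phi^{\mathrm{in}}_g|_{\mc X}=\Phi^{\mathrm{out}}_g|_{\mc T}S^\circ$.

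\textbf{Step 3: equivalence.} $S^\circ$ is unitary on the whole space and maps each generator $\ket{x_i}$ to $\ket{t_i}$, so it maps $\mc X$ onto $\mc T$; in particular $\mc T=S^\circ\mc X$. Its restriction $S^\circ|_{\mc X}:\mc X\to\mc T$ is therefore an invertible (indeed unitary) intertwiner, i.e.\ an equivalence. This yields $\mc X\cong_{\mc G}\mc T$ and the equivalence of the restricted representations.

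The only point requiring care is a technical one rather than the algebra. $\mc X$ and $\mc T$ should be read as closed spans, so that they are genuine closed invariant subspaces carrying subrepresentations, possibly infinite-dimensional when $\mc I$ is infinite. Strong continuity of the restrictions is inherited from that of $\Phi^{\mathrm{in}}$ and $\Phi^{\mathrm{out}}$. With these conventions, \cref{Corr:StrucIntert} can be applied in the subsequent block-diagonalization.
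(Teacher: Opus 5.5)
Your proof is correct and follows essentially the same route as the paper: you use the symmetry relations of the problem and weak covariance to rewrite the transduction identity at $g(i)$ in terms of the one at $i$, and then extend from the spanning vectors to $\mc X$. Your extra care about closed spans, their invariance, and the unitarity of $S^\circ|_{\mc X}$ (which gives the equivalence) only makes explicit steps the paper leaves implicit.
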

The proof presented in~\cref{proof:SymmetryTrans} is a simple application of the definition of the symmetry group of the problem.
Combining that result with~\cref{Corr:StrucIntert}, we can deduce that the input-independent unitary possesses a block-diagonal decomposition in the isoptypic decomposition of the Hilbert space.
\begin{corollary}\label{Corr:Niiiice} With increasing levels of assumption, we can deduce the following structure for $S^\circ$:
    \begin{enumerate}
        \item As $S^\circ$ is an intertwiner, it admits a block decomposition in the isotypic blocks $\mc X^{(\lambda)}$ and $\mc{T}^{(\lambda)}$ of $\mc X$ and $\mc T$ respectively:
    \begin{equation}
        S^\circ|_\mc{X}=\bigoplus_{\lambda}U_\lambda\otimes S^\circ_\lambda\qquad\text{with}\qquad U_\lambda\mc X^{(\lambda)}=\mc T^{(\lambda)}\text{ and }S^\circ_\lambda\in U\left(\mc{X}^{(\lambda)}\right).
    \end{equation}
    \item If the orthogonal complements of $\mc X$ and $\mc T$, $\mc X^\perp$ and $\mc T^\perp$, are also equivalent, $\mc X^\perp\cong_{\mc G}\mc T^\perp$, then $S^\circ$ can be chosen as an intertwiner on those subspaces too, and thus as a global intertwiner. It then admits a global decomposition
    \begin{equation}
        S^\circ=\bigoplus_\lambda U_\lambda\otimes S^\circ_\lambda\qquad \text{with}\qquad U_\lambda V^{(\lambda)}=W^{(\lambda)}\text{ and }S^\circ_\lambda\in U\left(V^{(\lambda)}\right)
    \end{equation}
    where $V^{(\lambda)}$ and $W^{(\lambda)}$ are the isotypic decomposition of the total Hilbert space $\mc H\oplus\mc L$ regarding respectively the two representations $\phi^\xi\oplus\phi^{\mc{W}}\otimes\phi^L$ and $\phi^\tau\oplus\phi^{\mc{W}}\otimes\phi^R$. 
    \item If both representations are equal, $\phi^\xi\oplus\phi^{\mc{W}}\otimes\phi^L=\phi^\tau\oplus\phi^{\mc{W}}\otimes\phi^R$, then we can choose
\begin{equation}
    S^\circ=\bigoplus_\lambda\mathds{1}_{(\lambda)}\otimes S^\circ_\lambda,\qquad \text{with}\qquad S^\circ_\lambda\in U\left(V^{(\lambda)}\right)
\end{equation}
where $\mathds{1}_{(\lambda)}$ is the projector onto $V^{(\lambda)}$.
    \end{enumerate}
\end{corollary}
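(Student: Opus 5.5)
The plan is to derive the three items in turn from \cref{theorem:symmetryTrans} and \cref{Corr:StrucIntert}. The only extra tool needed is a careful bookkeeping of which representation acts on which subspace. Write $\Phi^{\mathrm{in}}_g\cq\phi^\xi_g\oplus\phi^{\mc W}_g\otimes\phi^L_g$ and $\Phi^{\mathrm{out}}_g\cq\phi^\tau_g\oplus\phi^{\mc W}_g\otimes\phi^R_g$ on $\mc H\oplus\mc L$.

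First I check that $\mc X$ is $\Phi^{\mathrm{in}}$-invariant and $\mc T$ is $\Phi^{\mathrm{out}}$-invariant. By \cref{eq:DefRepProb} and weak covariance,
\begin{equation*}
\Phi^{\mathrm{in}}_g\bigl(\ket{\xi_i}\oplus(\mathds 1\otimes O_i)\ket{v_i}\bigr)=\ket{\xi_{g(i)}}\oplus(\mathds 1\otimes O_{g(i)})\ket{v_{g(i)}},
\end{equation*}
using $(\phi^{\mc W}_g\otimes\phi^L_g)(\mathds 1\otimes O_i)=(\mathds 1\otimes O_{g(i)})(\phi^{\mc W}_g\otimes\phi^R_g)$. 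The analogous identity holds for $\mc T$. Since the representations are unitary, their closures and orthogonal complements $\mc X^\perp,\mc T^\perp$ are invariant too, so restrictions are genuine subrepresentations.

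\emph{Item 1.} By \cref{theorem:symmetryTrans}, $S^\circ|_{\mc X}:\mc X\to\mc T$ is a unitary intertwiner between $\Phi^{\mathrm{in}}|_{\mc X}$ and $\Phi^{\mathrm{out}}|_{\mc T}$. Apply \cref{Th:CompletRed} to both restrictions to obtain isotypic decompositions $\mc X=\bigoplus_\lambda\mc X^{(\lambda)}$ and $\mc T=\bigoplus_\lambda\mc T^{(\lambda)}$, with $\mc X^{(\lambda)}\cong V_\lambda\otimes\mathbb C^{m_\lambda}$. Then \cref{Corr:StrucIntert} says $S^\circ|_{\mc X}$ maps $\mc X^{(\lambda)}$ to $\mc T^{(\lambda)}$ and acts as identity on the irrep factor tensored with a unitary on the multiplicity space. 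The fixed unitary identifications $U_\lambda$ of $\mc X^{(\lambda)}$ and $\mc T^{(\lambda)}$ with the standard model produce the displayed form.

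\emph{Item 2.} The key point is that $S^\circ$ is only constrained on $\mc X$. Any unitary $\mc X^\perp\to\mc T^\perp$ may be used on the complement without affecting the transduction identities, because the Gram/feasibility condition only involves $\mc X$. When $\mc X^\perp\cong_{\mc G}\mc T^\perp$, a unitary equivalence $R$ exists (equivalent unitary representations admit a unitary equivalence). I then replace $S^\circ$ by $S^\circ|_{\mc X}\oplus R$. This is a unitary on $\mc H\oplus\mc L$, it still solves the problem with the same catalysts, and it intertwines $\Phi^{\mathrm{in}}$ and $\Phi^{\mathrm{out}}$ globally. Applying \cref{Corr:StrucIntert} on the whole space with respect to the isotypic decompositions $V^{(\lambda)},W^{(\lambda)}$ gives the block form.

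\emph{Item 3.} When $\Phi^{\mathrm{in}}=\Phi^{\mathrm{out}}$, I would first show that the hypothesis of item 2 holds automatically. This is the step I expect to be the main obstacle, since it is a cancellation statement. From $\mc X\cong\mc T$ as subrepresentations of the same $\Phi$, I would argue multiplicity by multiplicity on $V^{(\lambda)}$: $m_\lambda(\mc X)+m_\lambda(\mc X^\perp)=m_\lambda(\mc T)+m_\lambda(\mc T^\perp)$, which gives equal complementary multiplicities when they are finite. In infinite dimensions cancellation can fail, so I would either assume finite multiplicities or use the trick of \cref{sec:ScpEq}: enlarge $\mc L$ by an $\ell^2(\mathbb N)$ register carrying the trivial representation tensored appropriately, which makes all complementary multiplicities infinite and hence equal. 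Once this is settled, the isotypic decompositions coincide, $V^{(\lambda)}=W^{(\lambda)}$, so each $U_\lambda$ can be taken as the identity, giving $S^\circ=\bigoplus_\lambda\mathds 1_{(\lambda)}\otimes S^\circ_\lambda$.
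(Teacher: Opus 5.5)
Your treatment of items 1 and 2 is the paper's argument. The paper gives no separate proof beyond combining \cref{theorem:symmetryTrans} with \cref{Corr:StrucIntert}. Your invariance check for $\mc X,\mc T$, and your replacement of $S^\circ$ on $\mc X^\perp$ by a unitary equivalence onto $\mc T^\perp$, only make explicit what the paper leaves implicit.

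Where you differ is item 3. The paper reads the corollary cumulatively (``increasing levels of assumption''), so item 3 simply inherits $\mc X^\perp\cong_{\mc G}\mc T^\perp$ from item 2. Equality of the two representations then gives $V^{(\lambda)}=W^{(\lambda)}$, and the result is immediate with no cancellation step. Your multiplicity count is therefore not needed for the corollary as the paper intends it, but it is a valid strengthening. For finite multiplicities (in particular in finite dimension, which covers the search application) it shows that equality of the representations already forces the item-2 hypothesis, so that hypothesis never has to be checked separately.

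Your infinite-dimensional caveat is also well placed. The paper runs into this phenomenon in $\mathrm{Amp}_{\geq\eps}$: the representations coincide, but in the unmarked block $\overline{\operatorname{Span}}(\ket{v^u_s})^\perp=\{0\}$ while $\overline{\operatorname{Span}}(\ket{w^u_s})^\perp$ is one-dimensional. The paper's fix is the one you propose, enlarging the trivially acting workspace from $L^2(\mathbb R^+)$ to $L^2(\mathbb R)$.

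One minor refinement to your fix: enlarging only $\mc W$ by a trivially acting $\ell^2(\mathbb N)$ makes the complementary multiplicities infinite only for irreducibles that occur in $\mc L$. An irreducible occurring only in $\phi^\xi=\phi^\tau$, with infinite multiplicity in $\mc H$, would also need the enlargement $\mc H\to\mc H\otimes\ell^2(\mathbb N)$ described at the end of \cref{sec:ScpEq}.
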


\section{Practical applications}\label{sec:app}

\subsection{Unstructured Search and Amplitude Amplification on Single Orbit}
The first problems we will consider are the unstructured search with a fixed number of marked elements, which consists in generating the uniform superposition over all marked indices of a bit string with $M$ marked elements, and amplitude amplification with fixed amplitude. Both problems are very similar and we will solve them in a similar fashion, with a subtle difference. In both cases we will first specify the adversary bound to the setup of the problem. Then, for unstructured search, we will only impose weak covariance (which can always be enforced) then optimize over all possible representations and find that the optimal family of catalysts happens to be strongly covariant, while for amplitude amplification we will compute a lower bound for the complexity, then directly impose strong covariance and verify later that the obtained family of catalysts saturates the lower bound. Finally, we will deduce from~\cref{Corr:Niiiice} the structure of the input-independent unitary and from there give an explicit transducing algorithm.

\subsubsection{Unstructured search with \texorpdfstring{$M$}{M} marked elements}\label{sec:searchM}
Let us first consider the unstructured search with $M$ marked elements. This problem is a variation of the one proposed by Grover~\cite{Gro96} for $M=1$ and generalized by Boyer Brassard and H{\o}yer and Tapp~\cite{BBHT98}.
Formally
\begin{problem}[Unstructured search with $M$ marked elements]\label{prob:Fixed_search}
Given $N\in\mathbb{N}$ the number of elements, and $1\leq M\leq N$ a promise on the number of marked elements, $\mathrm{Search}^N_M$ is defined as the triple
\begin{equation}
    \mathrm{Search}^N_M\cq\left(\mc{X}_M^N,\mc{T}_M^N,\mc{O}_M^N\right),
\end{equation}
over the input set $\mc{I}=\mc{B}^N_{M}\cq\{x\in\{0,1\}^N:\abs{x}=M\}$, and the Hilbert spaces $\mc{H}=\mc{M}=\mathbb{C}^N$; where the indexed families of input states, target states and oracles are defined as
{\footnotesize\begin{equation}
\mc{X}_M^N \cq \left\{\ket{\xi_x}=\ket{\psi}\cq\frac{1}{\sqrt{N}}\sum_{i=1}^N\ket{i}\right\}_{x\in \mc{B}^N_{M}}
            \mc{T}_M^N \cq \left \{\ket{\tau_x} = \ket{x}\cq \frac{1}{\sqrt{M}} \sum_{i:x_i=1} \ket{i}\right \}_{x\in \mc{B}^N_{M}}
    \mc{O}_M^N \cq \left\{O_x=\mathds{1}_{\mc M}-2\sum_{i:x_i=1}\proj{i}\right\}_{x\in \mc{B}^N_{M}},
\end{equation}}
where $\{\ket i\}_{1\leq i\leq N}$ is the canonical basis of $\mathbb C^N$.
\end{problem}
Essentially, an algorithm solving $\mathrm{Search}^N_M$ should convert the uniform superposition into the superposition over the marked elements of a bitstring $x$, with access to the phase oracle
\begin{equation}
    O_x\ket{i}=(-1)^{x_i}\ket{i},
\end{equation}
that apply a negative phase on the marked indices. In this section we will prove the following proposition:
\begin{proposition}\label{prop:searchM}
    Unstructured search with $M$ marked elements has a transduction complexity of 
    \begin{equation}
        W\left(\mathrm{Search}^N_M\right)=\sqrt{\frac{N-M}{4M}},
    \end{equation}
    and this complexity can be realized with $\mc{W}=\mathbb C$, the following catalyst and input-independent unitary
    \begin{equation}
        \ket{v_x}=\frac{(n-M)^{1/4}}{2M^{1/4}}\left(\ket x+\ket{\neg x}\right),\qquad S^\circ=\proj \psi\otimes S^\circ_\parallel+\left(\mathds 1-\proj \psi\right)\otimes S^\circ_\perp,
    \end{equation}
    where $S\circ_\parallel$ and $S^\circ_\perp$ are $2\times2$ unitaries, fixed up to a phase.
\end{proposition}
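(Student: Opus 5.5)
The plan has two halves: an explicit feasible point achieving $\sqrt{(N-M)/(4M)}$, and a matching lower bound on the adversary bound in the form \cref{Eq:DefAdv}. I will then read off the structure of $S^\circ$ from \cref{Corr:Niiiice}.

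\textbf{Step 1: reduce the constraints to the private coordinates.} We have $X(x,y)=1$ and $T(x,y)=\abs{x\wedge y}/M$. Moreover $K(x,y)=O_xO_y=\mathrm{diag}((-1)^{x_i\oplus y_i})$, so $\mathds{1}-K(x,y)=2\sum_{i:x_i\neq y_i}\proj{i}$. Write $\ket{v_x}=\sum_i\ket{v_{x,i}}\otimes\ket{i}$ with $\ket{v_{x,i}}\in\mc W$. The constraints then become
\begin{equation}
\frac{M-\abs{x\wedge y}}{M}=2\sum_{i:x_i\neq y_i}\braket{v_{x,i}|v_{y,i}}\qquad\forall x,y\in\mc B^N_M .
\end{equation}

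\textbf{Step 2: upper bound.} Take $\mc W=\mathbb C$ and the strongly covariant ansatz $v_{x,i}=\alpha/\sqrt M$ if $x_i=1$ and $v_{x,i}=\alpha/\sqrt{N-M}$ otherwise; this is $\alpha(\ket x+\ket{\neg x})$. If $\abs{x\wedge y}=k$, the set $\{i:x_i\neq y_i\}$ has $2(M-k)$ elements. On each of them one of $v_{x,i},v_{y,i}$ is the marked value and the other the unmarked value. The right-hand side is therefore $4(M-k)\alpha^2/\sqrt{M(N-M)}$. Matching it to $(M-k)/M$ for all $k$ forces $\alpha^2=\sqrt{N-M}/(4\sqrt M)$, which gives the stated catalyst. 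Its norm is $\norm{v_x}^2=2\alpha^2=\sqrt{(N-M)/(4M)}$. Since the Gram matrices of $\ket{\xi_x}\oplus O_x\ket{v_x}$ and $\ket{\tau_x}\oplus\ket{v_x}$ then coincide, an input-independent unitary $S^\circ$ exists.

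\textbf{Step 3: lower bound.} I expect this to be the main obstacle. I restrict the constraints to neighbouring pairs $y=x^{(i\leftrightarrow j)}$, obtained by swapping a marked index $i$ with an unmarked index $j$. These give $\frac1{2M}=\braket{v_{x,i}|v_{y,i}}+\braket{v_{x,j}|v_{y,j}}$. By Cauchy--Schwarz and weighted AM--GM with weight $c=\sqrt{(N-M)/M}$,
\begin{equation}
\frac1{M}\leq c\norm{v_{x,i}}^2+\frac{\norm{v_{y,i}}^2}{c}+\frac{\norm{v_{x,j}}^2}{c}+c\norm{v_{y,j}}^2 .
\end{equation}
Here the weight $c$ sits on the marked coordinate of each input, since $i$ is marked in $x$ and $j$ is marked in $y$.

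I then sum over all ordered pairs $(x,i,j)$ with $i\in x$, $j\notin x$. In this sum each marked coordinate of each $x$ appears $2(N-M)$ times with weight $c$, and each unmarked coordinate appears $2M$ times with weight $1/c$. The choice of $c$ equalises the two coefficients, $2(N-M)c=2M/c=2\sqrt{M(N-M)}$. The bound becomes $\binom NM(N-M)\leq 2\sqrt{M(N-M)}\sum_x\norm{v_x}^2$, hence $\max_x\norm{v_x}^2\geq\sqrt{(N-M)/(4M)}$.

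Alternatively, the same conclusion follows the paper's route. Invoke \cref{corr:optCat} and \cref{Corr:SpaceRed} to place $\ket{v_x}$ in the subspace fixed by $\mathrm{Stab}(x)=\mc S_M\times\mc S_{N-M}$, where it is spanned by vectors of the form $\ket{w_1}\otimes\ket x$ and $\ket{w_0}\otimes\ket{\neg x}$. Optimising the resulting two-dimensional problem then shows strong covariance is optimal. I would check the counting in the averaging argument carefully, since that is where the constant comes from.

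\textbf{Step 4: structure of $S^\circ$.} With $\phi^{\mc W}$ trivial, the representations on $\mc H\oplus\mc L=\mathbb C^N\oplus\mathbb C^N$ are both the permutation representation of $\mc S_N$ on each summand. Each summand decomposes as the trivial irrep, spanned by $\ket\psi$, plus the standard irrep on $\ket\psi^\perp$, each with multiplicity $2$. The input and target spans, and their complements, are equivalent. Part 3 of \cref{Corr:Niiiice} therefore gives $S^\circ=\proj\psi\otimes S^\circ_\parallel+(\mathds 1-\proj\psi)\otimes S^\circ_\perp$ with $2\times2$ blocks.

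Each block is fixed, up to the stated phase, by the images of one or two vectors. In the trivial component, $\ket{\xi_x}\oplus O_x\ket{v_x}$ must map to the trivial component of $\ket{\tau_x}\oplus\ket{v_x}$. In the standard component, the projection of $\ket x$ onto $\ket\psi^\perp$ must be matched in the same way.
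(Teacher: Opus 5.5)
Your argument is correct up to one slip noted below. For the upper bound and the block structure of $S^\circ$ it matches the paper; the only difference there is that you check the ansatz against every constraint with $\abs{x\wedge y}=k$ directly, whereas the paper obtains it as the optimizer of a reduced problem.

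The genuine difference is the lower bound. The paper's route is:
\begin{itemize}
\item apply \cref{theorem:symmetryCat} to replace an arbitrary feasible point by a weakly covariant one on an enlarged workspace;
\item parametrize the catalyst by two orbit representatives $\ket{v^{N,M}_0},\ket{v^{N,M}_1}$, twisted by an unknown $\phi^{\mathcal W}$;
\item argue that the single constraint obtained for the trivial representation is a relaxation of the constraints for every other representation;
\item conclude that everything collapses to minimizing $M\norm{v^{N,M}_1}^2+(N-M)\norm{v^{N,M}_0}^2$ subject to $4M\,\Re\braket{v^{N,M}_0|v^{N,M}_1}=1$.
\end{itemize}
You instead keep an arbitrary feasible point on an arbitrary $\mathcal W$, use only the swap constraints, and average the Cauchy--Schwarz/AM--GM inequality over the orbit. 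In effect you symmetrize the inequality instead of the catalyst. This is more elementary and self-contained: there is no workspace enlargement and no optimization over representations $\phi^{\mathcal W}$. It also shows that the swap pairs alone certify the bound, much like a positive-weight adversary argument. The paper's route buys structural information instead: an optimal catalyst can be taken strongly covariant. It also rehearses the machinery reused later, e.g.\ for amplitude estimation, where strong covariance fails.

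\textbf{The slip: your weight $c$ is inverted.} Equalizing $2(N-M)c=2M/c$ forces $c=\sqrt{M/(N-M)}$. This is also the AM--GM equality condition, since it is the ratio of the unmarked to the marked entry of the optimal catalyst. With $c=\sqrt{(N-M)/M}$ as you wrote it, the two coefficients are $2(N-M)^{3/2}/\sqrt M$ and $2M^{3/2}/\sqrt{N-M}$, which differ unless $N=2M$, and the bound degrades. The identity $2(N-M)c=2M/c=2\sqrt{M(N-M)}$ that you state holds for the corrected value, and the rest of the argument goes through.

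\textbf{A correction to your alternative sketch.} It is not right as stated. The subspace of $\mathcal W\otimes\mathbb C^N$ fixed by $\phi^{\mathcal W}_g\otimes\phi^N_g$ for $g\in\mathrm{Stab}(x)$ equals $\mathcal W\otimes\mathrm{Span}(\ket x,\ket{\neg x})$ only if $\phi^{\mathcal W}$ is trivial on $\mathrm{Stab}(x)$. That essentially presupposes strong covariance, which is what you would be trying to prove. It fails in general, e.g.\ for the symmetrized catalyst of \cref{theorem:symmetryCat}, whose $\phi^{\mathcal W}$ contains the regular representation. Under mere weak covariance, the swap constraint reads $\tfrac1{4M}=\Re\braket{v_{x,i}|\phi^{\mathcal W}_{(i\,j)}|v_{x,j}}\le\norm{v^{N,M}_1}\,\norm{v^{N,M}_0}$, and the same AM--GM step then finishes the bound.
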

In term of complexity we recover the result of Bennett, Bernstein, Brassard and Vazirani~\cite{BBBV97}. We have a smaller constant than the one proved by Zalka~\cite{Zal99}, the difference is that we consider Las Vegas complexity whereas he consider the minimum fixed case of oracles.
\begin{proof}
~\paragraph{Reformulation of the problem}
The initial and final Gram matrices, and the operator-valued kernel respectively are
\begin{equation}\label{Eq:KerSearcM}
    \begin{cases}
        X(x,y)=1\\
        T(x,y)=\frac{\abs{x\land y}}{M}\\
        K(x,y)=O_x^\dagger O_y=O_{x\oplus y}
    \end{cases}\qquad\forall x,y\in\mc{B}^N_{M}.
\end{equation}
Using those expressions, we can specify the optimization program~(\cref{Eq:DefAdv}) for this particular problem
\begin{equation}
    \operatorname{W}=\min_{{v_x}\in\mc{L}}\max_{x\in\mc{B}^N_{M}}\norm{v_x}^2\qquad\text{s.t.}\quad 1-\frac{\abs{x\land y}}{M}=\braket{v_x|\mathds{1}_{\mc{L}}-\mathds{1}_{\mc{W}}\otimes O_{x\oplus y}|v_y}\quad\forall x,y\in\mc{B}^N_{M}
\end{equation}
We can decompose the catalyst in the usual basis of $\mc{M}$ such that $\ket{v_x}=\sum_{i=1}^N\ket{v_{x,i}}\ket{i}$, with $\ket{v_{x,i}}\in\mc{W}$ and simplify further the equation
\begin{equation}
    \operatorname{W}=\min_{{v_{x,i}}\in\mc{W}}\max_{x\in\mc{B}^N_{ M}}\sum_{i=1}^N\norm{v_{x,i}}^2\qquad\text{s.t.}\quad 1-\frac{\abs{x\land y}}{M}=2\sum_{i:x_i\oplus y_i=1}\braket{v_{x,i}|v_{y,i}}\quad\forall x,y\in\mc{B}^N_{M}
\end{equation}
\paragraph{Symmetry group of the problem}The symmetry group of the problem is the symmetric group $\mc{S}_N$. Its action on the bit strings is the standard permutation of the indices, and the representations $\phi^\xi, \phi^\tau, \phi^L$ and $\phi^R$ are all the natural permutation representation $\phi^N$ \footnote{There is a bit of ambiguity for $\phi^\xi$ that could be the trivial representation since $\ket{\xi_x}=\ket{\psi}$ does not depend on $x$. This will not matter much further. }
\begin{equation}
    \phi^N_g\ket{i}=\ket{g(i)}.
\end{equation}
\paragraph{Optimal catalyst}By theorem~\cref{theorem:symmetryCat} we know that there exists a weakly covariant solution, therefore
\begin{equation}
    \ket{v_{g(x)}}=\phi^{\mc{W}}_g\otimes\phi^N_g\ket{v_x}\iff\ket{v_{g(x),g(i)}}=\phi_g^{\mc{W}}\ket{v_{x,i}}
\end{equation}
Note that at this point we do not know $\phi_g^{\mc{W}}$. We can decompose the set of pairs $(x,i)\in\mc{P}^{N,M}=\mc{B}^N_{M}\times[N]$ into two sets $\mc{P}^{N,M}_0=\{(x,i)\in\mc{B}^N_{M}\times[N]:x_i=0\}$ and $\mc{P}^{N,M}_1=\{(x,i)\in\mc{B}^N_{M}\times[N]:x_i=1\}$ that can be seen as the the orbits of the two tuples $p^{N,M}_0=(0^{N-M}1^M,1)$ and $p^{N,M}_1=(1^M0^{N-M},1)$ such that $\mc{P}^{N,M}_\alpha =\mc{S}_Np^{N,M}_\alpha$. We can therefore relate the different components of the catalyst to only two vectors
\begin{equation}
\ket{v^{N,M}_\alpha}\equiv\ket{v_{p^{N,M}_\alpha}}\Rightarrow\ket{v_{x,i}}=\phi^{\mc{W}}_g\ket{v^{N,M}_{x_i}}\quad\forall (x,i)\in \mc{P}^{N,M}.\label{Eq:FixedGroverCataReduc}
\end{equation}
The objective value can now be reduced to an optimization over only the two vectors $\ket{v^{N,M}_\alpha}$ and all the possible representations $\phi^{\mc{W}}$
{\small\begin{equation}
    \operatorname{W}=\min_{\phi^{\mc{W}}\in\mathrm{Hom}(\mc{S},U(\mc{W}))}\min_{\ket{v^{N,M}_\alpha}\in\mc{W}}M\norm{v_1^{N,M}}^2+(N-M)\norm{v_0^{N,M}}^2\qquad\text{s.t.}\quad 1=4M\Re\left(\braket{v_0^{N,M}|\phi^{\mc{W}}_g|v_1^{N,M}}\right)\quad\forall g\in\mathrm{Stab}(1)
\end{equation}}
Dealing with the optimization over the realizations does not seem trivial; however, one can see that the subproblem for the  trivial representation is a relaxation of all the other and the trivial representation is therefore optimal. Indeed, for the trivial representation, the constraint for all permutation collapses to 
\begin{equation}
    1=4M\Re\left(\braket{v_0^{N,M}|v_1^{N,M}}\right),
\end{equation}
that is also a constraint for all other representations, as it is the constraint for the trivial permutation. Note that this proves for the single fixed Hamming weight orbit that the optimal solution is strongly covariant. We can therefore reduce the optimization program to the following 
\begin{equation}
    \operatorname{W}=\min_{\ket{v^{N,M}_\alpha}\in\mc{W}}M\norm{v_1^{N,M}}^2+(N-M)\norm{v_0^{N,M}}^2\qquad\text{s.t.}\quad 1=4M\Re\left(\braket{v_0^{N,M}|v_1^{N,M}}\right)
\end{equation}
That is a two dimensional problem whose optimal solution is given by the two vectors
\begin{equation}
    \begin{cases}
        \ket{v_1^{N,M}}=\frac{(N-M)^{1/4}M^{1/4}}{2M}\ket{0}\\
        \ket{v_0^{N,M}}=\frac{1}{2(N-M)^{1/4}M^{1/4}}\ket{0}\\
    \end{cases}
\end{equation}
for an arbitrary unit vector $\ket{0}$. The objective value is $\operatorname{W}=\sqrt{\frac{N-M}{4M}}$. This is also the optimal transduction complexity for unstructured search with $M$ marked elements. We can reconstruct the catalyst for the initial problem using~\cref{Eq:FixedGroverCataReduc}:
\begin{equation}
    \ket{v_x}
    =\sum_{i:x_i=1}\ket{v_1^{N,M}}\ket{i} +\sum_{i:x_i=0}\ket{v_0^{N,M}}\ket{i}
    =\frac{(N-M)^{1/4}}{2M^{1/4}}\ket{0}\left(\ket{x}+\ket{\neg x}\right)
\end{equation}
where we introduced $\ket{\neg x}=(N-M)^{-1/2}\sum_{i:x_i=0}\ket{i}$. As it does not play any role we can trace out the $\mc{W}$ register. 

\paragraph{Recovering the transducer} We now need to find a unitary $S^\circ$ that transforms
\begin{equation}
    \ket{\xi_x}\oplus\left(\mathds{1}_{\mc{W}}\otimes O_x\right)\ket{v_x}=\ket{\psi}\ket{0}+\frac{(N-M)^{1/4}}{2M^{1/4}}\left(-\ket{x}+\ket{\neg x}\right)\ket{1}\xrightarrow{S^\circ}\ket{\tau_x}\oplus\ket{v_x}=\ket{x}\ket{0}+\frac{(N-M)^{1/4}}{2M^{1/4}}\left(\ket{x}+\ket{\neg x}\right)\ket{1},
\end{equation}
where we used an additional register to distinguish the public and private spaces. On both sides the global group representation is the direct sum of the natural representation with itself
\begin{equation}
     \phi^\xi\oplus\phi^{\mc{W}}\otimes\phi^L=\phi^\tau\oplus\phi^{\mc{W}}\otimes \phi^R=\phi^N\oplus\phi^N=\phi^N\otimes\mathds{1}_2.
\end{equation}
As the natural representation is the direct sum of trivial representation and the standard decomposition, we can easily find the isotypic decomposition of the vector space to be the symmetric subspace and its orthogonal complement.
\begin{equation}
    \mc{H}\oplus\mc{L}=\left(\mathrm{Span}(\ket{\psi})\otimes \mathrm{Span}(\ket{0},\ket{1})\right)\oplus \left(\mathrm{Span}(\ket{\psi})^\perp\otimes \mathrm{Span}(\ket{0},\ket{1})\right)
\end{equation}
Then by~\cref{Corr:Niiiice}, we know that the transducer can be decomposed into two parts, acting on the different isotypic components. 
\begin{equation}\label{Eq:StructSSearch}
    S^\circ=\proj{\psi}\otimes S^\circ_\parallel+(\mathds{1}-\proj{\psi})\otimes S^\circ_\perp
\end{equation}
For a graphical representation, see \cref{fig:controlled-op-grov}.

Introducing the vector $\sqrt{N}\ket{\psi^\perp_x}\equiv \sqrt{N-M}\ket{x}-\sqrt{M}\ket{\neg x}$, and the scalar $\kappa=\frac{(N-M)^{1/4}}{2M^{1/4}},\mu=\sqrt{\frac{M}{N}},\nu=\sqrt{\frac{N-M}{N}}$, we can translate both side into their components in the isotypic decomposition
\begin{equation}
    \left\{\begin{aligned}
        \ket{\xi_x}\oplus\left(\mathds{1}_{\mc{W}}\otimes O_x\right)\ket{v_x}&=\ket{\psi}\otimes\begin{pmatrix}1\\\kappa(\nu-\mu)\end{pmatrix}+\ket{\psi_x^\perp}\otimes\begin{pmatrix}0\\-\kappa(\nu+\mu)\end{pmatrix}\\
        \xrightarrow{S^\circ}\ket{\tau_x}\oplus\ket{v_x}&=\ket{\psi}\otimes\begin{pmatrix}\mu\\\kappa(\mu+\nu)\end{pmatrix}+\ket{\psi_x^\perp}\otimes\begin{pmatrix}\nu\\\kappa(\nu-\mu)\end{pmatrix}
    \end{aligned}\right.\Rightarrow\left\{\begin{aligned}
        S^\circ_\parallel\begin{pmatrix}1\\\kappa(\nu-\mu)\end{pmatrix}&=\begin{pmatrix}\mu\\\kappa(\mu+\nu)\end{pmatrix}\\
        S^\circ_\perp\begin{pmatrix}0\\-\kappa(\nu+\mu)\end{pmatrix}&=\begin{pmatrix}\nu\\\kappa(\nu-\mu)\end{pmatrix}
    \end{aligned}\right.
\end{equation}
One can check that for the $2$-dimensional block the norms match, which ensures that $S^\circ_\parallel$ and $S^\circ_\perp$ exist. These $2\times2$ unitaries are fixed up to a phase and can be built as follows, defining the four vectors
\begin{equation}\label{Eq:DecSSearchM}
        \ket{v_\parallel}=\begin{pmatrix}1\\\kappa(\nu-\mu)\end{pmatrix},\quad\ket{w_\parallel}=\begin{pmatrix}\mu\\\kappa(\nu+\mu)\end{pmatrix},\quad\ket{v_\perp}=\begin{pmatrix}0\\-\kappa(\nu+\mu)\end{pmatrix},\quad\ket{w_\perp}=\begin{pmatrix}\nu\\\kappa(\nu-\mu)\end{pmatrix}.
\end{equation}
and $\ket{v_\parallel^\perp},\ket{w_\parallel^\perp},\ket{v_\perp^\perp},\ket{w_\perp^\perp}$ vectors orthogonal to each of them respectively, all fixed up to the phase and the norm, then the valid blocks are:
\begin{equation}
    S^\circ_\parallel=\frac{\ketbra{w_\parallel}{v_\parallel}}{\norm{w_\parallel}\norm{v_\parallel}}+\frac{\ketbra{w^\perp_\parallel}{v^\perp_\parallel}}{\norm{w^\perp_\parallel}\norm{v^\perp_\parallel}},\qquad S^\circ_\perp=\frac{\ketbra{w_\perp}{v_\perp}}{\norm{w_\perp}\norm{v_\perp}}+\frac{\ketbra{w^\perp_\perp}{v^\perp_\perp}}{\norm{w^\perp_\perp}\norm{v^\perp_\perp}}.
\end{equation}
Among these classes there is a single reflection
\begin{equation}
    S^\circ_\parallel=\mathds 1_2-2\frac{\left(\ket{v_\parallel}-\ket{w_\parallel}\right)\left(\bra{v_\parallel}-\bra{w_\parallel}\right)}{\norm{\ket{v_\parallel}-\ket{w_\parallel}}^2},\qquad S^\circ_\perp=\mathds 1_2-2\frac{\left(\ket{v_\perp}-\ket{w_\perp}\right)\left(\bra{v_\perp}-\bra{w_\perp}\right)}{\norm{\ket{v_\perp}-\ket{w_\perp}}^2},
\end{equation}
and a single rotation mapping $\ket{v_\parallel}\mapsto\ket{w_\parallel}$.
\end{proof}

\begin{figure}[!h]
    \begin{tikzpicture}
    \node (starting_line_h) at (0, 0) {};
    \node (ending_line_h) at (6.2, 0) {};
    \node (starting_line_p) at (0, -2) {};
    \node (ending_line_p) at (6.2, -2) {};

    \draw[thick] (starting_line_h) -- (ending_line_h);
    \draw[thick] (starting_line_p) -- (ending_line_p);

    \node (hilb) at (-0.5, 0) {$\HS$};
    \node (P) at (-0.5, -2) {$\mathcal{P}$};

    \node[circle, draw=black, fill=white] (controlled_u) at (4.1, 0){$\psi/\psi^\perp$};

    \node[circle, draw=black, fill=white] (controlled_oracle) at (2.1, -2){$1$};

    \draw[draw=black, fill=white, rounded corners=2pt] (3.6, -2.5) rectangle (4.6, -1.5);

    \draw[draw=black, fill=white, rounded corners=2pt] (1.6, -0.5) rectangle (2.6, 0.5);
    
    \node (control) at (4.1, -2) {$S^\circ_\parallel/S^\circ_\perp$};
    
    \node (oracle) at (2.1, 0) {$O_x$};

    \draw[thick] (controlled_u) -- (4.1,-1.5);
    \draw[thick] (controlled_oracle) -- (2.1,-0.5);

    \draw[thick, dotted] (0.7, 0.6) -- (0.7, -2.4) node[below] {$\ket{\psi_{in}}$};
    \draw[thick, dotted] (5.5, 0.6) -- (5.5, -2.4) node[below] {$\ket{\psi_{out}}$};

\end{tikzpicture}
    \caption{ Graphical representation of the action of the canonical transducer $S = (\mathds{1}_{\HS}\oplus (\mathds{1}_\mathcal{W} \otimes O_x))S^\circ$ where $S^\circ$ is given in \cref{Eq:StructSSearch}. The input vector is $\ket{\psi_{in}} = \ket{\psi}_\HS\ket{0}_{\mathcal{P}} +\ket{v_x}_{\HS}\ket{1}_{\mathcal{P}}$. Both operations act as controlled-unitaries on different subspaces of $\HS\otimes \mathcal{P}$. This is represented on the figure by the control on the register $\mathcal{P}$ (\textit{e.g.} the control qubit is $\ket{1}$ for the oracle) and the application on $\mathcal{H}$ (\textit{e.g.} the applied operator is $O_x$). The action of the transducer results in the desired output vector $\ket{\psi_{out}} = \ket{\tau_{x}}_{\HS}\ket{0}_{\mathcal{P}} + \ket{v_{x}}_{\HS}\ket{1}_{\mathcal{P}}$.}
    \label{fig:controlled-op-grov}
\end{figure}

\subsubsection{Amplitude amplification with a fixed amplitude}\label{sec:AmpFix}
In this section, we consider the amplitude amplification problem where the initial amplitude is known to be some constant $\varepsilon$. This problem has been formalized by Brassard, H{\o}yer, Mosca and Tapp~\cite{BHMT02}.

\begin{problem}[Amplitude amplification with initial amplitude $\varepsilon$]\label{prob:fixed-amp}
    Given $\HS$ the public space, $\Pi_m$ the projector on the marked space $\mc H_m\cq \Pi_m\mc H$, and $\eps\in (0,1]$ a promise on the amplitude of the overlap over the marked space, $\mathrm{Amp}_\eps$ is defined as the triple
    \begin{equation}
    \mathrm{Amp}_\eps\cq\left(\mc{X}_\eps,\mc{T}_\eps,\mc{O}_\eps\right),
\end{equation}
    over the input set $\mathcal{I}=\mc B_\eps \cq \{\ket{\xi}\in\mc{H}:\norm{\xi}=1, \norm{\Pi_m\ket{\xi}}=\varepsilon\}$ with $\mc M=\mc H$; where the where the indexed families of input states, target states and oracles are defined as
{\begin{equation}
    \mc{X}_\eps \cq\mc{B}_\eps\qquad
            \mc{T}_\eps \cq \left \{\ket{\tau_{\ket\xi}} = \eps^{-1}\Pi_m \ket{\xi}\right \}_{\ket{\xi}\in\mc{B}_\eps}\qquad
    \mc{O}_\eps \cq \left\{O_{\ket{\xi}}=\mathds{1}_{\mc M}-2\proj{\xi}\right\}_{\ket{\xi}\in\mc{B}_\eps}.
\end{equation}}
\end{problem}
Informally, \textbf{amplitude amplification with initial amplitude $\varepsilon$} is the conversion of a state $\ket{\xi}$ into its projection on a marked subspace $\mc H_m$, with at our disposal the access to a reflection around $\ket{\xi}$. Usually one would consider the reflection with respect to the marked space as an oracle to, however in term of query complexity we will consider that this oracle is free as the complexity is computed the number of calls for a single oracle. One could define an alternative version of amplitude amplification where the marked space is the input and the reflection on $\ket{\xi}$ is free and recover the same complexity. Intuitively this will also appear in our construction, as the input-independent unitary~\cref{eq:controlled-unitary} is essentially a reflection controlled by wether or not the state is in the marked space as showned in~\cref{fig:controlled-op}. We could have explicitly built decomposed the unitary as a call to the reflection on the marked space oracle setting a controlling qbit, then the reflections.

Note that even tough $\mc H$ and $\Pi_m$ are parameters of the problem, as they have no impact on the complexity we omit them in the expression $\mathrm{Amp}_\eps$ by abuse of notation.

In this section we will prove the following proposition:
\begin{proposition}\label{prop:AmpFix}
    Amplitude amplification with initial amplitude $\varepsilon$ has a transduction complexity of 
    \begin{equation}
        W(\mathrm{Amp}_\eps)=\sqrt{\frac{1-\eps^2}{4\eps^2}},
    \end{equation}
    and this complexity can be realized with $\mc{W}=\mathbb C$ and the following catalyst and input-independent unitary
    \begin{equation}\label{eq:reflMarkSpa}
        \ket{v_{\ket\xi}}=\frac{(1-\varepsilon^2)^{1/4}}{2\sqrt{\varepsilon}}\left(\ket \xi+\ket{\xi^\perp}\right),\qquad S^\circ=\Pi_m\otimes S_m+\Pi_u\otimes S^\circ_u,
    \end{equation}
    where $\Pi_u\cq\mathds{1}_{\mc H}-\Pi_m$,
    \begin{equation}
    \ket{\xi^\perp}\cq \left(-\frac{\beta}{\eps}\Pi_m + \frac{\eps}{\beta}\Pi_u\right)\ket{\xi}.
\end{equation}
    and $S^\circ_u$ and $S^\circ_m$ are $2\times2$ unitaries, fixed up to a phase.
\end{proposition}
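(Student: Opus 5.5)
The plan is to prove a matching lower bound with two well-chosen inputs, and then to verify that the proposed strongly covariant catalyst is feasible and achieves that bound. Throughout, $\beta\cq\sqrt{1-\eps^2}$ and $\kappa\cq\frac{\beta^{1/2}}{2\sqrt{\eps}}$. For $\ket\xi,\ket\zeta\in\mc B_\eps$ the kernels are $X(\xi,\zeta)=\braket{\xi|\zeta}$, $T(\xi,\zeta)=\eps^{-2}\braket{\xi|\Pi_m|\zeta}$ and $K(\xi,\zeta)=O_\xi O_\zeta$. The adversary constraint therefore reads
\begin{equation*}
\braket{\xi|\Pi_u|\zeta}-\frac{\beta^2}{\eps^2}\braket{\xi|\Pi_m|\zeta}=\braket{v_\xi|\mathds 1-\mathds 1_{\mc W}\otimes O_\xi O_\zeta|v_\zeta}.
\end{equation*}

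\textbf{Lower bound.} I restrict to the two inputs $\ket{\xi_\pm}=\eps\ket m\pm\beta\ket u$, where $\ket m\in\mc H_m$ and $\ket u\in\mc H_u$ are unit vectors. If $\mc H_u=0$ then $\eps=1$ and the bound is trivial. For this pair the left-hand side equals $-\beta^2-\beta^2=-2\beta^2$. The operator $O_{\xi_+}O_{\xi_-}$ is the identity outside $\mathrm{Span}(\xi_+,\xi_-)$. Inside that plane it is a rotation by $2\theta$, where $\cos\theta=\abs{\braket{\xi_+|\xi_-}}=\abs{\eps^2-\beta^2}$. Hence $\norm{\mathds 1-O_{\xi_+}O_{\xi_-}}=\abs{1-e^{2i\theta}}=2\sin\theta=2\sqrt{1-(\eps^2-\beta^2)^2}=4\eps\beta$, and tensoring with $\mathds 1_{\mc W}$ does not change this norm. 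Cauchy--Schwarz then gives $2\beta^2\le 4\eps\beta\norm{v_+}\norm{v_-}\le4\eps\beta\max_\pm\norm{v_\pm}^2$, so $W\ge\beta/(2\eps)$. Since the transduction complexity equals the adversary bound, this lower bound applies to every transducing algorithm, whatever $\mc W$ is.

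\textbf{Upper bound (feasibility of the catalyst).} Take $\mc W=\mathbb C$ and $\ket{v_\xi}=\kappa(\ket\xi+\ket{\xi^\perp})$. I first check two facts. First, $\braket{\xi|\xi^\perp}=-\beta\eps+\eps\beta=0$. Second, $\norm{\xi^\perp}^2=\beta^2+\eps^2=1$. Together these give $\norm{v_\xi}^2=2\kappa^2=\beta/(2\eps)$, which matches the lower bound. Because $O_\xi\ket{v_\xi}=\kappa(-\ket\xi+\ket{\xi^\perp})$, the right-hand side of the constraint collapses to
\begin{equation*}
2\kappa^2\left(\braket{\xi|\zeta^\perp}+\braket{\xi^\perp|\zeta}\right)=4\kappa^2\left(-\frac\beta\eps\braket{\xi|\Pi_m|\zeta}+\frac\eps\beta\braket{\xi|\Pi_u|\zeta}\right),
\end{equation*}
using that the operator $-\frac\beta\eps\Pi_m+\frac\eps\beta\Pi_u$ is Hermitian. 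Substituting $4\kappa^2=\beta/\eps$ reproduces the left-hand side exactly. So the catalyst is feasible and optimal. The map $\ket\xi\mapsto\ket{v_\xi}$ is built only from $\Pi_m$ and $\Pi_u$, so it commutes with every $g$ commuting with $\Pi_m$. Hence the catalyst is strongly covariant, consistent with \cref{corr:optCat}.

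\textbf{Recovering $S^\circ$.} The symmetry group consists of the unitaries commuting with $\Pi_m$, and all four representations are the defining one. With $\phi^{\mc W}$ trivial, both global representations equal $g\oplus g$ on $\mc H\otimes\mathbb C^2$. Its isotypic components are $\mc H_m\otimes\mathbb C^2$ and $\mc H_u\otimes\mathbb C^2$. These are inequivalent irreducibles, as the group contains $U(\mc H_m)\times U(\mc H_u)$. Case 3 of \cref{Corr:Niiiice} then gives the form $S^\circ=\Pi_m\otimes S^\circ_m+\Pi_u\otimes S^\circ_u$.

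Next I write $\ket\xi=\eps\ket{\hat m}+\beta\ket{\hat u}$, with $\ket{\hat m}=\Pi_m\ket\xi/\eps$ and $\ket{\hat u}=\Pi_u\ket\xi/\beta$. All the relevant vectors decompose along $\ket{\hat m}$ and $\ket{\hat u}$:
\begin{align*}
\ket{\xi^\perp}&=-\beta\ket{\hat m}+\eps\ket{\hat u},\\
\ket\xi\oplus O_\xi\ket{v_\xi}&=\ket{\hat m}\otimes(\eps,-\kappa(\eps+\beta))^T+\ket{\hat u}\otimes(\beta,\kappa(\eps-\beta))^T,\\
\ket{\tau_\xi}\oplus\ket{v_\xi}&=\ket{\hat m}\otimes(1,\kappa(\eps-\beta))^T+\ket{\hat u}\otimes(0,\kappa(\beta+\eps))^T.
\end{align*}
The required $2\times2$ blocks therefore map $(\eps,-\kappa(\eps+\beta))^T$ to $(1,\kappa(\eps-\beta))^T$, and $(\beta,\kappa(\eps-\beta))^T$ to $(0,\kappa(\beta+\eps))^T$. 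Such unitaries exist as soon as the norms agree on each block, which reduces to $\eps^2+\kappa^2(\eps+\beta)^2=1+\kappa^2(\eps-\beta)^2$, that is $4\kappa^2\eps\beta=\beta^2$, true by the choice of $\kappa$. The second block's norm condition is the same identity. Each block is then fixed up to a phase, and I would construct it as in the proof of \cref{prop:searchM}.

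The main obstacle I expect is the lower bound, since it requires choosing the right pair of inputs. Taking the two inputs that are mirror images through $\mc H_m$ makes the rotation angle of $O_{\xi_+}O_{\xi_-}$ exactly $2\theta$ with $\sin\theta=2\eps\beta$, which is what produces a tight bound.
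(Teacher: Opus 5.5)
Your proof is correct. The lower bound and the recovery of $S^\circ$ follow the paper closely: you use the same mirror pair $\eps\ket m\pm\beta\ket u$, the same norm $\norm{\mathds 1-K(+,-)}=4\eps\beta$, and the same splitting $S^\circ=\Pi_m\otimes S^\circ_m+\Pi_u\otimes S^\circ_u$ with identical $2\times2$ block equations. You obtain the norm from the rotation angle of a product of two reflections, whereas the paper reads it off an explicit $2\times2$ matrix.

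There are two genuine differences.

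\textbf{Closing the lower bound.} You finish with $\norm{v_+}\norm{v_-}\le\max_\pm\norm{v_\pm}^2$. The paper instead invokes weak covariance, via an $\mathcal S_2$ symmetry, to equate the two norms. Your step holds for every algorithm without symmetrizing first, so it is slightly cleaner.

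\textbf{Obtaining the catalyst.} The paper \emph{derives} it: it imposes strong covariance, uses \cref{Corr:SpaceRed} to confine $\ket{v_{\ket\xi}}$ to $\mc W\otimes\mathrm{Span}(\ket\xi,\ket{\xi^\perp})$, and uses transitivity of the group on $\mc B_\eps$ to make the two coefficient vectors input-independent. It then solves $\min\norm{v^\parallel}^2+\norm{v^\perp}^2$ subject to $\Re\braket{v^\parallel|v^\perp}=\beta/(4\eps)$. You instead \emph{verify} the given catalyst directly. Your key observation is that $\ket{\xi^\perp}=A\ket\xi$ with $A=-\frac\beta\eps\Pi_m+\frac\eps\beta\Pi_u$ Hermitian, so the right-hand side collapses to $4\kappa^2\braket{\xi|A|\zeta}$.

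Your route is shorter and self-contained. It also avoids the representation-theoretic machinery, which for $U(\mc H_m)\oplus U(\mc H_u)$ in infinite dimension lies outside the compact-group setting the paper assumes. The paper's route explains where the catalyst comes from, which is the purpose of the section.

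Your appeal to case 3 of \cref{Corr:Niiiice} is likewise only motivational. You then write down the two blocks explicitly, and the block-diagonal operator visibly performs the required map. So the existence of $S^\circ$ does not depend on checking that corollary's hypotheses.
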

This complexity is similar to the one found by Brassard, H{\o}yer, Mosca and Tapp~\cite{BHMT02} or Long~\cite{Lon01}. The constant is different as we consider Las Vegas complexity. Concurrently, this catalyst was also discovered by Cornelissen, Apers and Gribling~\cite{cornelissenImprovedQuantumVolume2026} and an alternative one has been built by Apers, Roland and Zhang~\cite{apersElfsTransducersQuantum2026}, with a higher complexity.
\begin{proof}
~\paragraph{Reformulation of the problem} In the following, we will extensively use an alternative formulation of the input set, where we decompose each vector into their marked and unmarked parts $\mc B_\eps\approx\{ (\ket{m},\ket{u}) \in \HS_m \times \HS_u\}$, where $\mc H_u\cq\mc H_m^\perp$ and $\Pi_u\cq\mathds 1_{\mc H}-\Pi_m$; with the natural equivalence
\begin{equation}
    \begin{cases}
        \ket{m}\cq \varepsilon^{-1}\Pi_m \ket{\xi}\\
        \ket{u}\cq \beta^{-1}\Pi_u \ket{\xi}
    \end{cases}\iff\ket{\xi} = \varepsilon\ket{m} + \beta\ket{u},
\end{equation}
where $\beta=\sqrt{1-\varepsilon^2}$. In the following we will use this bijection implicitly and consider, for a pair of input vectors $\ket{\xi},\ket{\xi'}$, that
\begin{equation}
        \ket{\xi}=\eps\ket{m}+\beta\ket{u} 
        \quad\text{and}\quad 
        \ket{\xi'}=\eps\ket{m'}+\beta\ket{u'}.
\end{equation}
We can develop the initial and final Gram kernels\footnote{We will not use the word matrix in this context as the input set is not discrete, its elements take value over a manifold.} and the operator-valued kernel using this representation to get
\begin{equation}
    \begin{cases}
        X(\xi,\xi')=\varepsilon^2 \braket{m|m'} + \beta^2\braket{u|u'}\\
        T(\xi,\xi')=\braket{m|m'}\\
        K(\xi,\xi')=\mathds{1}- 2(\proj{\xi} +\proj{\xi'}) + 4X(\xi,\xi') \ketbra{\xi}{\xi'} 
    \end{cases}
    \quad \forall \ket{\xi},\ket{\xi'} \in \mathcal{I}.
\end{equation}
If we denote the catalyst of a given input state $\ket{\xi}$ as $\ket{v_\xi}$, we find the following formulation of the problem
\begin{equation}
    \operatorname{W}=\min_{\ket{v_{\ket{\xi}}}\in\mc{L}}\max_{\ket{\xi}\in\mc B_\eps}\norm{v_{\ket{\xi}}}^2\qquad\text{S.T.}\qquad \beta^2(\braket{u|u'} - \braket{m|m'}) =  \braket{v_{\ket{\xi}} | \mathds{1} - (\mathds{1} \otimes O^\dagger_{\ket{\xi}} O_{\ket{\xi}})|v_{\ket{\xi'}}}\quad \forall \ket{\xi},\ket{\xi'} \in \mathcal{I}.
\end{equation}
\paragraph{Lower bound on the complexity} We could proceed as for search and now look for an optimal catalyst assuming it is weakly covariant, but we will take a different approach. We will find a lower bound on the complexity of the problem by characterizing a subproblem over a subset of inputs, then directly assume strong covariance on the catalyst and verify that it saturates the lower bound. Let us thus consider the reduced input set
\begin{equation}
    \mc{I}_2=\left\{\ket{+} = \varepsilon\ket{m} + \beta\ket{u}, \quad \ket{-} =\varepsilon\ket{m} - \beta\ket{u}\right\},
\end{equation}
for some arbitrary $(\ket{m},\ket{u})\in\mc B_\eps$. On this subset of inputs the kernels are
\begin{equation}
        X=\begin{pmatrix}
            1 & 1 - 2\beta^2\\
            1 - 2\beta^2 & 1
        \end{pmatrix} \qquad
        T =\begin{pmatrix}
            1 &1\\
            1 & 1
        \end{pmatrix}\qquad
        K(+,-)= -4\varepsilon \beta\begin{pmatrix}
            2\varepsilon\beta & \varepsilon^2-\beta^2\\
            \beta^2-\varepsilon^2 & 2\varepsilon \beta
       \end{pmatrix}+\mathds{1}
\end{equation}
and the problem reduces to
\begin{equation}
    \operatorname{W}=\min_{\ket{v_+},\ket{v_-},\in\mc{L}}\max_{\pm}\norm{v_\pm}^2\qquad\text{S.T.}\qquad -2\beta^2=  \braket{v_{+} | \mathds{1} -K(+,-)|v_{-}}.
\end{equation}
One can directly see that the spectral norm is
\begin{equation}
    \norm{\mathds{1} -K(+,-)}_\infty=4\beta \varepsilon
\end{equation}
and therefore we can deduce from the constraint the following inequality
\begin{equation}
    2\beta^2\leq 4\beta \varepsilon\norm{v_+}\norm{v_-}\Rightarrow\norm{v_+}\norm{v_-}\geq\frac{\beta}{2\varepsilon}
\end{equation}
by weak covariance we know that $\ket{v_+}$ and $\ket{v_-}$ are related by a unitary representing the group $\mc{S}_2$, and therefore have the same norm $\norm{v_+}=\norm{v_-}=\norm{v}$, and thus
\begin{equation}
    \operatorname{W}=\norm{v}^2\geq\sqrt{\frac{1-\varepsilon^2}{4\varepsilon^2}}.
\end{equation}
This lower bound matches the one for search and is the one we expect.

\paragraph{Symmetry group of the problem} The symmetry group of the problem consists in the group of operators that commutes with $\Pi_m$, and thus the group of unitaries acting separately on $\HS_m$ and $\HS_u$, explicitly  $\mc{V} = \{V \in U(\mathcal{M}): [V, \Pi_m] = 0\}=U(\mc{H}_m)\oplus U(\mc{H}_u)$ acting with the defining representation:
\begin{equation}\label{Eq:SymAmpAmp}
    \phi^\xi=\phi^\tau=\phi^R=\phi^L=\phi^d\quad\text{S.T.} \quad\phi^d_U\ket{\xi}=U\ket{\xi}.
\end{equation}
We note the catalyst of a given input state $\ket{\xi}$ of the problem as $\ket{v_{\ket{\xi}}}$ and we will directly assume it is strongly covariant. This means for all $\ket{\xi} \in \mathcal{I}$ and all element $V \in \mathcal{V}$,
\begin{equation}
    \ket{v_{V\ket{\xi}}} = \mathds{1} \otimes V\ket{v_{\ket{\xi}}}.
\end{equation}
\paragraph{Optimal catalyst} By \cref{Corr:SpaceRed}, we know that the catalyst should live in the space stabilized by the stabilizer of $\ket{\xi}$ in $\mc{V}$, which is the direct sum of the stabilizers of $\ket{m}$ in $U(\mc{H}_m)$ and $u$ in $U(\mc{H}_u)$. The subspace of $\mc{L}$ that is stabilized by this subgroup is trivially $\mc{W}\otimes\mathrm{Span}(\ket{m},\ket{u})$. We can thus decompose 
\begin{equation}\label{Eq:AmpAmpDecCat}
    \ket{v_{\ket{\xi}}} = \ket{v^{\parallel}_{\ket{\xi}}}\ket{\xi} + \ket{v_{\ket{\xi}}^\perp} \ket{\xi^\perp}
\end{equation}
for some vectors $\ket{v^{\parallel}_{\ket{\xi}}}, \ket{v_{\ket{\xi}}^\perp} \in \mathcal{W}$ with
\begin{equation}\label{Eq:defXiPerp}
    \ket{\xi^\perp}\cq -\beta\ket{m} + \varepsilon\ket{u}.
\end{equation}
As, for all $\ket{\xi},\ket{\xi'}\in\mc B_\eps$, there exists a unitary $V\in\mc{V}$ such that $\ket{\xi'}=V\ket{\xi}$, the input set is a single orbit and we can use strong covariance to relate the two parts $\ket{v^{\parallel}_{\ket{\xi}}}$ and $ \ket{v_{\ket{\xi}}^\perp}$ and find
\begin{equation}
    \begin{cases}
        \ket{v^\parallel}\cq\ket{v^{\parallel}_{\ket{\xi}}}=\ket{v^{\parallel}_{\ket{\xi'}}}\\
        \ket{v^\perp}\cq\ket{v^{\perp}_{\ket{\xi}}}=\ket{v^{\perp}_{\ket{\xi'}}}\
    \end{cases}\Rightarrow\ket{v_{\ket{\xi}}}=\ket{v^{\parallel}}\ket{\xi}+\ket{v^\perp}\ket{\xi^\perp}.
\end{equation}
The action of the oracle on the catalyst consists of a reflection around the input state $\ket{\xi}$. Thus, we have 
\begin{equation}
O_\xi\ket{v_\xi} = -\ket{v^\parallel}\ket{\xi} + \ket{v^\perp} \ket{\xi^\perp}
\end{equation}
And as a result we can find a simplified expression for the problem
\begin{equation}
     \operatorname{W}=\min_{\ket{v^\parallel},\ket{v^\perp}\in\mc{W}}\norm{v^\perp}^2+\norm{v^\parallel}^2\qquad\text{S.T.}\qquad \beta^2(\braket{u|u'} - \braket{m|m'}) =  2(\braket{v^\parallel| v^\perp} \braket{\xi|\xi'^\perp} + \braket{v^{\perp}| v^{\parallel}} \braket{  \xi^\perp | \xi'}).
\end{equation}
Knowing the scalar product
\begin{equation}
        \braket{\xi | \xi'^\perp}=\braket{\xi^\perp | \xi'} = \beta \varepsilon
        \left(\braket{u|u'}-\braket{m|m'} \right)
\end{equation}
the problem is now
\begin{equation}
    \operatorname{W}=\min_{\ket{v^\parallel},\ket{v^\perp}\in\mc{W}}\norm{v^\perp}^2+\norm{v^\parallel}^2\qquad\text{S.T.}\qquad\frac{\beta}{4\varepsilon} = \Re(\braket{v^\parallel|v^\perp}).
\end{equation}
This is a two-dimensional optimization problem whose solution is
\begin{equation}
        \ket{v^\parallel}=\ket{v^\perp}=\sqrt{\frac{\beta}{4\varepsilon}}\ket{0},
\end{equation}
for some arbitrary normal vector $\ket{0}$. The corresponding optimal value is $\operatorname{W}=\frac{\beta}{2\varepsilon}=\sqrt{\frac{1-\eps^2}{4\eps^2}}$, which is the awaited complexity for amplitude amplification, and the optimal catalyst is
\begin{equation}\label{eq:cata-amp}
    \ket{v_{\ket{\xi}}} = \sqrt{\frac{\beta}{4\varepsilon}} (\ket{\xi} + \ket{\xi^\perp}),
\end{equation}
dropping the useless register $\ket{0}$.

\paragraph{Recovering the transducer} We now need to find a unitary $S^\circ$ that converts
\begin{equation}
    \ket{\xi}\oplus\left(\mathds{1}_{\mc{W}}\otimes O_{\ket{\xi}}\right)\ket{v_{\ket{\xi}}}=\ket{\xi}\ket{0}+\sqrt{\frac{\beta}{4\varepsilon}}\left(-\ket{\xi}+\ket{\xi^\perp}\right)\ket{1}\xrightarrow{S^\circ}\ket{\tau_{\ket{\xi}}}\oplus\ket{v_{\ket{\xi}}}=\ket{m}\ket{0}+\sqrt{\frac{\beta}{4\varepsilon}}\left(\ket{\xi}+\ket{\xi^\perp}\right)\ket{1},
\end{equation}
where we used an ancilla register $\mc P$to distinguish the public and private spaces. Similarly to search, we once again see that on both sides the representations are the same, the doubling of the defining representation:
\begin{equation}
     \phi^\xi\oplus\phi^{\mc{W}}\otimes\phi^L=\phi^\tau\oplus\phi^{\mc{W}}\otimes \phi^R=\phi^N\oplus\phi^N=\phi^d\otimes\mathds{1}_2.
\end{equation}
The isotypic decomposition of the Hilbert space simply separates the marked and unmarked spaces
\begin{equation}
    \mc{H}\oplus\mc{L}=\left(\mc{H}_m\otimes \mathrm{Span}(\ket{0},\ket{1})\right)\oplus \left(\mc{H}_u(\ket{\psi})\otimes \mathrm{Span}(\ket{0},\ket{1})\right)
\end{equation}
Once again, using~\cref{Corr:Niiiice}, the transducer (see \cref{fig:controlled-op}) can be decomposed into a part acting on each side
\begin{equation}\label{eq:controlled-unitary}
    S^\circ=\Pi_m\otimes S^\circ_m+\Pi_u\otimes S^\circ_u
\end{equation}
Introducing the scalar $\kappa=\sqrt{\frac{\beta}{4\varepsilon}}$, we can explicitly separate the action of the unitary
\begin{equation}
    \left\{\begin{aligned}
        \ket{\xi}\oplus\left(\mathds{1}_{\mc{W}}\otimes O_{\ket{\xi}}\right)\ket{v_{\ket{\xi}}}&=\ket{m}\otimes\begin{pmatrix}\varepsilon\\-\kappa(\beta+\varepsilon)\end{pmatrix}+\ket{u}\otimes\begin{pmatrix}\beta\\\kappa(\varepsilon-\beta)\end{pmatrix}\\
        \xrightarrow{S^\circ}\ket{\tau_{\ket{\xi}}}\oplus\ket{v_{\ket{\xi}}}&=\ket{m}\otimes\begin{pmatrix}1\\\kappa(\varepsilon-\beta)\end{pmatrix}+\ket{u}\otimes\begin{pmatrix}0\\\kappa(\varepsilon+\beta)\end{pmatrix}
    \end{aligned}\right.\Rightarrow\left\{\begin{aligned}
        S^\circ_m\begin{pmatrix}\varepsilon\\-\kappa(\beta+\varepsilon)\end{pmatrix}&=\begin{pmatrix}1\\\kappa(\varepsilon-\beta)\end{pmatrix}\\
        S^\circ_u\begin{pmatrix}\beta\\\kappa(\varepsilon-\beta)\end{pmatrix}&=\begin{pmatrix}0\\\kappa(\varepsilon+\beta)\end{pmatrix}
    \end{aligned}\right.
\end{equation}
These equations are totally equivalent to those found for search, \cref{Eq:DecSSearchM}, and we will therefore be able to directly translate the conclusions. One can check that for the $2$-dimensional block the norms match, which ensures that $S^\circ_m$ and $S^\circ_u$ exist. Once again these $2\times2$ unitaries are fixed up to a phase and we can build them explicitly by defining the vectors
\begin{equation}
        \ket{v_m}=\begin{pmatrix}s\\-\kappa(\beta+\varepsilon)\end{pmatrix},\quad\ket{w_m}=\begin{pmatrix}1\\\kappa(\varepsilon-\beta)\end{pmatrix},\quad\ket{v_u}=\begin{pmatrix}\beta\\\kappa(\varepsilon-\beta)\end{pmatrix},\quad\ket{w_u}=\begin{pmatrix}0\\\kappa(\varepsilon+\beta)\end{pmatrix}.
\end{equation}
and $\ket{v_m^\perp},\ket{w_m^\perp},\ket{v_u^\perp},\ket{w_u^\perp}$ vectors orthogonal to each of them respectively, all fixed up to the phase and the norm, then the valid block unitaries are:
\begin{equation}
    S^\circ_m=\frac{\ketbra{w_m}{v_m}}{\norm{w_m}\norm{v_m}}+\frac{\ketbra{w^\perp_m}{v^\perp_m}}{\norm{w^\perp_m}\norm{v^\perp_m}},\qquad S^\circ_u=\frac{\ketbra{w_u}{v_u}}{\norm{w_u}\norm{v_u}}+\frac{\ketbra{w^\perp_u}{v^\perp_u}}{\norm{w^\perp_u}\norm{v^\perp_u}}.
\end{equation}
Among these classes there is a single reflection
\begin{equation}
    S^\circ_m=\mathds 1_2-2\frac{\left(\ket{v_m}-\ket{w_m}\right)\left(\bra{v_m}-\bra{w_m}\right)}{\norm{\ket{v_m}-\ket{w_m}}^2},\qquad S^\circ_u=\mathds 1_2-2\frac{\left(\ket{v_u}-\ket{w_u}\right)\left(\bra{v_u}-\bra{w_u}\right)}{\norm{\ket{v_u}-\ket{w_u}}^2},
\end{equation}
and a single rotation mapping $\ket{v_m}\mapsto\ket{w_m}$.
\end{proof}

\begin{figure}[!h]
    \begin{tikzpicture}
    \node (starting_line_h) at (0, 0) {};
    \node (ending_line_h) at (6.2, 0) {};
    \node (starting_line_p) at (0, -2) {};
    \node (ending_line_p) at (6.2, -2) {};

    \draw[thick] (starting_line_h) -- (ending_line_h);
    \draw[thick] (starting_line_p) -- (ending_line_p);

    \node (hilb) at (-0.5, 0) {$\HS$};
    \node (P) at (-0.5, -2) {$\mathcal{P}$};

    \node[circle, draw=black, fill=white] (controlled_u) at (4.1, 0){$m/u$};

    \node[circle, draw=black, fill=white] (controlled_oracle) at (2.1, -2){$1$};

    \draw[draw=black, fill=white, rounded corners=2pt] (3.6, -2.5) rectangle (4.6, -1.5);

    \draw[draw=black, fill=white, rounded corners=2pt] (1.6, -0.5) rectangle (2.6, 0.5);
    
    \node (control) at (4.1, -2) {$S^\circ_m/S^\circ_u$};
    
    \node (oracle) at (2.1, 0) {$O_x$};

    \draw[thick] (controlled_u) -- (4.1,-1.5);
    \draw[thick] (controlled_oracle) -- (2.1,-0.5);

    \draw[thick, dotted] (0.7, 0.6) -- (0.7, -2.4) node[below] {$\ket{\psi_{in}}$};
    \draw[thick, dotted] (5.5, 0.6) -- (5.5, -2.4) node[below] {$\ket{\psi_{out}}$};

\end{tikzpicture}
    \caption{Graphical representation of the action of the canonical transducer $S = (\mathds{1}_{\HS}\oplus (\mathds{1}_\mathcal{W} \otimes O_{\ket{\xi}}))S^\circ$ where $S^\circ$ is given in \cref{eq:controlled-unitary}. The input vector is $\ket{\psi_{in}} = \ket{\xi}_\HS\ket{0}_{\mathcal{P}} +\ket{v_{\ket{\xi}}}_{\HS}\ket{1}_{\mathcal{P}}$. Both operations act as controlled-unitaries on different subspaces of $\HS\otimes \mathcal{P}$. This is represented in the figure by the control on the register $\mathcal{P}$ (\textit{e.g.} the control qubit is $\ket{1}$ for the oracle) and the application on $\mathcal{H}$ (\textit{e.g.} the applied operator is $O_x$). The action of the transducer results in the desired output vector $\ket{\psi_{out}} = \ket{\tau_{\ket{\xi}}}_{\HS}\ket{0}_{\mathcal{P}} + \ket{v_{\ket{\xi}}}_{\HS}\ket{1}_{\mathcal{P}}$.}
    \label{fig:controlled-op}
\end{figure}

\subsection{Unstructured Search and Amplitude Amplification on Multiple Orbits}
In this section we consider variations of the two problems presented in the last section, where we replace the promises of respectively a fixed number of marked elements and the amplitude on the marked space, by a lower bound. We simplify algebraically the optimization problem given the particular values of the kernels. We then identify the symmetry group of the problem and describe its actions, in particular we will identify that the input set can be seen as the sum of different orbits of the group. 

\subsubsection{Unstructured search with \texorpdfstring{$\geq M$}{>= M} marked elements}\label{sec:searchgeqM}
Unstructured search with $\geq M$ marked elements is defined as
\begin{problem}[Unstructured search with $\geq M$ marked elements]\label{prob:Unfixed_search}
Given $N\in\mathbb{N}$ the number of elements, and $1\leq M\leq N$ a promise on the number of marked elements, $\mathrm{Search}^N_{\geq M}$ is defined as the triple
\begin{equation}
    \mathrm{Search}^N_{\geq M}\cq\left(\mc{X}_{\geq M}^N,\mc{T}_{\geq M}^N,\mc{O}_{\geq M}^N\right),
\end{equation}
over the input set $\mc{I}=\mc{B}^N_{\geq M}\cq\{x\in\{0,1\}^N:\abs{x}\geq M\}$, and the Hilbert spaces $\mc{H}=\mc{M}=\mathbb{C}^N$; where the indexed families of input states, target states and oracles are defined as
{\footnotesize\begin{equation}
\mc{X}_{\geq M}^N \cq \left\{\ket{\xi_x}=\ket{\psi}\cq\frac{1}{\sqrt{N}}\sum_{i=1}^N\ket{i}\right\}_{x\in \mc{B}^N_{\geq M}}
            \mc{T}_{\geq M}^N \cq \left \{\ket{\tau_x} = \ket{x}\cq \frac{1}{\sqrt{\abs{x}}} \sum_{i:x_i=1} \ket{i}\right \}_{x\in \mc{B}^N_{\geq M}}
    \mc{O}_{\geq M}^N \cq \left\{O_x=\mathds{1}_{\mc M}-2\sum_{i:x_i=1}\proj{i}\right\}_{x\in \mc{B}^N_{\geq M}},
\end{equation}}
where $\{\ket i\}_{1\leq i\leq N}$ is the canonical basis of $\mathbb C^N$.
\end{problem}
It is essentially the same problem as $\mathrm{Search}_M^N$, but now the promise is that at least $M$ elements are marked. We will also recover the same query complexity, and the catalyst will match when the Hamming weight is equal to $M$.
\begin{proposition}\label{prop:searchgeqm}
    Unstructured search with $\geq M$ marked elements has a transduction complexity of 
    \begin{equation}
        W\left(\mathrm{Search}^N_{\geq M}\right)=\sqrt{\frac{N-M}{4M}},
    \end{equation}
    and this complexity can be realized with $\mc{W}=\mathbb C^{N+1-M}$, the following catalyst and input-independent unitary
    \begin{equation}
        \ket{v_x}= \frac{(N-M)^{1/4}}{2|x|^{1/4}}\ket{u_{|x|}}\otimes \left(\ket{x} + \sqrt{\frac{N-|x|}{N-M}} \ket{\neg x}\right),
        \qquad
         S^\circ=\proj{\psi}\otimes S^\circ_\parallel+\left(\mathds{1}-\proj{\psi}\right)\otimes S^\circ_\perp,
    \end{equation}
    where $S^\circ_{\parallel}$, $S^\circ_{\perp}$ are $(N+2-M)\times (N+2-M)$ unitaries fixed respectively up to a phase and a $2\times 2$ unitary, and $\ket{u_{|x|}}$ are unit vectors such that
    \begin{equation}
        \braket{u_{h}|u_{h'}} = \frac{2h^{1/4}{h'}^{1/4}}{\sqrt{h} + \sqrt{h'}}, \quad \forall M\leq h,h'\leq N.
    \end{equation}
\end{proposition}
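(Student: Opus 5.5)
The plan is to prove matching lower and upper bounds, and then to read off $S^\circ$ from \cref{Corr:Niiiice}.

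\emph{Lower bound.} Restricting the input set can only decrease the value of the program~\cref{Eq:DefAdv}, since a feasible family of catalysts for $\mathcal{B}^N_{\geq M}$ restricts to a feasible family on $\mathcal{B}^N_M\subseteq\mathcal{B}^N_{\geq M}$. On $\mathcal{B}^N_M$ the states, targets and oracles of $\mathrm{Search}^N_{\geq M}$ coincide with those of $\mathrm{Search}^N_M$. Hence \cref{prop:searchM} gives $W(\mathrm{Search}^N_{\geq M})\geq\sqrt{(N-M)/(4M)}$.

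\emph{Upper bound: feasibility.} I will check that the proposed catalyst is feasible. It is strongly covariant, because $\ket{u_{|x|}}$ depends only on the orbit (Hamming weight) and $\ket{x},\ket{\neg x}$ transform under $\phi^N$. Write $h=|x|$, $h'=|y|$, $a=|x\wedge y|$ and $c_h=(N-M)^{1/4}/(2h^{1/4})$. In the decomposition $\ket{v_x}=\sum_i\ket{v_{x,i}}\ket{i}$ used for $\mathrm{Search}^N_M$, the components are
\begin{equation}
\ket{v_{x,i}}=\frac{c_h}{\sqrt h}\ket{u_h}\ \text{ if } x_i=1,\qquad \ket{v_{x,i}}=\frac{c_h}{\sqrt{N-M}}\ket{u_h}\ \text{ if } x_i=0 .
\end{equation}
The constraint reads $1-a/\sqrt{hh'}=2\sum_{i:x_i\neq y_i}\braket{v_{x,i}|v_{y,i}}$. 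There are $h-a$ indices with $x_i=1,y_i=0$ and $h'-a$ indices with $x_i=0,y_i=1$. The right-hand side is therefore
\begin{equation}
2c_hc_{h'}\braket{u_h|u_{h'}}\frac{1}{\sqrt{N-M}}\left(\frac{h-a}{\sqrt h}+\frac{h'-a}{\sqrt{h'}}\right).
\end{equation}
With the prescribed $\braket{u_h|u_{h'}}=2(hh')^{1/4}/(\sqrt h+\sqrt{h'})$ this simplifies to
\begin{equation}
\frac{\sqrt h+\sqrt{h'}-a\left(1/\sqrt h+1/\sqrt{h'}\right)}{\sqrt h+\sqrt{h'}}=1-\frac{a}{\sqrt{hh'}},
\end{equation}
as required.

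\emph{Upper bound: cost.} The norm is
\begin{equation}
\norm{v_x}^2=c_h^2\left(1+\frac{N-h}{N-M}\right)=\frac{\sqrt{N-M}}{4\sqrt h}\cdot\frac{2N-M-h}{N-M}.
\end{equation}
This is decreasing in $h$, so its maximum is attained at $h=M$, where it equals $\sqrt{(N-M)/(4M)}$. This matches the lower bound.

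\emph{Main obstacle: realizing the vectors $\ket{u_h}$.} The step I expect to be the main obstacle is showing that unit vectors $\ket{u_h}$ with the prescribed inner products exist, i.e.\ that the kernel $k(h,h')=2(hh')^{1/4}/(\sqrt h+\sqrt{h'})$ on $\{M,\dots,N\}$ is positive semidefinite. I will use
\begin{equation}
\frac{1}{\sqrt h+\sqrt{h'}}=\int_0^\infty e^{-t\sqrt h}e^{-t\sqrt{h'}}\,dt ,
\end{equation}
which exhibits $k$ as a Gram kernel of the functions $t\mapsto\sqrt2\,h^{1/4}e^{-t\sqrt h}$ in $L^2(\mathbb R_+)$. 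It has unit diagonal, since $k(h,h)=1$. Since there are only $N+1-M$ weights, the span of these functions has dimension at most $N+1-M$, so the vectors embed in $\mathcal W=\mathbb C^{N+1-M}$.

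\emph{Recovering the transducer.} Once the Gram matrices of $\ket{\xi_x}\oplus(\mathds 1\otimes O_x)\ket{v_x}$ and $\ket{\tau_x}\oplus\ket{v_x}$ match, a unitary $S^\circ$ exists, as argued before the fundamental relation $\operatorname{Adv}=W$. Since $\mathcal S_N$ acts trivially on $\mathcal W$ and by $\phi^N$ on $\mathcal H$ and $\mathcal M$, both sides carry the same representation $\phi^N\otimes\mathds 1_{1+(N+1-M)}$. Point 3 of \cref{Corr:Niiiice} then splits $S^\circ$ into two blocks. The first acts on $\ket\psi\otimes(\mathbb C\oplus\mathcal W)$, of dimension $N+2-M$, and is fixed up to a phase on the relevant span. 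The second acts on the standard-representation multiplicity space; there the input vectors span a two-dimensional space (the components along $\ket{\psi^\perp_x}$), which is why $S^\circ_\perp$ is fixed only up to a $2\times2$ unitary. The explicit blocks are then built exactly as in the proof of \cref{prop:searchM}.
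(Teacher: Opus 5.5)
Most of your argument is correct and uses the paper's ingredients: the lower bound by restriction to $\mathcal B^N_M$, the check $\frac{1}{\sqrt h+\sqrt{h'}}\bigl(\frac{h-a}{\sqrt h}+\frac{h'-a}{\sqrt{h'}}\bigr)=1-\frac{a}{\sqrt{hh'}}$, the cost formula and its monotonicity in $h$, and the realization $u_h(t)=\sqrt2\,h^{1/4}e^{-\sqrt h\,t}$. The only difference is that you verify the stated catalyst directly. The paper instead derives it from a strong-covariance ansatz: it solves for $\braket{v^{N,h}_1|v^{N,h'}_0}$, assumes aligned components with real overlap and an $h$-independent scale $\alpha$, and optimizes $\alpha$. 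Your verification is cleaner, since those ansatz steps are only justified a posteriori by matching the lower bound anyway.

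The gap is in your last paragraph, where you justify the $2\times2$ freedom of $S^\circ_\perp$. You claim the input vectors span a two-dimensional subspace of the standard-representation multiplicity space $\mathbb C\oplus\mathcal W\cong\mathbb C^{N+2-M}$. That is false. If it were true, $S^\circ_\perp$ would be free on an $(N-M)$-dimensional complement, not a $2$-dimensional one.

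Here is the correct count. Write $\mu_h=\sqrt{h/N}$, $\nu_h=\sqrt{(N-h)/N}$ and $\lambda_h=\sqrt{(N-h)/(N-M)}$, so that $\ket x=\mu_h\ket\psi+\nu_h\ket{\psi^\perp_x}$ and $\ket{\neg x}=\nu_h\ket\psi-\mu_h\ket{\psi^\perp_x}$. The input vector for $|x|=h$ is then $\ket\psi\otimes(1,p_h\ket{u_h})+\ket{\psi^\perp_x}\otimes(0,q_h\ket{u_h})$, with $p_h=c_h(\lambda_h\nu_h-\mu_h)$ and $q_h=-c_h(\nu_h+\lambda_h\mu_h)$. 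Here $q_h\neq0$ for $h<N$, and there is no $\psi^\perp$ part at $h=N$.

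So $S^\circ_\perp$ must map the $N-M$ vectors $(0,q_h\ket{u_h})$, $M\le h\le N-1$. These are linearly independent, because distinct exponentials $e^{-\sqrt h\,t}$ are. Hence $S^\circ_\perp$ is fixed on an $(N-M)$-dimensional span and free on a $2$-dimensional complement, spanned by the public direction and one direction of $\mathcal W$. That complement is the actual source of the $2\times2$ unitary.

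Likewise, the $N+1-M$ vectors $(1,p_h\ket{u_h})$, $M\le h\le N$, are linearly independent. At most one $p_h$ vanishes, and the first coordinate then kills any remaining relation. This leaves a one-dimensional complement, hence the phase for $S^\circ_\parallel$.

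For the same reason, the blocks cannot be built \emph{exactly as in} \cref{prop:searchM}, where each block maps a single vector of $\mathbb C^2$. Here you need the isometry on the span fixed by the matching Gram matrices, completed by an arbitrary unitary between the complements. This is the paper's $S^\circ_\sigma=W^\sigma V^{\sigma+}+Q^\sigma_W U^\sigma Q^{\sigma\dagger}_V$.
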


\begin{proof}
~\paragraph{Simplification of the optimization program} The initial and final Gram matrices, and the operator-valued kernel respectively are
\begin{equation}
    \begin{cases}
        X(x,y)=1\\
        T(x,y)=\frac{\abs{x\land y}}{\sqrt{\abs{x}\abs{y}}}\\
        K(x,y)=O_x^\dagger O_y=O_{x\oplus y}
    \end{cases}
\end{equation}
such that one can write the optimization program~\eqref{Eq:DefAdv} for this particular problem as
\begin{equation}
    \operatorname{W}=\min_{\ket{v_x}\in\mc{L}}\max_{x\in\mc{B}^N_{\geq M}}\norm{v_x}^2\qquad\text{s.t.}\quad 1-\frac{\abs{x\land y}}{\sqrt{\abs{x}\abs{y}}}=\braket{v_x|\mathds{1}_{\mc{L}}-\mathds{1}_{\mc{W}}\otimes O_{x\oplus y}|v_y}\quad\forall x,y\in\mc{B}^N_{\geq M}
\end{equation}
We can decompose the catalyst in the usual basis of $\mc{M}$ such that $\ket{v_x}=\sum_{i=1}^N\ket{v_{x,i}}\ket{i}$, with $\ket{v_{x,i}}\in\mc{W}$ and simplify further the equation
\begin{equation}
    \operatorname{W}=\min_{\ket{v_{x,i}}\in\mc{W}}\max_{x\in\mc{B}^N_{\geq M}}\sum_{i=1}^N\norm{v_{x,i}}^2\qquad\text{s.t.}\quad 1-\frac{\abs{x\land y}}{\sqrt{\abs{x}\abs{y}}}=2\sum_{i:x_i\oplus y_i=1}\braket{v_{x,i}|v_{y,i}}\quad\forall x,y\in\mc{B}^N_{\geq M}
\end{equation}

\paragraph{The symmetry group} of the problem is the symmetric group $\mc{S}_N$. Its action on the bit strings is the standard permutation of the indices, and the representations $\phi^\xi, \phi^\tau, \phi^L$ and $\phi^R$ are all the standard representation\footnote{There is a bit of ambiguity for $\phi^\xi$ that could instead be taken as trivial representation since $\ket{\xi_x}=\ket{\psi}$ does not depend on $x$. This will not affect what follows.}
\begin{equation}
    \phi^S_g\ket{i}=\ket{g(i)}.
\end{equation}
The input set can be divided into orbits representing the bitstrings with fixed Hamming weight
\begin{equation}
    \mc{I}=\mc{B}_{\geq M}^N=\bigcup_{h\geq M}\mc{B}_h^N
\end{equation}

\paragraph{Lower bound on the complexity.} Restricting the input set to the set $\mc{B}_M^N$ of bit strings of $n$ bits with Hamming weight $M$ leads to \cref{prob:Fixed_search}. This implies that the optimal transduction complexity is lower bounded by that of \cref{prob:Fixed_search}, that is,
\begin{equation}
        W\left(\mathrm{Search}^N_{\geq M}\right)\geq W\left(\mathrm{Search}^N_{ M}\right)=\sqrt{\frac{N-M}{4M}}.
\end{equation}
This implies that any catalyst that saturates this bound is necessarily optimal, which will allow us to avoid the symmetrization of the catalyst.
\paragraph{Optimization of the catalyst.} Let us assume that our candidate catalyst is strongly covariant. 
\begin{equation}
    \ket{v_x} \in \mathcal{W} \otimes \mathcal{M} \text{ such that for all permutations $g\in \mc{S}_N$, } \ket{v_{g(x)}} = \sum_{i} \ket{v_{x,i}}\ket{g(i)} 
\end{equation}
Going a bit further, let us define the catalyst in terms of the orbit of $p_0^{N,h}$ and $p_1^{N,h}$ for some constant $h\in \{M,\cdots, N\}$ (as previously done in~\cref{Eq:FixedGroverCataReduc}). Hence, for all $(x,i)\in \mc{B}_h^N$, we have
\begin{equation}
    \ket{v_{x,i}} = \begin{cases}
        \ket{v_{1}^{N,h}} \text{ if } x_i = 1\\
        \ket{v_{0}^{N,h}} \text{ if } x_i = 0 
    \end{cases}
\end{equation}
The program then becomes
\begin{equation}
    \operatorname{W}=\min_{\ket{v_{1}^{N,h}},\ket{v_{0}^{N,h}}\in \mathcal{W}}\max_{N\ge h\ge M} h \norm{v_{1}^{N,h}}^2 + (N-h) \norm{v_{0}^{N,h}}^2\qquad
    \text{s.t.} \begin{cases}
        \frac{1}{2} = h\braket{v_{1}^{N,h}|v_{0}^{N,h'}} + h'\braket{v_{0}^{N,h}|v_{1}^{N,h'}}\\
        \frac{1}{2} = \sqrt{hh'}\braket{v_{1}^{N,h}|v_{0}^{N,h'}} +  \sqrt{hh'}\braket{v_{0}^{N,h}|v_{1}^{N,h'}}
    \end{cases}
\end{equation}
The constraint is a system of equations whose unique solution is
\begin{equation}
        \begin{cases}\braket{v_{1}^{N,h}|v_{0}^{N,h'}} = \frac{1}{2\sqrt{h}\left(\sqrt{h} + \sqrt{h'}\right)}&h\neq h'\\
        \Re\braket{v_{1}^{N,h}|v_{0}^{N,h}} = \frac{1}{4h}\end{cases}.
\end{equation}
    Assuming that $\ket{v_0^{N,h}}$ and $\ket{v_1^{N,h}}$ are parallel and that their scalar product is real, we can decompose them as\footnote{This choice of vector is an assumption, however we will guarantee the optimality of the transduction complexity by comparing to the lower bound expressed earlier.}:
    \begin{equation}
        \begin{cases}
            \ket{v_1^{N,h}} = \frac{\alpha}{2h^{3/4}} \ket{u_h}\\
            \ket{v_0^{N,h}} = \frac{1}{2\alpha h^{1/4}} \ket{u_h}\\
        \end{cases}
    \end{equation}
for some constant $\alpha \in \mathbb{R}^+$, not a function of $h$, and a unit vector $\ket{u_h}$. Note that these are assumptions for now, but they are justified: if they were not aligned or if their scalar product had a non-zero imaginary part, the objective function would just increase. Further, as we will verify that this solution saturates the lower bound, it will guarantee the optimality globally. It may be possible to slightly improve the transduction complexity for some inputs. The constraint now becomes that $\braket{u_h|u_{h'}}=\frac{2h^{1/4}{h'}^{1/4}}{\sqrt{h}+\sqrt{h'}}$, and one could wonder if this is a valid Gram matrix given that $\ket{u_h}$ is supposed to be of unit norm. A simple way to verify it is to find a realization of this Gram matrix:  considering the set $L^2(\mathbb{R}^+)$ of square-integrable functions over $\mathbb{R}^+$ with the usual inner product, the vector $\ket{u_h}$ can be defined as the function $u_h\in L^2(\mathbb{R}^+)$ such that $u_h(t):= \sqrt{2}h^{1/4} e^{-\sqrt{h}t}$ which leads to the appropriate kernel.
With this, the objective function becomes
\begin{equation}
    \operatorname{W}=\min_{\alpha \in \mathbb{R}^+}\max_{N\ge h\ge M} \frac{\alpha^2}{4\sqrt{h}}+\frac{N-h}{4 \alpha^2\sqrt{h}} \min_{\alpha \in \mathbb{R}^+}\frac{\alpha^2}{4\sqrt{M}}+\frac{N-M}{4 \alpha^2\sqrt{M}} = \frac{\sqrt{N-M}}{2\sqrt{M}},
\end{equation}
where we used that the objective function was strictly decreasing with $h$. As expected, our solution has the same transduction complexity as the optimal solution for the problem with a fixed Hamming weight. The best possible value for $\alpha$ is $(N-M)^{1/4}$, which implies the following catalyst as the optimal solution.
\begin{equation}
        \ket{v_x} = \sum_{i} \ket{v_{x,i}}\ket{i}
                  = \sum_{i:x_i=1} \ket{v_1^{N,h}}\ket{i} + \sum_{i:x_i=0} \ket{v_0^{N,h}}\ket{i} 
                  = \frac{(N-M)^{1/4}}{2h^{1/4}}\ket{u_h}\otimes \left(\ket{x} + \sqrt{\frac{N-h}{N-M}} \ket{\neg x}\right)
\end{equation}
where we fixed $\mc W=\operatorname{Span}\left(\ket{u_h},h\in \{M,\dots,N\}\right)$.
\paragraph{Recovering the unitary.}  We now need to find a unitary $S^\circ$ that transforms
\begin{equation}\begin{split}
    \ket{\xi_x}\oplus\left(\mathds{1}_{\mc{W}}\otimes O_x\right)\ket{v_x}=\ket{\psi}\ket{0}+\kappa_h\left(-\ket{x}+ \lambda_h\ket{\neg x}\right)\ket{u_h}\\\xrightarrow{S^\circ}\ket{\tau_x}\oplus\ket{v_x}=\ket{x}\ket{0}+\kappa_h\left(\ket{x}+ \lambda_h\ket{\neg x}\right)\ket{u_h},
\end{split}\end{equation}
where
\[
    \kappa_h=\frac{(N-M)^{1/4}}{2h^{1/4}},
    \qquad \lambda_h=\sqrt{\frac{N-h}{N-M}},
\]
and we swapped $\mc W$ and $\mc M$, and used the vector $\ket{0}$ to distinguish the public state. Note that as effectively $\mc H=\mc M$, the global space now is $\mc{H}\otimes(\mathbb{C}\oplus\mc W).$ We can directly recover the argument that lead us to~\cref{Eq:StructSSearch}, and find that the unitary should decompose as
\begin{equation}
    S^\circ=\proj{\psi}\otimes S^\circ_\parallel+\left(\mathds{1}-\proj{\psi}\right)\otimes S^\circ_\perp,
\end{equation}
where $S^\circ_\parallel$ and $S^\circ_\perp$ are both unitary on $\mathbb{C}\oplus\mc W$. For $M\leq h<N$, we define
\[
    \qquad \mu_h=\sqrt{\frac{h}{N}},
    \qquad \nu_h=\sqrt{\frac{N-h}{N}},
\]
and $\ket{\psi_x^\perp}=\nu_h\ket{x}-\mu_h\ket{\neg x}$.
Moreover, we define the following vectors in $\mathbb C\oplus\mc W$:
\begin{equation}
\begin{aligned}
    \ket{v^\parallel_h}
        &=\begin{pmatrix}
            1\\
            \kappa_h(\lambda_h\nu_h-\mu_h)\ket{u_h}
          \end{pmatrix},
&
    \ket{w^\parallel_h}
        &=\begin{pmatrix}
            \mu_h\\
            \kappa_h(\lambda_h\nu_h+\mu_h)\ket{u_h}
          \end{pmatrix},
\\
    \ket{v^\perp_h}
        &=\begin{pmatrix}
            0\\
            -\kappa_h(\nu_h+\lambda_h\mu_h)\ket{u_h}
          \end{pmatrix},
&
    \ket{w^\perp_h}
        &=\begin{pmatrix}
            \nu_h\\
            \kappa_h(\nu_h-\lambda_h\mu_h)\ket{u_h}
          \end{pmatrix}.
\end{aligned}
\end{equation}
The required transformation then becomes
\begin{equation}
\begin{aligned}
    \ket{\xi_x}\oplus
    (\mathds 1_{\mc W}\otimes O_x)\ket{v_x}
       &=\ket{\psi}\otimes\ket{v^\parallel_h}
         +\ket{\psi_x^\perp}\otimes\ket{v^\perp_h},
\\
    \ket{\tau_x}\oplus\ket{v_x}
       &=\ket{\psi}\otimes\ket{w^\parallel_h}
         +\ket{\psi_x^\perp}\otimes\ket{w^\perp_h}.
\end{aligned}
\end{equation}
Consequently, it suffices to find two unitaries satisfying
\begin{equation}
    \begin{cases}
        S^\circ_\parallel\ket{v^\parallel_h}
            =\ket{w^\parallel_h},\\
        S^\circ_\perp\ket{v^\perp_h}
            =\ket{w^\perp_h},
    \end{cases}
    \quad\forall h.
\end{equation}
For $h=N$, only the parallel equation is needed; its vectors
are given by the same formulas with $\mu_N=1$ and
$\nu_N=\lambda_N=0$. Checking the Gram matrix confirms these unitaries exist. Further, by a dimensional argument, $S^\circ_\parallel$ (respectively $S^\circ_\perp$) is fixed on $N-M+1$ (respectively $N-M$) dimensions out of a $N-M+2$-dimensional Hilbert space, and they are free on the orthogonal complement of $\operatorname{Span}\left(\ket{v^\sigma_h}\right),\sigma\in\{\parallel,\perp\}$. If $V^\sigma$ and $W^\sigma$ are the matrices whose columns respectively are $\ket{v^\sigma_h}$ and $\ket{w^\sigma_h}$, $Q_V^\sigma$ and $Q_W^\sigma$ are matrices encoding an orthonormal basis of $\operatorname{Span}\left(\ket{v^\sigma_h}\right)^\perp$ and $\operatorname{Span}\left(\ket{w^\sigma_h}\right)^\perp$, respectively, then the unitaries necessarily take the form
\begin{equation}
    S^\circ_\sigma=W^\sigma V^{\sigma+}+Q_W^\sigma U^\sigma Q_V^{\sigma\dagger}
\end{equation}
where $+$ denotes Moore-Penrose pseudoinverse, and $U^\sigma$ is a unitary encoding the degrees of freedom.
\end{proof}
\subsubsection{Amplitude amplification with unknown initial amplitude}\label{sec:AmpPin}
In this section, we consider the amplitude amplification problem where the amplitude of the marked state is unknown.

\begin{problem}[Amplitude amplification with initial amplitude $\geq\varepsilon$]\label{prob:unkown-amp}
    Given $\HS$ the public space, $\Pi_m$ the projector on the marked space $\mc H_m\cq \Pi_m\mc H$, and $\eps\in (0,1]$ a promise on the amplitude of the overlap over the marked space, $\mathrm{Amp}_{\geq\eps}$ is defined as the triple
    \begin{equation}
    \mathrm{Amp}_{\geq\eps}\cq\left(\mc{X}_{\geq\eps},\mc{T}_{\geq\eps},\mc{O}_{\geq\eps}\right),
\end{equation}
    over the input set $\mathcal{I}=\mc B_{\geq\eps} \cq \{\ket{\xi}\in\mc{H}:\norm{\xi}=1, \norm{\Pi_m\ket{\xi}}\geq\varepsilon\}$ with $\mc M=\mc H$; where the indexed families of input states, target states and oracles are defined as
{\begin{equation}
    \mc{X}_{\geq\eps} \cq  \mc{B}_{\geq\eps}\qquad
            \mc{T}_{\geq\eps} \cq \left \{\ket{\tau_{\ket\xi}} = \frac{\Pi_m \ket{\xi}}{\norm{\Pi_m \ket{\xi}}}\right \}_{\ket{\xi}\in\mc{B}_{\geq\eps}}\qquad
    \mc{O}_{\geq\eps} \cq \left\{O_{\ket{\xi}}=\mathds{1}_{\mc M}-2\proj{\xi}\right\}_{\ket{\xi}\in\mc{B}_{\geq\eps}}.
\end{equation}}
\end{problem}

This state conversion problem is similar to $\mathrm{Amp}_{\eps}$, with the single difference that the promise now becomes a lower bound; we will however recover the same complexity. In this section we will prove the following proposition:
\begin{proposition}\label{prop:AmpMin}
    Amplitude amplification with initial amplitude $\geq\varepsilon$ has a transduction complexity of 
    \begin{equation}
        W(\mathrm{Amp}_{\geq\eps})=\frac{\sqrt{1-\eps^2}}{2\eps},
    \end{equation}
    and this complexity can be realized with $\mc{W}=\mathbb C\oplus L^2(\mathbb R)$ and the following catalyst and input-independent unitary
    \begin{equation}
        \ket{v_{\ket\xi}}=\frac{\ket{g_s}}{2(1-\eps^2)^{1/4} \sqrt{s}}\left(\sqrt{1-\eps^2}\ket{\xi}+\sqrt{1-s^2}\ket{\xi^\perp}\right),
        \qquad
        S^\circ=\Pi_m \otimes S^{\circ}_m + \Pi_u \otimes S^{\circ}_u
    \end{equation}
    where $\Pi_u\cq\mathds{1}_{\mc H}-\Pi_m$,
    \begin{equation}
    \ket{\xi^\perp}\cq \left(-\frac{\sqrt{1-s^2}}{s}\Pi_m + \frac{s}{\sqrt{1-s^2}}\Pi_u\right)\ket{\xi}.
\end{equation}
    $S^{\circ}_m, S^{\circ}_u$ are unitaries $\mathbb C\oplus L^2(\mathbb R)$, fixed on the complex register and on $\mathbb R^+$, and $\ket{g_{s}}$ is a normalized vector in $L^2(\mathbb R)$ such that
    \begin{equation}
        \braket{g_{s}|g_{s'}} = \frac{2\sqrt{ss'}}{s + s'}, \quad \forall s,s' \in [\varepsilon,1].
    \end{equation}
\end{proposition}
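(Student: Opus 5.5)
The plan mirrors the proof of \cref{prop:searchgeqm}: a lower bound from a single orbit, a strongly covariant ansatz on the other orbits that meets it, and \cref{Corr:Niiiice} to recover $S^\circ$.

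\emph{Lower bound.} Restricting the input set $\mc B_{\geq\eps}$ to the single orbit $\mc B_\eps$ gives exactly $\mathrm{Amp}_\eps$. A catalyst for the larger problem restricts to a catalyst for the smaller one, so \cref{prop:AmpFix} gives
\[W(\mathrm{Amp}_{\geq\eps})\geq \frac{\sqrt{1-\eps^2}}{2\eps}.\]
It therefore suffices to exhibit any feasible catalyst with $\sup\norm{v_{\ket\xi}}^2=\beta/(2\eps)$, where $\beta=\sqrt{1-\eps^2}$. This lets me skip the symmetrization and directly impose strong covariance.

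\emph{Reduction of the catalyst.} The symmetry group is again $U(\mc H_m)\oplus U(\mc H_u)$ acting by the defining representation. The orbits are indexed by $s\cq\norm{\Pi_m\ket\xi}\in[\eps,1]$, and I write $\ket\xi=s\ket m+\sqrt{1-s^2}\ket u$. By \cref{Corr:SpaceRed} and strong covariance, the catalyst has the form
\[\ket{v_{\ket\xi}}=\ket{a_s}\ket\xi+\ket{b_s}\ket{\xi^\perp},\qquad \ket{\xi^\perp}=-\sqrt{1-s^2}\ket m+s\ket u,\]
with $\ket{a_s},\ket{b_s}\in\mc W$ depending only on the orbit. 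I then expand the constraint $X-T=\braket{v_{\ket\xi}|\mathds 1-\mathds 1\otimes O_{\ket\xi}O_{\ket{\xi'}}|v_{\ket{\xi'}}}$, using that $O_{\ket\xi}$ flips the sign of the $\ket\xi$-component only. Since $\braket{m|m'}$ and $\braket{u|u'}$ range independently, matching their coefficients gives, for each pair $(s,s')$, a $2\times2$ linear system in the cross inner products $\braket{a_s|b_{s'}}$ and $\braket{b_s|a_{s'}}$. This is the analogue of the $h,h'$ system for search. Solving it fixes these cross terms; on the diagonal $s=s'$ only the real part is fixed, and it must reduce to the $\mathrm{Amp}_s$ condition of \cref{prop:AmpFix}.

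\emph{Ansatz, positivity and norm.} As in \cref{prop:searchgeqm}, I take $\ket{a_s}$ and $\ket{b_s}$ parallel to a common unit vector $\ket{g_s}$ with real scalar coefficients. Matching the form in the statement, these are $\ket{a_s}\propto\sqrt{1-\eps^2}\ket{g_s}/\sqrt s$ and $\ket{b_s}\propto\sqrt{1-s^2}\ket{g_s}/\sqrt s$. The cross constraints should then collapse to the single kernel condition $\braket{g_s|g_{s'}}=2\sqrt{ss'}/(s+s')$. To show this kernel is a valid Gram kernel, I realize it explicitly: $g_s(t)=\sqrt{2s}\,e^{-st}\mathbb 1_{t\geq0}$ in $L^2(\mathbb R)$ has unit norm and gives $\int_0^\infty 2\sqrt{ss'}e^{-(s+s')t}\,dt=2\sqrt{ss'}/(s+s')$. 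The squared norm is then
\[\norm{v_{\ket\xi}}^2=\frac{(1-\eps^2)+(1-s^2)}{4s\sqrt{1-\eps^2}},\]
which is decreasing in $s$ and equals $\beta/(2\eps)$ at $s=\eps$. This saturates the lower bound and proves optimality.

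\emph{Transducer.} Both representations $\phi^\xi\oplus\phi^{\mc W}\otimes\phi^L$ and $\phi^\tau\oplus\phi^{\mc W}\otimes\phi^R$ equal $\phi^d\otimes\mathds 1_{\mathbb C\oplus\mc W}$. Point 3 of \cref{Corr:Niiiice} then gives $S^\circ=\Pi_m\otimes S^\circ_m+\Pi_u\otimes S^\circ_u$. Each block $S^\circ_m$, $S^\circ_u$ must map an $s$-indexed family of vectors in $\mathbb C\oplus L^2(\mathbb R)$ to a prescribed family. The input and output families have equal Gram kernels, which is exactly the feasibility constraint, so a partial isometry between their closed spans exists. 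Both spans lie inside $\mathbb C\oplus L^2(\mathbb R^+)$, so their orthogonal complements both contain $L^2(\mathbb R^-)$ and are infinite dimensional; this gives the dimension condition needed to extend the partial isometry to a unitary.

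I expect the main obstacle to be the middle step: solving the functional system uniformly in $(s,s')$, including the degenerate endpoint $s=1$ where $\ket u$ disappears, and verifying that the parallel ansatz satisfies it exactly. The infinite-dimensional extension of the partial isometry is the second delicate point.
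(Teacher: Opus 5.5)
Your proposal is correct, and it follows the paper for the lower bound, the strongly covariant reduction to $(\ket{a_s},\ket{b_s})$, the $2\times2$ system, the parallel ansatz and the realization $g_s(t)=\sqrt{2s}\,e^{-st}h(t)$. The step you flag as the main obstacle is elementary. For $s\neq s'$ the system forces $\braket{b_s|a_{s'}}=\sqrt{1-s^2}/(2(s+s'))$. At $s=1$ the $\ket{\xi^\perp}$-coefficient is proportional to $\sqrt{1-s^2}$ and vanishes, so nothing degenerates.

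Where you genuinely depart from the paper is the transducer step. The paper proceeds as follows:
\begin{itemize}
\item It proves $\overline{\operatorname{Span}}\{g_s\}=L^2(\mathbb R^+)$.
\item It computes the closed spans of the four block families with a Laplace-transform and identity-theorem argument.
\item It finds a codimension mismatch ($0$ versus $1$) for the unmarked block inside $\mathbb C\oplus L^2(\mathbb R^+)$, and enlarges to $L^2(\mathbb R)$.
\item It writes $S^\circ_m$ and $S^\circ_u$ explicitly, using analytic continuation to the roots $s_\pm$, the operators $X_{s_0},Y_{s_0}$ and a Blaschke factor.
\end{itemize}

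You instead note that every block family lies in $\mathbb C\oplus L^2(\mathbb R^+)$. So in the separable space $\mathbb C\oplus L^2(\mathbb R)$ all four orthogonal complements contain $L^2(\mathbb R^-)$ and are infinite dimensional. The isometry defined by the matching block Gram kernels therefore extends to a unitary. Those block Gram identities follow from the global constraint by the same independence of $\braket{m|m'}$ and $\braket{u|u'}$ you used for the catalyst. This argument is valid, much shorter, and needs no complex analysis. It also shows that the density of $\{g_s\}$ is irrelevant for existence. It also makes point 3 of \cref{Corr:Niiiice} superfluous: the block form is just an ansatz, checked directly.

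What your route does not give is the explicit form of the blocks. Nor does it give the clause in the statement that $S^\circ_m,S^\circ_u$ are \emph{fixed} on $\mathbb C\oplus L^2(\mathbb R^+)$. That clause requires the families $\{\ket{v^\sigma_s}\}_{s\in[\eps,1]}$ to be total there, which is exactly the paper's Laplace argument. If $(a,f)$ is orthogonal to every $\ket{v^m_s}$ and $F$ is the Laplace transform of $f$, then $\sqrt{2c_\eps}\,as+(s^2-c_\eps s-1)F(s)=0$ on $[\eps,1]$, hence for all $s>0$ by analyticity. Evaluating at the positive root gives $a=0$, and then $F\equiv0$. If you want the full statement, add this density step; otherwise your argument proves the complexity and the existence of the unitaries.
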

\begin{proof}~
\paragraph{Reformulation of the problem}Similarly to~\cref{sec:AmpFix}, we will consider an alternative formulation of the input set, where we decompose the inputs into the normalized marked and unmarked projections, and consider an additional parameter that represents the amplitude: $\mathcal{I} \equiv \{(\ket{m}, \ket{u}, {s}) \in \HS_m \times \HS_u \times [\varepsilon,1]:\norm{m}=\norm{u}=1\}$. These two representations of the inputs are related in the following way
\begin{equation}
    \begin{cases}
        s\cq\norm{\Pi_m\ket\xi},\\
        c\cq\norm{\Pi_u\ket\xi}=\sqrt{1-s^2},\\
        \ket m\cq s^{-1}\Pi_m\ket \xi,\\
        \ket{u}\cq c^{-1}\Pi_u\ket{\xi}
    \end{cases}\iff\ket{\xi}=s\ket{m}+c\ket{u},
\end{equation}
In the following we will use this bijection implicitly and consider, for a pair of input vectors $\ket{\xi},\ket{\xi'}$, that
\begin{equation}
        \ket{\xi}=s\ket{m}+c\ket{u} 
        \quad\text{and}\quad 
        \ket{\xi'}=s'\ket{m'}+c'\ket{u'}.
\end{equation}
Given these definitions, we can rewrite the Gram and operator-valued kernels defining the problem as
\begin{equation}
    \begin{cases}
        X(\xi, \xi') =ss' \braket{m| m'} + cc'\braket{u| u'}\\
        T(\xi, \xi') = \braket{m | m'}\\
        K(\xi, \xi') = \mathds{1} - 2(\ketbra{\xi}{\xi} + \ketbra{\xi'}{\xi'})+ 4X(\xi,\xi')\ketbra{\xi}{\xi'}
    \end{cases},
\end{equation}
and, writing $\ket{v_\xi}$ the catalyst for the input $\ket{\xi}$, the optimization program that will give us the complexity and the catalyst is
\begin{equation}\label{eq:progAmpAmpnf}
    \operatorname{W}=\min_{\ket{v_{\ket{\xi}}}\in\mc{L}}\max_{\ket{\xi}\in\mc{I}}\norm{v_{\ket{\xi}}}^2\qquad\text{s.t.}\qquad cc'\braket{u|u'} +(ss'-1) \braket{m|m'} =  \braket{v_{\ket{\xi}} | \mathds{1} - (\mathds{1} \otimes O^\dagger_{\ket{\xi}} O_{\ket{\xi'}})|v_{\ket{\xi'}}}\quad \forall \ket{\xi},\ket{\xi'} \in \mathcal{I}.
\end{equation}
\paragraph{The symmetry group of the problem} The symmetry group of the problem is the same as the one for~\cref{prob:fixed-amp}, $\mc{V} = \{V \in U(\mathcal{M}): [V, \Pi_m] = 0\}=U(\mc{H}_m)\oplus U(\mc{H}_u)$, with the same representations presented in~\cref{Eq:SymAmpAmp}.

\paragraph{Optimal catalyst.} Similarly to our analysis of~\cref{prob:Unfixed_search}, we will directly look for a strongly covariant catalyst and verify that it matches the complexity of~\cref{eq:cata-amp}. That is because amplitude amplification with initial amplitude $\varepsilon$ is a subproblem of amplitude amplification with initial amplitude $\geq\varepsilon$, so its transduction complexity is necessarily a lower bound on that of the latter. Strong covariance then implies that
\begin{equation}
    \ket{v_{V\ket\xi}}=\mathds{1}_{\mc{W}}\otimes V\ket{v_{\ket\xi}}\qquad\forall V\in\mc V.
\end{equation}
With the same argument we presented for~\cref{Eq:AmpAmpDecCat}, we can apply~\cref{Corr:SpaceRed} and find
\begin{equation}
    \ket{v_{\ket{\xi}}} = \ket{v^{\parallel}_{\ket{\xi}}}\ket{\xi} + \ket{v^{\perp}_{\ket{\xi}}} \ket{\xi^\perp}
\end{equation} 
for some vectors $\ket{v^{\parallel}_{\ket{\xi}}}, \ket{v^{\perp}_{\ket{\xi}}} \in \mathcal{W}$ with $\ket{\xi^\perp}$ defined in~\cref{Eq:defXiPerp}. If two input vectors, $\ket{\xi}$ and $\ket{\xi'}$ are such that $s=s'$, then we can build a unitary $V_{\xi,\xi'}$ that maps $\ket\xi$ to $\ket\xi'$, that is, $V_{\xi,\xi'}\ket{\xi}=\ket{\xi'}$, with $V_{\xi,\xi'}\ket{u}=\ket{u'}$ and $V_{\xi,\xi'}\ket{m}=\ket{m'}$, so that it also maps $\ket{\xi^\perp}$ to $\ket{{\xi'}^\perp}$, that is, $V_{\xi,\xi'}\ket{\xi^\perp}=\ket{{\xi'}^\perp}$. Using strong covariance, we directly get that the parallel and orthogonal components of their catalysts are necessarily equal. We can thus relate the respective decomposition of their catalysts:
\begin{equation}
    \begin{cases}
        \ket{v^\parallel_s}\cq\ket{v_{\ket\xi}^\parallel}=\ket{v_{\ket{\xi'}}^\parallel}\\
        \ket{v^\perp_s}\cq\ket{v_{\ket\xi}^\perp}=\ket{v_{\ket{\xi'}}^\perp}\\
    \end{cases}\Rightarrow\ket{v_{\ket{\xi}}}=\ket{v_s^\parallel}\ket{\xi}+\ket{v_s^\perp}\ket{\xi^\perp}.
\end{equation}
One can directly verify that $O_\xi\ket{v_{\ket{\xi}}}=-\ket{v_s^\parallel}\ket{\xi}+\ket{v_s^\perp}\ket{\xi^\perp}$ and expand the right-hand side of the condition in~\cref{eq:progAmpAmpnf}
\begin{equation}
    \braket{v_{\ket{\xi}}| \mathds{1} \otimes (1 - O^\dagger_\xi O_{\xi'})|v_{\ket{\xi'}}} = 2\left(\braket{v^\parallel_s | v^\perp_{s'}} \braket{\xi|\xi'^{\perp}} +\braket{v^\perp_s | v^\parallel_{s'}} \braket{\xi^\perp|\xi'} \right).
\end{equation}
We can explicitly state the scalar products between $\ket{\xi}$ and $\ket{{\xi'}^\perp}$:
\begin{equation}
    \begin{cases}
        \braket{\xi^\perp|\xi'}=-cs'\braket{m|m'}+sc'\braket{u|u'}\\
        \braket{\xi|\xi'^\perp}=-sc'\braket{m|m'}+cs'\braket{u|u'},
    \end{cases}
\end{equation}
which leads to
\begin{equation}
    \braket{v_{\ket{\xi}}| \mathds{1} \otimes (1 - O^\dagger_\xi O_{\xi'})|v_{\ket{\xi'}}} = 2\left(-\left(sc'\braket{v^\parallel_s | v^\perp_{s'}}+cs'\braket{v^\perp_s | v^\parallel_{s'}}\right)\braket{m|m'} +\left(cs'\braket{v^\parallel_s | v^\perp_{s'}}+sc'\braket{v^\perp_s | v^\parallel_{s'}}\right)\braket{u|u'}  \right).
\end{equation}
Plugging into~\cref{eq:progAmpAmpnf}, we can rewrite the constraint as
\begin{equation}
    \left(\frac{cc'}{2}-cs'\braket{v^\parallel_s | v^\perp_{s'}}-sc'\braket{v^\perp_s | v^\parallel_{s'}}\right)\braket{u|u'}=\left(\frac{1-ss'}{2}-sc'\braket{v^\parallel_s | v^\perp_{s'}}-cs'\braket{v^\perp_s | v^\parallel_{s'}}\right)\braket{m|m'}.
\end{equation}
In this equation, since we have decomposed an input into a triplet $(\ket m,\ket u,{s})$, by considering different pairs of inputs we can freely make the scalar products $\braket{u|u'}$ and $\braket{m|m'}$ vary independently while keeping the parentheses sitting in front of them, that depend only on ${s}$ and ${s}'$, constant. This implies that this condition can only be satisfied for any pair of inputs if these parentheses are identically zero for any pair ${s},{s}'$. We therefore obtain the following system of equations (considering that the left parenthesis is zero, and that the second plus $ss'/cc'$ times the first is zero too):
\begin{equation}
    \begin{cases}
        \frac{1}{2} = \frac{s}{c}\braket{v^\perp_s | v^\parallel_{s'}}   + \frac{s'}{c'}\braket{v^\parallel_s | v^\perp_{s'}}\\
        \frac{1}{2} = \frac{s'}{c}\braket{v^\perp_s | v^\parallel_{s'}}   + \frac{s}{c'}\braket{v^\parallel_s | v^\perp_{s'}}
    \end{cases},
\end{equation}
For ${s}\neq{s}'$, this system admits as unique solution 
\begin{equation}
    \braket{v_s^\perp|v_{s'}^\parallel}=\frac{c}{2(s+s')},
\end{equation}
while for ${s}={s}'$ we get
\begin{equation}
    \Re(\braket{v_s^\perp|v_{{s}}^\parallel})=\frac{c}{4s}.
\end{equation}
From that last equation and assuming that $\ket{v_s^\parallel}$ and $\ket{v_s^\perp}$ are aligned and their scalar product is real\footnote{As discussed for search, these assumptions make sense as a complex phase or an orthogonal component would increase the objective function. They will also be verified as we saturate the lower bound.}, we can re-express $\ket{v_s^\parallel}$ and $\ket{v_s^\perp}$ as 
\begin{equation}
    \begin{cases}
        \ket{v_s^\parallel}=\frac{\lambda}{2\sqrt{s}}\ket{g_s}\\
        \ket{v_s^\perp}=\frac{c}{2\lambda\sqrt{s}}\ket{g_s}
    \end{cases},
\end{equation}
for a family of normalized vectors $\{\ket{g_s}\}_{{s}\in[\eps,1]}$, parallel to $\ket{v^\parallel_s}$, and a positive number $\lambda$. The constraint now becomes $\braket{g_s|g_{s'}}=\frac{2\sqrt{ss'}}{s+s'}$. The question is the same as for search, whether this is a valid Gram kernel or not, and the same example of functions realizes it, that is, we can use the functions $g_s\in L^2(\mathbb{R}^+)$ such that $g_s(t):=\sqrt{2s}e^{-st}$. With this definition, $\ket{g_s}$ disappears from the optimization program and we can thus rewrite $\operatorname{W}$ as:
\begin{equation}
    \operatorname{W}=\min_{\lambda}\max_{{s}\in[\eps,1]}\frac{\lambda^2}{4s}+\frac{c^2}{4s\lambda^2}=\min_{\lambda}\frac{\lambda^2}{4\eps}+\frac{1-\eps^2}{4\eps\lambda^2}=\frac{\sqrt{1-\eps^2}}{2\eps},
\end{equation}
where we used that the objective function was monotonically decreasing with ${s}$. We therefore recover the lower bound and the expected complexity of $W(P)=\frac{\sqrt{1-\eps^2}}{2\eps}$ for $\lambda=(1-\eps^2)^{1/4}$, and find that the optimal catalyst is
\begin{equation}
    \ket{v_{\ket\xi}}=\frac{\ket{g_s}}{2\sqrt{c_\eps s}}\left(c_\eps\ket{\xi}+c\ket{\xi^\perp}\right),
\end{equation}
with $c_\eps=\sqrt{1-\eps^2}$ and $\mc W=L^2(\mathbb{R}^+).$ One could want to consider only $\operatorname{Span}(\ket{g_s},{s}\in[\eps,1])$; however this is not a proper Hilbert space as it is not closed; thus one would want to consider $\overline{\operatorname{Span}}(\ket{g_s},{s}\in[\eps,1])$ but precisely we will show that $\overline{\operatorname{Span}}(\ket{g_s},{s}\in[\eps,1])=L^2(\mathbb{R}^+)$. Indeed, let us assume toward contradiction that $\mc{W}=\overline{\operatorname{Span}}(\ket{g_s},{s}\in[\eps,1])\subsetneq L^2(\mathbb{R}^+)$. Then, there exists some nonzero $f\in L^2(\mathbb R^+)$ such that $\Pi_{\mc W}f=0$. This condition of no overlap with $\mc{W}$ can be translated in the Laplace domain as a condition on the unilateral Laplace transform $\mc{L}_u(f)=F$, indeed, for any $f\in L^2(\mathbb{R}^+)$ we have
\begin{equation}\label{Eq:EqIPLT}
    \braket{g_s|f}=\sqrt{2s}\int_0^\infty e^{-s t}f(t) dt=\sqrt{2s}F(s),\qquad s>0,
\end{equation}
and thus
\begin{equation}
    \Pi_{\mc W}f=0\iff F(s)=0\quad\forall{s}\in[\eps,1].
\end{equation}
However, the Laplace transform of a square-integrable function is holomorphic on $\Re(z)>0$; since it vanishes on the segment $[\eps,1]$, which has accumulation points, the identity theorem gives $F\equiv0$ on $\Re(z)>0$. By injectivity of the Laplace transform this implies that $f=0$, which is a contradiction, ergo $\mc{W}=\overline{\operatorname{Span}}(\ket{g_s},{s}\in[\eps,1])= L^2(\mathbb{R}^+)$.

\paragraph{Recovering the transducer.}
We now need to find a unitary $S^\circ$ that transforms
\begin{equation}
\begin{aligned}
    \ket{\xi}\oplus
    \left(\mathds{1}_{\mc W}\otimes O_{\ket{\xi}}\right)
    \ket{v_{\ket{\xi}}}
    &=\ket{\xi}\ket{\bot}
      +\kappa\left(-c_\eps\ket{\xi}
      +c\ket{\xi^\perp}\right)\ket{g_s}
    \\
    \xrightarrow{S^\circ}\ket{\tau_\xi}\oplus\ket{v_{\ket{\xi}}}
    &=\ket{m}\ket{\bot}
      +\kappa\left(c_\eps\ket{\xi}
      +c\ket{\xi^\perp}\right)\ket{g_s},
\end{aligned}
\end{equation}
where we introduced $\kappa=\frac{1}{2\sqrt{c_\eps s}}$. We interchanged the $\mc W$ and $\mc H$ registers and used the vector $\ket{\bot}$ to distinguish the public space, so that the global space is now $\mc H\otimes(\mathbb C\oplus\mc W)$.

As the amplitude varies continuously, we use the Hilbert-space realization of the Gram kernel mentioned earlier:
\begin{equation}
    g_s(t)=\sqrt{2s}\,e^{-st},\quad t\geq0, \qquad \mc W=\overline{\operatorname{Span}} \left\{\ket{g_s}:{s}\in[\eps,1]\right\}= L^2(\mathbb R^+).
\end{equation}
On both sides, the group acts by its defining representation on $\mc H$ and trivially on $\mathbb C\oplus\mc W$. The isotypic decomposition therefore separates the marked and unmarked spaces:
\begin{equation}
    \mc H\oplus\mc L \cong \left(\mc H_m\otimes(\mathbb C\oplus\mc W)\right) \oplus \left(\mc H_u\otimes(\mathbb C\oplus\mc W)\right).
\end{equation}
Using~\cref{Corr:Niiiice}, we can thus look for a transducer of the form
\begin{equation}\label{Eq:StructSAmpUnfixed}
    S^\circ=\Pi_m\otimes S^\circ_m+\Pi_u\otimes S^\circ_u,
\end{equation}
where $S^\circ_m$ and $S^\circ_u$ are unitaries on $\mathbb C\oplus\mc W$ (up to an extension of $\mc W$ that we discuss below). Let us define the following four vectors in this space:
\begin{equation}\label{Eq:DecSAmpUnfixed}
\begin{aligned}
    \ket{v^m_s}
        &=\begin{pmatrix}
            s\\-\kappa(c_\eps s+c^2)\ket{g_s}
          \end{pmatrix},
&
    \ket{w^m_s}
        &=\begin{pmatrix}
            1\\\kappa(c_\eps s-c^2)\ket{g_s}
          \end{pmatrix},
\\
    \ket{v^u_s}
        &=\begin{pmatrix}
            c\\\kappa c(s-c_\eps)\ket{g_s}
          \end{pmatrix},
&
    \ket{w^u_s}
        &=\begin{pmatrix}
            0\\\kappa c(s+c_\eps)\ket{g_s}
          \end{pmatrix}.
\end{aligned}
\end{equation}
The required transformation then becomes
\begin{equation}
\begin{aligned}
    \ket{\xi}\oplus
    \left(\mathds{1}_{\mc W}\otimes O_{\ket{\xi}}\right)
    \ket{v_{\ket{\xi}}}
       &=\ket{m}\otimes\ket{v^m_s}
         +\ket{u}\otimes\ket{v^u_s},
\\
    \ket{\tau_\xi}\oplus\ket{v_{\ket{\xi}}}
       &=\ket{m}\otimes\ket{w^m_s}
         +\ket{u}\otimes\ket{w^u_s}.
\end{aligned}
\end{equation}
Consequently, it suffices to find two unitaries satisfying
\begin{equation}\label{eq:ICantAnymore}
    \begin{cases}
        S^\circ_m\ket{v^m_s}=\ket{w^m_s},\\
        S^\circ_u\ket{v^u_s}=\ket{w^u_s},
    \end{cases}
    \qquad\forall{s}\in[\eps,1].
\end{equation}
One can check that the Gram kernels match; however, this is not enough in this case to ensure the existence of the unitaries\footnote{Without equal codimension of the closed span the blocks are only isometries.} since when we are working in infinite dimensions we need to verify that the dimensions of the orthogonal complements of the span match. Let us do it for the input state of the marked part: let us assume a vector $(a,f)\in\mathbb{C}\oplus L^2(\mathbb{R}^+)$ is orthogonal to all vectors $\ket{v^m_s}$, i.e.\ $\braket{v^m_s|(a,f)}=0$. Using~\cref{Eq:EqIPLT}, we can transform this orthogonality condition into an equation in the Laplace domain
\begin{equation}
    as-\kappa(c_\eps s+c^2)\sqrt{2s}F(s)=0\qquad\forall{s}\in[\eps,1].
\end{equation}
By analyticity of $F$ over $\Re(s)>0$, we can extend this constraint to any ${s}>0$, and using the definition of $\kappa$ and $c^2=1-s^2$, this constraint implies
\begin{equation}
    \sqrt{2c_\eps}as+(s^2-c_\eps s-1)F(s)=0\qquad \forall{s}>0.
\end{equation} The polynomial $(s^2-c_\eps s-1)$ possesses a root $s_+>0$, i.e. at ${s}=s_+>0$; evaluating at $s=s_+$ the constraint implies $a=0$. Plugging it in the constraint it becomes $F(s)=0$ for all $s>0$, thus $f=0$. So the only vector $(a,f)$ orthogonal to all $\ket{v^m_s}$ is $(0,0)$. The vectors $\ket{v^m_s}$ therefore span $\mathbb{C}\oplus \mc W$, and the orthogonal complement has dimension $0$. The same argument holds for $\ket{v^u_s},\ket{w^m_s}$, however, for $\ket{w^u_s}$ one can trivially see that the vector $(a,0)$ belongs to the orthogonal complement, so it is not zero-dimensional.
\begin{equation}
\begin{split}
    \overline{\operatorname{Span}}(\ket{v_s^m})=\overline{\operatorname{Span}}(\ket{w_s^m})=\overline{\operatorname{Span}}(\ket{v_s^u})=\mathbb{C}\oplus\mc W,\qquad\overline{\operatorname{Span}}(\ket{w_s^u})=\{0\}\oplus\mc W\\
    \rightarrow \dim \overline{\operatorname{Span}}(\ket{v_s^m})^\perp=\dim\overline{\operatorname{Span}}(\ket{w_s^m})^\perp=\dim\overline{\operatorname{Span}}(\ket{v_s^u})^\perp=0\qquad\dim\overline{\operatorname{Span}}(\ket{w_s^u})^\perp=1
\end{split}.
\end{equation}
Whereas for the marked space everything is fine, it is not the case for the unmarked space where there is a mismatch in the dimension of the complement. It is thus necessary to extend the workspace as discussed in~\cref{App:ProofPropEqProb},
to have infinite-dimensional complements. We could consider adding a register representing a counter $\ell^2(\mathbb{N})$, but we propose the following alternative: we consider the enlarged Hilbert space $\mc W'=L^2(\mathbb{R})=L^2(\mathbb{R}^+)\oplus L^2(\mathbb{R}^-)$ with the usual inner product and find unitaries. One should now consider that $g_s\in L^2(\mathbb{R})$ such that $g_s(t):=\sqrt{2s}e^{-st}h(t)$ where $h(t)$ is the Heaviside step function, so that the catalyst has no component in $L^2(\mathbb{R}^-)$.

The unitaries $S^\circ_\sigma$ with $\sigma\in\{u,m\}$ will have the following block decomposition
\begin{equation}
    S^\circ_\sigma\begin{pmatrix}
        a\\f
    \end{pmatrix}=\begin{pmatrix}
        da+l(f)\\ag(t)+Df
    \end{pmatrix}\Rightarrow S^\circ_\sigma\approx\begin{pmatrix}
        d&l\\g&D
    \end{pmatrix}.
\end{equation} with $d$ complex, $l$ a bounded linear functional on $L^2(\mathbb{R})$, $g\in L^2(\mathbb{R})$ and $D$ a bounded operator on $L^2(\mathbb{R})$. As for search, they are fixed on the span of the $\ket{v^\sigma_s}$, and free on the orthogonal complement. For the scalar part of the marked space ($\sigma=m$), the condition becomes:
\begin{equation}\label{eq:againalabel}
    ds+\frac{s^2-c_\eps s-1}{\sqrt{2c_\eps}}l(e^{-s t}h(t))=1\quad\forall {s}\in[\eps,1].
\end{equation}By analyticity (the map $s\mapsto e^{-st}h(t)$ is holomorphic from $\Re(s)>0$ to $L^2(\mathbb R)$ and $l$ is bounded), this constraint can be extended to all values of ${s}>0$, and in particular, if $s_\pm=\frac{c_\eps\pm\sqrt{c_\eps^2+4}}{2}$ are the two roots of the polynomial, we can take the equation at $s=s_+$ and find $d=s_+^{-1}=-s_-$. Note that $s_+>1$, so this extension beyond $[\eps,1]$ is essential. Plugging that in~\cref{eq:againalabel} we get that:
\begin{equation}
    l(e^{-s t}h(t))=\frac{\sqrt{2c_\eps}(1-ds)}{s^2-c_\eps s-1}=\frac{\sqrt{2c_\eps}s_-(s-s_+)}{(s-s_+)(s-s_-)}=\frac{\sqrt{2c_\eps}s_-}{(s-s_-)}=\sqrt{2c_\eps}s_-\int_0^\infty e^{s_-t}e^{-st}h(t)dt
\end{equation}
where for the last equality one needs to recognize $\int_0^\infty e^{-\alpha t}e^{-\beta t}dt=\frac{1}{\alpha+\beta}$, and that $s_-$ is negative; and thus for functions on the positive reals we have
\begin{equation}
    l(f)=\sqrt{2c_\eps}s_-\int_0^\infty e^{s_-t}f(t) dt.
\end{equation}
On the function part, the constraint is 
\begin{equation}\label{Eq:ConstDmark}
    sg(t)+\frac{s^2-c_\eps s-1}{\sqrt{2c_\eps}}D(e^{-s t}h(t))=\frac{s^2+c_\eps s-1}{\sqrt{2c_\eps}}e^{-s t}h(t)\qquad \forall {s}\in[\eps,1].
\end{equation}
Once again, we can extend the constraint to all positive ${s}$ by analyticity, then evaluating this function at $s=s_+$, we directly find
\begin{equation}
    g(t)=\frac{s_+^2+c_\eps s_+-1}{\sqrt{2c_\eps}s_+}e^{-s_+ t}h(t)=\sqrt{2c_\eps}e^{-s_+ t}h(t).
\end{equation}
Plugging $g$ in~\cref{Eq:ConstDmark}, one can get
\begin{equation}
    D(e^{-s t}h(t))=e^{-s t}h(t)+2c_\eps\frac{(e^{-st}-e^{-s_+t})h(t)}{s-s_+}\frac{s}{(s-s_-)}
\end{equation}
To continue we need to use the two following operators, defined for $s_0>0$:
\begin{equation}
\begin{cases}
    X_{s_0}(f)=e^{s_0t}h(t)\int_t^\infty e^{-s_0\tau}f(\tau)d\tau&\rightarrow X_{s_0}(e^{-st}h(t))=\frac{e^{-st}h(t)}{s+s_0}\rightarrow (\mathds{1}_{L^2(\mathbb R)}-s_0X_{s_0})(e^{-st}h(t))=\frac{se^{-st}h(t)}{s+s_0}\\
    Y_{s_0}(f)=e^{-s_0 t}h(t)\int_0^t e^{s_0\tau}f(\tau)d\tau&\rightarrow Y_{s_0}(e^{-st}h(t))=\frac{e^{-s t}-e^{-s_0t}}{s_0-s}h(t)
\end{cases}
\end{equation}
Both are convolutions with the integrable kernel $e^{-s_0\abs{u}}$ (restricted to $\tau>t$ and $\tau<t$ respectively), hence bounded with $\norm{X_{s_0}},\norm{Y_{s_0}}\leq s_0^{-1}$, and $X_{s_0}=Y_{s_0}^\dagger$. Since $-s_->0$, the multiplication of $e^{-st}h(t)$ by $s/(s-s_-)$ is realized by $\mathds{1}+s_-X_{-s_-}$, and thus we directly find
\begin{equation}
    D=\mathds{1}_{L^2(\mathbb R)}-2c_\eps Y_{s_+}(\mathds{1}_{L^2(\mathbb R)}+s_-X_{-s_-}).
\end{equation}
Since $l$, $X_{-s_-}$ and $Y_{s_+}$ only involve values of $f$ on $\mathbb{R}^+$ and produce functions supported on $\mathbb{R}^+$, $S^\circ_m$ acts as the identity on $L^2(\mathbb{R}^-)$. On $\mathbb{C}\oplus L^2(\mathbb{R}^+)$ it is unitary: the Gram kernels of $\{\ket{v^m_s}\}$ and $\{\ket{w^m_s}\}$ match and both families are total, so the isometry defined on the span extends to a unitary. Hence $S^\circ_m$ is unitary on $\mathbb{C}\oplus L^2(\mathbb{R})$. Globally
\begin{equation}
\boxed{
    S^\circ_m\begin{pmatrix}
        a\\f
    \end{pmatrix}=\begin{pmatrix}
        a/s_++\sqrt{2c_\eps}s_-\int_0^\infty e^{s_-\tau}f(\tau) d\tau\\
        a\sqrt{2c_\eps}e^{-s_+ t}h(t)+\left[\mathds{1}_{L^2(\mathbb R)}-2c_\eps Y_{s_+}(\mathds{1}_{L^2(\mathbb R)}+s_-X_{-s_-})\right]f(t)
    \end{pmatrix}
    }
\end{equation}

For the unmarked space ($\sigma=u$), the constraint reads, after dividing by $c$,
\begin{equation}\label{Eq:ConstUnmarkTime}
    S^\circ_u\begin{pmatrix}1\\\frac{s-c_\eps}{\sqrt{2c_\eps}}e^{-st}h(t)\end{pmatrix}=\begin{pmatrix}0\\\frac{s+c_\eps}{\sqrt{2c_\eps}}e^{-st}h(t)\end{pmatrix}\qquad\forall{s}\in[\eps,1].
\end{equation}
For the scalar part, evaluating at $s=c_\eps$ gives $d=0$, and then $l(e^{-st}h(t))=0$ for all $s>0$, so $l$ vanishes on $L^2(\mathbb{R}^+)$. For the function part, evaluating at $s=c_\eps$ gives
\begin{equation}
    g(t)=\sqrt{2c_\eps}e^{-c_\eps t}h(t)=:e_{c_\eps}(t),
\end{equation}
and plugging it back, $D(e^{-st}h(t))=e^{-st}h(t)-2c_\eps\frac{e^{-st}-e^{-c_\eps t}}{c_\eps-s}h(t)$, that is $D=\mathds{1}-2c_\eps Y_{c_\eps}$ on $L^2(\mathbb{R}^+)$. This part maps $\mathbb{C}\oplus L^2(\mathbb{R}^+)$ unitarily onto $\{0\}\oplus L^2(\mathbb{R}^+)$: in the Laplace domain $\mathds{1}-2c_\eps Y_{c_\eps}$ is the multiplication by the Blaschke factor $\frac{z-c_\eps}{z+c_\eps}$, an isometry whose range is the orthogonal complement of $e_{c_\eps}$. As anticipated, nothing is mapped onto the scalar part, and this is where $L^2(\mathbb{R}^-)$ is needed. On $L^2(\mathbb{R}^-)$ the action is free, and we choose the mirror image of the previous block:
\begin{equation}
    l(f)=\sqrt{2c_\eps}\int_{-\infty}^0e^{c_\eps\tau}f(\tau)d\tau,\qquad D f=f-2c_\eps Y^-_{c_\eps}f\quad\text{for }f\in L^2(\mathbb{R}^-),
\end{equation}
with $Y^-_{s_0}(f)=e^{-s_0t}h(-t)\int_{-\infty}^te^{s_0\tau}f(\tau)d\tau$. Under the reflection $t\mapsto-t$, this block is the adjoint of the block on $\mathbb{C}\oplus L^2(\mathbb{R}^+)$ (indeed $Y^-_{c_\eps}$ becomes $X_{c_\eps}=Y_{c_\eps}^\dagger$), so it maps $L^2(\mathbb{R}^-)$ unitarily onto $\mathbb{C}\oplus L^2(\mathbb{R}^-)$. Altogether
\begin{equation}
    \boxed{
        S^\circ_u\begin{pmatrix}
            a\\f
        \end{pmatrix}=\begin{pmatrix}
            \sqrt{2c_\eps}\int_{-\infty}^0e^{c_\eps\tau}f(\tau)d\tau\\
            a\sqrt{2c_\eps}e^{-c_\eps t}h(t)+\left(\mathds{1}-2c_\eps\left(Y_{c_\eps}+Y^-_{c_\eps}\right)\right)f
        \end{pmatrix}
    }
\end{equation}
which is unitary on $\mathbb{C}\oplus L^2(\mathbb{R})$, as it is the direct sum of a unitary $\mathbb{C}\oplus L^2(\mathbb{R}^+)\to L^2(\mathbb{R}^+)$ and a unitary $L^2(\mathbb{R}^-)\to\mathbb{C}\oplus L^2(\mathbb{R}^-)$, and satisfies~\cref{Eq:ConstUnmarkTime}.

\end{proof}
\subsection{Amplitude estimation}\label{sec:AmpEstimation}

In this section, we consider amplitude estimation on a simple two-dimensional Hilbert space $\mathbb{C}^2$. We will first find the optimal algorithm for what we call amplitude estimation on the full circle defined in~\cref{Prob:AmpEsC}. In this case the algorithm takes as input an angle $\theta\in S^1=\mathbb{R}/2\pi\mathbb Z$, representing the vector $\ket{\theta}=\cos\theta\ket0+\sin\theta\ket 1$, and prepares from a fixed input state $\ket{\xi_0}$ a target state $\ket{f_\theta}$ in some Hilbert space $\mc H$, where the family of states $\{\ket{f_\theta}\}_{\theta\in S^1}$ is such that $\braket{f_\theta|f_{\theta'}}=\braket{f_{\theta-\theta'}|f_0}$. This definition of amplitude estimation may seem peculiar, as it is different from the problem consisting in evaluating $\theta$ directly~\cite{BHMT02} or the decision problem that would ask to prepare a state $\ket{0}$ or $\ket{1}$ depending on whether $\theta$ is above or below some threshold $\theta_t$ for all $\theta$ in some interval. 

However both can be reduced to this problem. For the first one, one can consider the special case where $\ket{f_\theta}$ is a function $f_\theta\in L^2(S^1)$, where $S^1\approx[0,2\pi)$ is the unit circle, corresponding to a square window of width $\delta\in[0,\pi]$, all angles are understood modulo $2\pi$, more precisely
\begin{equation}
    f_\theta(\theta'):=\begin{cases}
        \delta^{-1/2} &\abs{\theta-\theta'}\leq\frac{\delta}{2}\\
        0&\frac{\delta}{2}\leq\abs{\theta-\theta'}\leq\frac{\pi}{2}\\
        f_\theta(\theta'+k\pi)=f_\theta(\theta')&k\in\mathbb Z
    \end{cases}
\end{equation}
In that case the problem is equivalent to determining the angle $\theta$ up to an error $\delta$. To recover the decision problem one should simply prepare $\ket{f_\theta}$, then to determine whether $\theta$ is in some interval $I$ (for instance, the angle larger than a certain threshold), one should simply measure with respect to the projector on functions identically zero outside of $I$. Note that the particular choice of window given in the example would give an infinite complexity due to the discontinuity, but considering smoother windows such as
\begin{equation}
    f_\theta(\theta')=\begin{cases}
        \sqrt{\frac{2}{\delta}}\cos(\pi (\theta-\theta')/\delta)&\abs{\theta-\theta'}\leq\frac{\delta}{2}\\
        0&\frac{\delta}{2}\leq\abs{\theta-\theta'}\leq\frac{\pi}{2}\\
        f_\theta(\theta'+k\pi)=f_\theta(\theta')&k\in\mathbb Z
    \end{cases},
\end{equation}
one will find that the complexity scales as $\Theta(\delta^{-1})$ (see \cref{fig:window_function}). This is different from the standard decision amplitude estimation transducer obtained by the conversion from the traditional Markov algorithm or the construction from Jeffery, Osborne and Pass~\cite{JP26}, that has variable cost over $O((\theta-\theta_t)^{-1})$ that can blow up when the input is close to the threshold, whereas here the cost is a fixed $O(\delta^{-1})$ but has an error when $\abs{\theta-\theta'}\leq\delta/2$.

\begin{figure}[h]

\begin{tikzpicture}
    \begin{axis}[
    axis lines=middle,
    xmin=-0.3, xmax=7,
    ymin=-0.3, ymax=1.4,
    xtick=\empty,
    ytick=\empty,
    xlabel={$\theta'$},
    ylabel={$f_\theta(\theta')$},
]

    \addplot[
        domain=2:4,
        samples=200,
        thick,
    ]
    { sqrt(2/2) * cos(deg(pi*(3 - x)/2)) };

    \addplot[thick, domain=-0.3:2, samples=2] {0};
    \addplot[thick, domain=4:7, samples=2] {0};

    \draw[<->, thick]
        (axis cs:2,1.1) -- (axis cs:4,1.1)
        node[midway, above] {$\delta$};

    \draw[dashed]
        (axis cs:2,1) -- (axis cs:4,1);

    \draw[dashed]
        (axis cs:2,0) -- (axis cs:2,1);

    \draw[dashed]
        (axis cs:4,0) -- (axis cs:4,1);

    \draw[<->, thick]
        (axis cs:4.4,0) -- (axis cs:4.4,1)
        node[midway, right] {$\sqrt{\frac{2}{\delta}}$};

    \node[thick] at (axis cs:-0.1,-0.1) {$0$};

    \draw[thick]
        (axis cs:6,-0.05) -- (axis cs:6,0.05);
    \node at (axis cs:6,-0.1) {$2\pi$};

    \draw[thick]
        (axis cs:3,-0.05) -- (axis cs:3,0.05);
    \node at (axis cs:3,-0.1) {$\theta$};

    \draw[thick]
        (axis cs:2,-0.05) -- (axis cs:2,0.05);
    \node at (axis cs:2,-0.15) {$\theta - \frac{\delta}{2}$};

    \draw[thick]
        (axis cs:4,-0.05) -- (axis cs:4,0.05);
    \node at (axis cs:4,-0.15) {$\theta + \frac{\delta}{2}$};

\end{axis}
\end{tikzpicture}
\caption{Plot of a window function $f_\theta(\theta')$ with parameter $\delta = 2$. As $\delta$ gets large, the function tends to be constant with value $\sqrt{\frac{2}{\delta}}$ on the interval $[\theta - \frac{\delta}{2}, \theta + \frac{\delta}{2}]$ with a periodicity of $\pi$.}
    \label{fig:window_function}
\end{figure}

\subsubsection{Amplitude amplification on the full circle}

First let us formally define the problem of amplitude amplification on the full circle. In this section we will consider the vectors and inner products
\begin{equation}\label{eq:BitOfTrig}
    \begin{cases}
        \ket{\theta}=\cos\theta\ket0+\sin\theta\ket1\\
        \ket{\theta_\perp}=-\sin\theta\ket0+\cos\theta\ket1
    \end{cases}\Longrightarrow
\begin{cases}
    \braket{\theta|\theta'}=\cos(\theta-\theta')\\
    \braket{\theta|\theta'_\perp}=\sin(\theta-\theta').
    \end{cases}
    \end{equation}
Note that $\ket{\theta_\perp}=\ket{\theta+\frac{\pi}{2}}=R_{\pi/2}\ket{\theta}$, where $R_\omega$ is the rotation matrix such that $R_\omega\ket{\theta}=\ket{\theta+\omega}$.

\begin{problem}[Amplitude estimation on the full circle]\label{Prob:AmpEsC}
Given $\HS$ the public space, a state $\ket{\xi_0}\in \mc H$ and $\{\ket{f_\theta}\}_{\theta\in S^1}$ a $\pi$-periodic family of states in $\mc H$ that are continuously indexed by $\theta\in S^1$ and such that their inner products satisfy $\braket{f_\theta|f_{\theta'}}=\braket{f_{\theta-\theta'}|f_{0}}=:g(\theta-\theta')$ where $S^1=\mathbb{R}/2\pi\mathbb Z$ is the unit circle, $\mathrm{Est}_g$ is defined as the triple 
\begin{equation}
    \mathrm{Est}_{g}\cq\left(\mc{X}_{g},\mc{T}_{g},\mc{O}_{g}\right),
\end{equation}
    over the input set $\mathcal{I}=S^1$ with $\mc M=\mathbb C^2$; where the where the indexed families of input states, target states and oracles are defined as
{\begin{equation}
    \mc{X}_{g} \cq\left\{\ket{\xi_\theta}\cq\ket{\xi_0}\right\}_{\theta\in S^1}\qquad
            \mc{T}_{g} \cq \{\ket{f_\theta}\}_{\theta\in S^1}\qquad
    \mc{O}_{g} \cq \left\{O_{\theta}\cq\mathds{1}-2\proj\theta\right\}_{\theta\in S^1}.
\end{equation}}
\end{problem}
Note that the $\pi$-periodicity of $f$ is necessary for the problem to be feasible. Otherwise, as $O_\theta=O_{\theta+\pi}$, the algorithm should be able to distinguish two inputs without access to an oracle capable to do so. What motivates the choice of $g$ as an index distinguishing the instances of the problem is that it is the determining component in the complexity of the transducer.

We prove the following proposition.
\begin{proposition}\label{prop:ampEst}
    Amplitude estimation on the full circle has a transduction complexity of 
    \begin{equation}
        W(\mathrm{Est}_{g})=\sum_{n\in2\mathbb Z+1}\abs{\hat h_n},\quad\text{with }h(\omega)=\lim_{\omega'\rightarrow\omega}\frac{1-g(\omega')}{2\sin\omega'}=\sum_{n\in2\mathbb Z+1}\hat h_ne^{i\omega n}
    \end{equation}
    and this complexity can be realized with $\mc{W}=\ell^2(2\mathbb Z+1)$ and the following catalyst 
    \begin{equation}
    \ket{v_\theta}=\sum_{n\in2\mathbb Z+1} \sqrt{\frac{\abs{\hat h_n}}{2}}e^{-in\theta}\ket{n}\left(\ket{\theta}+e^{i\arg(\hat h_n)}\ket{\theta_\perp}\right).
\end{equation} 
\end{proposition}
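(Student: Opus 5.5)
The plan is to follow the same route as for the search and amplification problems. First rewrite the adversary program for $\mathrm{Est}_g$, then invoke \cref{corr:optCat} to restrict to weakly covariant catalysts. The group's characters then diagonalize the constraint, which reduces to an independent two‑dimensional problem for each Fourier mode.

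\textbf{Reformulation and symmetry.} The kernels are $X(\theta,\theta')=1$ and $T(\theta,\theta')=g(\theta-\theta')$. Since reflections are self‑adjoint, $K(\theta,\theta')=O_\theta O_{\theta'}$, which is the rotation $R_{2(\theta'-\theta)}$ (up to the sign convention, which I will fix by direct computation). The constraint therefore reads
\[
1-g(\theta-\theta')=\braket{v_\theta|\mathds 1-\mathds 1_{\mc W}\otimes O_\theta O_{\theta'}|v_{\theta'}} .
\]
The symmetry group is $U(1)\cong SO(2)$ acting by $\theta\mapsto\theta+\omega$. The representations are $\phi^L_\omega=\phi^R_\omega=R_\omega$ on $\mc M=\mathbb C^2$, $\phi^\xi$ is trivial, and $\phi^\tau$ is the unitary $\ket{f_\theta}\mapsto\ket{f_{\theta+\omega}}$. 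This unitary exists by \cref{Prop:EqProb}, since $g$ depends only on the difference.

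\textbf{Reduction to Fourier modes.} By \cref{corr:optCat} we may assume the catalyst is weakly covariant with some continuous $\phi^{\mc W}$. By \cref{Th:CompletRed}, $\mc W$ decomposes into characters $e^{-in\omega}$ with multiplicity spaces $\mc W_n$. Covariance then forces
\[
\ket{v_\theta}=\sum_n e^{-in\theta}\left(\ket{\alpha_n}\ket{\theta}+\ket{\beta_n}\ket{\theta_\perp}\right),\qquad \ket{\alpha_n},\ket{\beta_n}\in\mc W_n ,
\]
using $\ket{\theta}=R_\theta\ket0$ and $\ket{\theta_\perp}=R_\theta\ket1$. I then substitute this into the constraint with $\Delta=\theta-\theta'$ and use \cref{eq:BitOfTrig}.
\begin{itemize}
\item The $\mathds 1$ term gives $\sum_n e^{-in\Delta}(\ldots)\cos\Delta$‑type contributions.
\item The rotation term produces factors $e^{\pm i\Delta}$ multiplying the $2\times2$ block.
\end{itemize}
The expected outcome is that the diagonal terms cancel. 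What should remain is
\[
1-g(\Delta)=2\sin\Delta\sum_n e^{i m(n)\Delta}\, 2\,\mathrm{Re\text{-}type}\braket{\alpha_n|\beta_n},
\]
where the $\pm1$ shift from the rotation sends the index $m(n)$ into $2\mathbb Z+1$. This is consistent with $h$ being $\pi$‑antiperiodic: both $1-g$ is $\pi$‑periodic and $\sin$ is $\pi$‑antiperiodic. We have $g(0)=1$, and $g(\pi)=1$ by $\pi$‑periodicity. The zeros of $\sin$ are therefore removable, and $h$ is well defined as in the statement.

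\textbf{Optimization and lower bound.} Fourier uniqueness on $S^1$ then pins down each coefficient: $\braket{\alpha_n|\beta_n}=\hat h_n$, up to the normalization I fix in the computation. The objective is $\norm{v_\theta}^2=\sum_n\left(\norm{\alpha_n}^2+\norm{\beta_n}^2\right)$, independent of $\theta$. By Cauchy–Schwarz and AM–GM,
\[
\norm{\alpha_n}^2+\norm{\beta_n}^2\ge2\norm{\alpha_n}\norm{\beta_n}\ge2\abs{\braket{\alpha_n|\beta_n}} .
\]
Summing over $n$ gives the lower bound $\sum_{n\in2\mathbb Z+1}\abs{\hat h_n}$. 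This bound is valid for every weakly covariant catalyst, and hence, by \cref{corr:optCat}, for every catalyst.

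Equality holds when $\mc W_n=\mathbb C$ and $\ket{\beta_n}=e^{i\arg\hat h_n}\ket{\alpha_n}$ with $\norm{\alpha_n}^2=\abs{\hat h_n}/2$. This is exactly the catalyst in the statement, with $\mc W=\ell^2(2\mathbb Z+1)$. It is a genuine vector precisely when $\hat h\in\ell^1$; otherwise both sides of the bound are infinite. Finally, the Gram kernels of the inputs and outputs match by construction, so $S^\circ$ exists by the argument preceding the proposition relating $\mathrm{Adv}(P)=W(P)$. One may need to add an $\ell^2(\mathbb N)$ register for the codimension condition. Its block structure would then follow from \cref{Corr:Niiiice}, although the statement does not require this.

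\textbf{Main obstacle.} I expect the hardest step to be the bookkeeping in the expansion. I must check that every $\cos\Delta$‑type contribution cancels, so that the constraint factorizes exactly as $2\sin\Delta$ times a Fourier series whose coefficients are $\braket{\alpha_n|\beta_n}$. I also need the index shift and sign conventions to come out as $2\mathbb Z+1$ with the constants stated. A secondary technical point is justifying the mode‑by‑mode identification when the series only converges in $L^2$; I would handle this by pairing the constraint against $e^{-im\Delta}$ and using uniqueness of Fourier coefficients.
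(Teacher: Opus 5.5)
Your proposal is correct and follows essentially the same route as the paper. The paper also uses weak covariance via \cref{corr:optCat}, decomposes the catalyst in the rotating oracle eigenbasis $\ket{\theta},\ket{\theta_\perp}$ (where the $\cos\Delta$ terms do cancel exactly), splits $\mc W$ into $U(1)$ characters, and applies a per-mode Cauchy--Schwarz/AM--GM bound fixed by Fourier uniqueness; just note two bookkeeping points: in your rotating-frame parametrization there is no index shift, since the constraint is exactly $h(\Delta)=\sum_{n\in\mathbb Z}e^{in\Delta}\,2i\,\Im\braket{\alpha_n|\beta_n}$, so the restriction to $n\in 2\mathbb Z+1$ comes solely from the $\pi$-antiperiodicity of $h$ killing the even coefficients, and checking that your equality catalyst is feasible uses that every $\hat h_n$ is purely imaginary, which follows from $g(-\omega)=\overline{g(\omega)}$.
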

\begin{proof}~
\paragraph{Reformulation of the problem} 
The kernel of the problem can be expressed as
\begin{equation}
    \begin{cases}
        X(\theta,\theta')=1\\
        T(\theta,\theta')=g(\theta-\theta')\\
        K(\theta,\theta')=\mathds1_2-2\left(\proj\theta+\proj{\theta'}\right)+4\cos(\theta-\theta')\ketbra\theta{\theta'}
    \end{cases}.
\end{equation}
such that we can write the adversary bound as
\begin{equation}\label{Eq:AdvAmpEs}
    \operatorname{W}=\min_{\ket{v_\theta}}\max_\theta\norm{v_\theta}^2\quad\text{s.t.}\quad 1-g(\theta-\theta')=\braket{v_\theta|\mathds{1}_{\mc W}\otimes(\mathds1_2-O_{\theta}^\dagger O_{\theta'})|v_\theta'}.
\end{equation}
\paragraph{Symmetry group of the problem} The symmetry group of the problem is the circle group $U(1)$, acting on the input set as standard rotations $r_\omega\theta=\theta+\omega$. It acts on the input vector, target vector and oracle as
\begin{equation}\label{eq:KerAmpEs}
    \begin{cases}
        \phi^\xi_{\omega}\ket{\xi_0}=\ket{\xi_0}\\
        \phi^\tau_\omega\ket{f_\theta}=\ket{f_{\theta+\omega}}\\
        \phi^L_\omega O_\theta\phi^{R\dagger}_\omega=R_\omega O_\theta R_\omega^\dagger=O_{\theta+\omega}
    \end{cases}
\end{equation}
where $R_\omega$ is the usual rotation matrix. 
\paragraph{Optimal catalyst} A natural way to decompose the catalyst is in the eigenbasis of the oracle of the corresponding input $\theta$, in a way that its action is easy to write
\begin{equation}\label{eq:DecCatAmpE}
    \ket{v_\theta}=\ket{v_\theta^\parallel}\ket{\theta}+\ket{v_\theta^\perp}\ket{\theta_\perp}\longrightarrow O_\theta\ket{v_\theta}=-\ket{v_\theta^\parallel}\ket{\theta}+\ket{v_\theta^\perp}\ket{\theta_\perp},
\end{equation}
and the adversary bound becomes
\begin{equation}\label{Eq:AdvAmpEs2}
    \operatorname{W}=\min_{\ket{v_\theta}}\max_\theta\norm{v_\theta^\parallel}^2+\norm{v_\theta^\perp}^2\quad\text{s.t.}\quad 1-g(\theta-\theta')=2\sin(\theta-\theta')\left(\braket{v_\theta^\parallel|v_{\theta'}^\perp}-\braket{v_\theta^\perp|v_{\theta'}^\parallel}\right).
\end{equation}
where we used the identity given in~\cref{eq:BitOfTrig}. Using~\cref{theorem:symmetryCat}, one can show that 
\begin{equation}
    \begin{cases}
        \ket{v_\theta^\parallel}=\phi_\theta^{\mc W}\ket{v_0^\parallel}\\
        \ket{v_\theta^\perp}=\phi_\theta^{\mc W}\ket{v_0^\perp}\\
    \end{cases}
\end{equation}
for some representation $\phi^{\mc W}$  of $U(1)$, allowing us again to simplify the adversary bound
\begin{equation}
    \operatorname{W}=\min_{\substack{\ket{v_0}\in\mc{W}\\\phi^{\mc{W}}\in\operatorname{Hom(U(1),U(\mc W))}}}\norm{v_0^\parallel}^2+\norm{v_0^\perp}^2\quad\text{s.t.}\quad \frac{1-g(\omega)}{2\sin(\omega)}=\left(\braket{v_0^\parallel|\phi^{\mc W\dagger}_\omega|v_{0}^\perp}-\braket{v_0^\perp|\phi^{\mc W\dagger}_\omega|v_0^\parallel}\right).
\end{equation}
Note that the division by sine in $\omega=0$ and $\omega=\pi$ should be understood by continuity. Indeed as $\ket{f_\theta}$ is continuous and $\ket{f_\theta}=\ket{f_{\theta+\pi}}$, $g(\theta)$ is continuous and $g(0)=g(\pi)=1$. We will thus generally write
\begin{equation}
    h(\omega)=\begin{cases}
        \frac{1-g(\omega)}{2\sin(\omega)}&\omega\neq0,\pi\\
        \frac{-g'(0)}{2}&\omega=0\\
        \frac{g'(\pi)}{2}&\omega=\pi\\
    \end{cases}.
\end{equation}
Further, one can note that if $h(\omega)$ is not constant, we cannot choose the trivial representation $\phi_\omega\mc{W}=\mathds{1}_{\mc{W}}$, or there would be a contradiction in the constraint. The only way for $h$ to be constant is that $g(\omega)=1~\forall\omega$, that is the trivial problem. We can thus conclude that the strong covariance is possible only for the trivial problem.

Now we will in the following section use group theory to demonstrate that the catalyst will encode the Fourier coefficients of $h$. As stated in~\cref{theorem:symmetryCat}, we can safely assume that $\phi^{\mc W}$ is strongly continuous and unitary. We know that the strongly continuous irreducible representations of the group $U(1)$ are quite simple, the representation $\phi^n$ for $n\in\mathbb{Z}$ is represented by the character $\chi_n(\omega)=e^{in\omega}$, and acts on a $1$-dimensional space as the phase $e^{in\omega}$. With this we can derive the isotypic decomposition of $\mc W$ as
\begin{equation}
    \mc{W}=\bigoplus_{n\in\mathbb Z}\mc W^{(n)}\qquad \text{with }\phi_\omega^{\mc W}|_{\mc W^{(n)}}\cong\phi^n_\omega\otimes \mathds{1}_{\mc W^{(n)}}=e^{in\omega}\mathds{1}_{\mc W^{(n)}}.
\end{equation}
We can define the projector onto each of the isotypic components as 
\begin{equation}
    \Pi_n=\frac{1}{2\pi}\oint_{\omega\in[0,2\pi)}d\omega e^{-in\omega}\phi^{\mc W}_\omega.
\end{equation}
Decomposing both parts of the catalyst in that basis, we get
\begin{equation}
    \begin{cases}
        \ket{a_n}=\Pi_n\ket{v_0^\parallel}\\
        \ket{b_n}=\Pi_n\ket{v_0^\perp}
    \end{cases},
\end{equation}
and we can once again reexpress the adversary bound using all our machinery as
\begin{equation}
    \operatorname{W}=\min_{\substack{\ket{v_0}\in\mc{W}\\\phi^{\mc{W}}\in\operatorname{Hom(U(1),U(\mc W))}}}\sum_{n\in\mathbb Z}\norm{a_n}^2+\norm{b_n}^2\quad\text{s.t.}\quad h(\omega)=2i\sum_{n\in\mathbb Z}e^{-i\omega n}\Im\braket{a_n|b_n}.
\end{equation}
Note that by Cauchy-Schwarz inequality the series are absolutely convergent. We have a similar situation as for search or amplitude amplification, and we will want to assume that $\ket{a_n}$ and $\ket{b_n}$ are both parallel, and that their inner product is imaginary. We could use the same trick as presented earlier, but we will present another one that will avoid us to introduce derivatives. Setting $t_n=2\Im\braket{a_n|b_n}$, we can find the following lower bound
\begin{equation}
    \abs{t_n}\leq 2\norm{a_n}\norm{b_n}\leq\norm{a_n}^2+\norm{b_n}^2,
\end{equation}
and we can saturate that bound if we choose 
\begin{equation}
    \ket{a_n}=\sqrt{\frac{\abs{t_n}}{2}}\ket{n}\qquad\ket{b_n}=i\mathrm{Sign(t_n)}\sqrt{\frac{\abs{t_n}}{2}}\ket{n},
\end{equation}
for some unit vectors $\ket{n}$. Thus, without loss of generality we can assume that the multiplicity of each representation is one, and that the latest choice of vector is optimal. Finally the adversary bound becomes
\begin{equation}
    \operatorname{W}=\min_{t_n}\sum_{n\in\mathbb Z}\abs{t_n}\quad\text{s.t.}\quad h(\omega)=i\sum_{n\in\mathbb Z}e^{-i\omega n}t_n.
\end{equation}
One can recognize that the constraint necessarily is a Fourier series and thus the $t_n$ are fixed
\begin{equation}
    \hat h_n=\frac{1}{2\pi}\oint_{\omega\in[0,2\pi)}h(\omega)e^{-in\omega}d\omega
    \Rightarrow
     t_n=-i\hat h_{-n}.
\end{equation}
and as $h(\omega)=-h^*(-\omega)$, the Fourier coefficients of $h$ are purely imaginary, and thus $t_n$ are real as expected. Finally, by noting that $h$ is $\pi$-antiperiodic, \textit{ergo} the odd Fourier coefficients are non-zero
\begin{equation}
    \operatorname{W}=\sum_{n\in\mathbb2 N+1}\abs{\hat h_n}.
\end{equation}
We know that the optimal catalyst lives in $\ell^2(2\mathbb{Z}+1)\otimes\mathbb{C}^2$ and can be written as
\begin{equation}
    \ket{v_\theta}=\sum_{n\in2\mathbb Z+1} \sqrt{\frac{\abs{\hat h_n}}{2}}e^{-in\theta}\ket{n}\left(\ket{\theta}+e^{i\arg(\hat h_n)}\ket{\theta_\perp}\right)\quad\text{with }h(\omega)=\sum_{n\in2\mathbb Z+1}\hat h_ne^{i\omega n}.
\end{equation}    
\end{proof}

\subsubsection{Amplitude amplification on an interval}

\begin{problem}[Amplitude estimation on an interval]
Given the input set $\mc I\subset S^1$ an interval on $S^1$ such that $\abs{\mc I}<\pi$, $\HS$ the public space, a state $\ket{\xi_0}\in \mc H$ and $\{\ket{f_\theta}\}_{\theta\in I}$ a family of states in $\mc H$ that are continuously indexed by $\theta\in S^1$ and such that their inner products satisfy $\braket{f_\theta|f_{\theta'}}=\braket{f_{\theta-\theta'}|f_{0}}=:g(\theta-\theta')$ , $\mathrm{Est}_g$ is defined as the triple 
\begin{equation}
    \mathrm{Est}^{\mc{I}}_{g}\cq\left(\mc{X}_{g},\mc{T}_{g},\mc{O}_{g}\right),
\end{equation}
    with $\mc M=\mathbb C^2$; where the where the indexed families of input states, target states and oracles are defined as
{\begin{equation}
    \mc{X}^{\mc{I}}_{g} \cq\left\{\ket{\xi_\theta}\cq\ket{\xi_0}\right\}_{\theta\in \mc I}\qquad
            \mc{T}^{\mc{I}}_{g} \cq \{\ket{f_\theta}\}_{\theta\in \mc I}\qquad
    \mc{O}^{\mc{I}}_{g} \cq \left\{O_{\theta}\cq\mathds{1}-2\proj\theta\right\}_{\theta\in \mc I}.
\end{equation}}
\end{problem}
Note that we also require that $\abs{\mc I}<\pi$, otherwise $\mc I$ we would require, as in last section, that $\ket{f_\theta}$ is $\pi$-periodic (as the oracle itself is $\pi$-periodic, we cannot distinguish two angles that are separated by an angle of $\pi$), and we can thus extend it on the full unit circle $S^1$ and recover the previous problem.

We prove the following proposition.
\begin{proposition}
    Amplitude estimation on the interval $\mc I$ has an upper bound on its complexity transduction complexity of 
    \begin{equation}
    W\left(\mathrm{Est}^{\mc{I}}_{g}\right)\leq \inf_{L,h\in C(-L,L)} \sum_{n\in\mathbb Z }\abs{\hat h_n} \qquad\text{s.t.}\begin{cases}
        h(\omega)=\lim_{\omega'\rightarrow\omega}\frac{1-g(\omega')}{2\sin\omega'}\quad\forall\omega\in[-\abs{\mc I},\abs{\mc I}]\\
        L\geq\abs{\mc I}\\
        h(-L)=h(L)=0\\
        h(\omega)=\sum_{n\in\mathbb Z}\hat h_ne^{i\omega n\frac{\pi}{L}},
    \end{cases}
\end{equation}
    and this complexity can be realized with $\mc{W}=\ell^2(2\mathbb Z+1)$, the following catalyst 
    \begin{equation}
    \ket{v_\theta}=\sum_{n\in\mathbb Z} \sqrt{\frac{\abs{\hat h_n}}{2}}e^{-in\theta\frac{\pi}{L}}\ket{n}\left(\ket{\theta}+e^{i\arg(\hat h_n)}\ket{\theta_\perp}\right)\quad\text{with }h(\omega)=\sum_{n\in\mathbb Z}\hat h_ne^{i\omega n\frac{\pi}{L}}.
\end{equation} 
\end{proposition}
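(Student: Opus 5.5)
The plan is to prove only the upper bound, by exhibiting a feasible point of the adversary program \cref{Eq:AdvAmpEs2} (equivalently, a catalyst family) whose cost is $\sum_n\abs{\hat h_n}$. Then I invoke the Gram-matching argument of \cref{Subsec:PreTra} (extending $\mc L$ by an $\ell^2(\mathbb N)$ register if needed) to get the input-independent unitary $S^\circ$. No symmetrization is needed, since an upper bound only requires feasibility. Fix any admissible pair $(L,h)$ with $\sum_n\abs{\hat h_n}<\infty$; otherwise the bound is vacuous. Decomposing the catalyst as in \cref{eq:DecCatAmpE}, $\ket{v_\theta}=\ket{v_\theta^\parallel}\ket\theta+\ket{v_\theta^\perp}\ket{\theta_\perp}$, the only constraint to satisfy is
\begin{equation*}
    1-g(\theta-\theta')=2\sin(\theta-\theta')\left(\braket{v_\theta^\parallel|v_{\theta'}^\perp}-\braket{v_\theta^\perp|v_{\theta'}^\parallel}\right)\qquad\forall\theta,\theta'\in\mc I .
\end{equation*}
Since $\theta,\theta'\in\mc I$, the difference $\omega=\theta-\theta'$ lies in $[-\abs{\mc I},\abs{\mc I}]\subseteq[-L,L]$. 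On this range $h(\omega)=\frac{1-g(\omega)}{2\sin\omega}$ by hypothesis, understood by continuity at $\omega=0$; the case $\omega=\pm\pi$ cannot occur because $\abs{\mc I}<\pi$. It therefore suffices to show that the bracket equals $h(\omega)$ for all such $\omega$.

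With the proposed catalyst, $\ket{v_\theta^\parallel}=\sum_n\sqrt{\abs{\hat h_n}/2}\,e^{-in\theta\pi/L}\ket n$ and $\ket{v_\theta^\perp}=\sum_n\sqrt{\abs{\hat h_n}/2}\,e^{i\arg\hat h_n}e^{-in\theta\pi/L}\ket n$. A direct computation then gives the two terms of the bracket:
\begin{equation*}
    \braket{v_\theta^\parallel|v_{\theta'}^\perp}=\tfrac12\sum_n\hat h_n e^{in\omega\pi/L}=\tfrac12 h(\omega),\qquad \braket{v_\theta^\perp|v_{\theta'}^\parallel}=\tfrac12\sum_n\overline{\hat h_n}e^{in\omega\pi/L}=\tfrac12\overline{h(-\omega)} .
\end{equation*}
The constraint therefore reduces to the antisymmetry $\overline{h(-\omega)}=-h(\omega)$. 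On $[-\abs{\mc I},\abs{\mc I}]$ this follows from $g(-\omega)=\overline{g(\omega)}$, since $g$ is the inner product $\braket{f_\theta|f_{\theta'}}$, together with the oddness of $\sin$.

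The main subtlety is that the infimum ranges over extensions $h$ on $[-L,L]$ that need not satisfy this antisymmetry outside $[-\abs{\mc I},\abs{\mc I}]$. I handle it by replacing an arbitrary admissible $h$ with $\tilde h(\omega)=\tfrac12\bigl(h(\omega)-\overline{h(-\omega)}\bigr)$. This $\tilde h$ still agrees with the prescribed function on $[-\abs{\mc I},\abs{\mc I}]$, still vanishes at $\pm L$, and is antisymmetric on the whole of $[-L,L]$. Its Fourier coefficients are $\hat{\tilde h}_n=\tfrac12(\hat h_n-\overline{\hat h_n})$, so $\sum_n\abs{\hat{\tilde h}_n}\le\sum_n\abs{\hat h_n}$. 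Hence the infimum may be restricted to antisymmetric extensions without loss, and for these the catalyst is feasible.

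Finally, the cost is $\norm{v_\theta}^2=\sum_n\bigl(\tfrac{\abs{\hat h_n}}2+\tfrac{\abs{\hat h_n}}2\bigr)=\sum_n\abs{\hat h_n}$, independently of $\theta$. Taking the infimum over $(L,h)$ gives the claimed bound, and existence of $S^\circ$ follows from the Gram-kernel equality via the argument of \cref{Prop:EqProb}.
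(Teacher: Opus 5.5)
Your proposal is correct and follows the paper's route: you plug the Fourier-mode ansatz into \cref{Eq:AdvAmpEs2} and check that the bracket reproduces $h$ on $[-\abs{\mc I},\abs{\mc I}]$, carrying out explicitly what the paper only asserts ``with the same calculations as in the last section''. Your antisymmetrization step is harmless but unnecessary: the constraint only involves $\omega\in[-\abs{\mc I},\abs{\mc I}]$, a symmetric set on which both $h(\omega)$ and $h(-\omega)$ are already prescribed, so $\tfrac12\bigl(h(\omega)-\overline{h(-\omega)}\bigr)=h(\omega)$ holds there for every admissible extension, and the displayed catalyst is feasible as stated (antisymmetrizing merely lowers the cost to $\sum_n\abs{\operatorname{Im}\hat h_n}$).
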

\begin{proof}~
    \paragraph{Efficient catalyst} 
The kernels are the same as those in~\cref{eq:KerAmpEs} and the adversary bound can thus be expressed as~\cref{Eq:AdvAmpEs}. We can do the same decomposition as in~\cref{eq:DecCatAmpE}, and find again~\cref{Eq:AdvAmpEs2}.
In this case the problem does not possess the symmetry $U(1)$, so we cannot use it to prove the optimality of the catalyst, however we can find a feasible catalyst, that we can hope is good by assuming that for some value of $L$
\begin{equation}
    \begin{cases}
        \ket{v_\theta^\parallel}=\sum_{n\in\mathbb Z}a_ne^{-in\theta\frac{\Pi}{L}}\ket{n}\\
        \ket{v_\theta^\perp}=\sum_{n\in\mathbb Z}b_ne^{-in\theta\frac{\Pi}{L}}\ket{n}\\
    \end{cases}.
\end{equation}
Indeed as the function $g$ is not defined on the full circle but only on $\omega\in[-\abs{\mc{I}},\abs{\mc{I}}]$, we are not forced to consider the Fourier series on the full circle, but only on an interval $[-L,L]$, with $L\geq \abs{\mc I}$. This is why we considered $\abs{\mc I}<\pi$, otherwise the function $g$ is fixed on the full circle and we recover to previous problem. In that case, with the same calculations as in the last section, we can find that the optimal catalyst should be 
\begin{equation}
    \ket{v_\theta}=\sum_{n\in\mathbb Z} \sqrt{\frac{\abs{\hat h_n}}{2}}e^{-in\theta\frac{\pi}{L}}\ket{n}\left(\ket{\theta}+e^{i\arg(\hat h_n)}\ket{\theta_\perp}\right)\quad\text{with }h(\omega)=\sum_{n\in\mathbb Z}\hat h_ne^{i\omega n\frac{\pi}{L}},
\end{equation}
where $h$ is a continuous continuation of $\lim_{\omega'\rightarrow\omega}\frac{1-g(\omega')}{2\sin\omega'}$ on $[-L,L]$. Note that now we sum on all modes, and not only the odd ones, as the continuation does not have the property $h(\omega+L)=-h(\omega)$ we obtained in the last section. The adversary bound is thus
\begin{equation}
    W\left(\mathrm{Est}^{\mc{I}}_{g}\right)\leq\sum_{n\in\mathbb Z }\abs{\hat h_n}.
\end{equation}
One might ask why we should not just consider $L=\abs{\mc I}$, but this is not optimal. For instance if $h(\abs{\mc I})=-h^*(-\abs{\mc I})\neq0$, then the sum of the modulus of the Fourier coefficients diverges, whereas an analytical continuation that cancels in $\pm L$ would have prevented that problem and led to a finite norm solution.
In conclusion we have that 
\begin{equation}
    W\left(\mathrm{Est}^{\mc{I}}_{g}\right)\leq \inf_{L,h\in C(-L,L)} \sum_{n\in\mathbb Z }\abs{\hat h_n} \qquad\text{s.t.}\begin{cases}
        h(\omega)=\lim_{\omega'\rightarrow\omega}\frac{1-g(\omega')}{2\sin\omega'}\quad\forall\omega\in[-\abs{\mc I},\abs{\mc I}]\\
        L\geq\abs{\mc I}\\
        h(-L)=h(L)=0\\
        h(\omega)=\sum_{n\in2\mathbb Z+1}\hat h_ne^{i\omega n\frac{\pi}{L}},
    \end{cases}
\end{equation}
this upper bound of the transduction complexity is realized for a catalyst living in $\ell^2(2\mathbb Z+1)\otimes\mathbb C^2$
\begin{equation}
    \ket{v_\theta}=\sum_{n\in\mathbb Z} \sqrt{\frac{\abs{\hat h_n}}{2}}e^{-in\theta\frac{\pi}{L}}\ket{n}\left(\ket{\theta}+e^{i\arg(\hat h_n)}\ket{\theta_\perp}\right)\quad\text{with }h(\omega)=\sum_{n\in\mathbb Z}\hat h_ne^{i\omega n\frac{\pi}{L}}.
\end{equation}    
This concludes our proof.
\end{proof}
\subsection{Example of non-feasible strong covariance solution}
Let us show two small counter-examples where strong covariance is first not possible, second not optimal.
\subsubsection{No strongly covariant solution}
\begin{problem}[Search with cyclic scalar oracle]\label{prob:SearchCE1} Given $N\in \mathbb N$ the number of elements, $\mathrm{SearchC}^N$ is defined as the triple:
\begin{equation}
    \mathrm{SearchC}^N := (\mathcal{X}^N, \mathcal{T}^N, \mathcal{O}^N)
\end{equation}
over the input set $\mathcal{I} := \mathbb{Z}_n$ with $\mathcal{H} = \mathcal{M} = \mathbb{C}^N $; where the indexed families of input states, target states and oracles are defined as 
\begin{equation}
    \mathcal{X}^N := \left\lbrace \ket{\xi_j} = \ket{\psi} := \frac{1}{\sqrt{N}} \sum_{i=0}^{N-1} \ket{i}\right\rbrace_{j\in \mathcal{I}}  \quad \mathcal{T}^N := \left\lbrace \ket{\tau_j} = \ket{j}\right\rbrace_{j\in \mathcal{I}} \quad \mathcal{O}^N := \left\lbrace O_j=\omega_N^j\quad\text{s.t}\quad \omega = e^{2i\pi/n} \right\rbrace_{j \in \mathcal{I}}
\end{equation}
where $\{\ket{i}\}_{1 \le i \le N}$ is the canonical basis of $\mathbb C^N$.
\end{problem}
We will prove the following proposition
\begin{proposition}
    \emph{Search with a cyclic scalar oracle} has a no strongly covariant feasible point, whereas there exists a weakly covariant catalyst
    \begin{equation}
        \ket{v_j}=\sum_{r=0}^{N-2}\sqrt{\frac{N-1-k}{N}}\omega_N^{kj}\ket{k},\quad \norm{v_j}^2=\frac{N-1}{2}
    \end{equation}
     with $\mc W=\mathbb C^{N-1}$.
\end{proposition}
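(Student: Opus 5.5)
The plan has two parts: rule out strongly covariant feasible points using the constraint of the adversary program, and then verify the explicit weakly covariant catalyst.

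\textbf{Setting up the symmetry.} The input set is $\mathbb{Z}_N$, and the symmetry group contains the cyclic group $\mc C_N$ acting by $g\cdot j=j+g$. The targets transform under the shift representation, $\phi^\tau_g\ket{j}=\ket{j+g}$. The input state $\ket{\psi}$ is shift invariant, so $\phi^\xi$ can be taken to be the same shift. Since $O_j=\omega_N^j\mathds{1}_{\mc M}$ is a scalar, the relation $O_{j+g}=\phi^L_gO_j\phi^{R\dagger}_g$ only requires $\phi^L_g\phi^{R\dagger}_g=\omega_N^g\mathds{1}$. I will fix the problem representations as $\phi^R_g=\mathds{1}_{\mc M}$ and $\phi^L_g=\omega_N^g\mathds{1}_{\mc M}$.

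This choice of the fixed $\phi^R$ is exactly what drives the negative result. Were one free to choose $\phi^R$ as, for example, the regular representation on $\mc M=\mathbb{C}^N$, the weak construction below could be absorbed into it. The proof should state this fixing explicitly, as is done in \cref{sec:symCat}.

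\textbf{Ruling out strong covariance.} The kernels are $X(i,j)=1$, $T(i,j)=\delta_{ij}$ and $K(i,j)=O_i^\dagger O_j=\omega_N^{j-i}\mathds{1}$. The constraint of \cref{Eq:DefAdv} therefore reads
\begin{equation}
1-\delta_{ij}=\left(1-\omega_N^{j-i}\right)\braket{v_i|v_j}\qquad\forall i,j .
\end{equation}
Strong covariance with $\phi^R$ trivial forces $\ket{v_j}=\ket{v_0}$ for every $j$. Taking $j-i=k\neq0$ then gives $\norm{v_0}^2=1/(1-\omega_N^k)$. For $N\geq3$ the right-hand side is non-real, contradicting the fact that $\norm{v_0}^2$ is real. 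For $N=2$ there is no contradiction, so that case will be noted as an exception.

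\textbf{Verifying the weak catalyst.} I will take $\mc W=\mathbb{C}^{N-1}$ with the character representation $\phi^{\mc W}_g\ket{k}=\omega_N^{kg}\ket{k}$. The proposed family $\ket{v_j}=\sum_k c_k\omega_N^{kj}\ket{k}$, with $c_k=\sqrt{(N-1-k)/N}$, satisfies $\ket{v_{j+g}}=(\phi^{\mc W}_g\otimes\phi^R_g)\ket{v_j}$, so it is weakly covariant. Only the inner products $\braket{v_i|v_j}=\sum_k c_k^2\omega_N^{k(j-i)}$ matter for feasibility. The constraint then reduces to the identity
\begin{equation}
\frac1N\sum_{k=0}^{N-2}(N-1-k)\,\omega_N^{km}=\frac{1}{1-\omega_N^{m}}\qquad \forall m\not\equiv0 \pmod N .
\end{equation}
I will prove this identity by multiplying through by $1-\omega_N^m$ and telescoping, using $\sum_{k=0}^{N-1}\omega_N^{km}=0$ for $m\not\equiv0$. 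The conjugation convention (whether the exponent is $\omega_N^{km}$ or $\omega_N^{-km}$) will be adjusted so that the sign matches the constraint as written.

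\textbf{Norm and main obstacle.} The diagonal constraint ($i=j$) is $0=0$ and holds automatically. The norm is $\norm{v_j}^2=\sum_{k=0}^{N-2}(N-1-k)/N=(N-1)/2$, as claimed. The only delicate step is the telescoping identity together with its sign and conjugation convention; everything else is direct substitution.
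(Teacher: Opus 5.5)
Your proof is correct, and for the negative part it follows the paper's route: insert strong covariance into $1-\delta_{ij}=(1-\omega_N^{j-i})\braket{v_i|v_j}$ and contradict reality. It differs from the paper in three ways.

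First, the paper does not fix $\phi^R$ trivially. It runs the argument for every scalar pair $\phi^R_a=\omega_N^{ra}$, $\phi^L_a=\omega_N^{(r+1)a}$, whereas you only treat $r=0$. Your one-line ``non-real'' step does not transfer verbatim to $r\neq0$, because a single $m$ can give a real value (e.g.\ $N=6$, $r=1$, $m=1$: $\omega_6(1-\omega_6)=1$). Instead, take moduli: $2\norm{v_0}^2\abs{\sin(\pi m/N)}=1$ for all $m\not\equiv0$. This fails for $N\geq4$ by comparing $m=1,2$, and for $N=3$ note that $\omega_3^r(1-\omega_3)=\sqrt3\,e^{i(2\pi r/3-\pi/6)}$ is never positive.

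Second, your remark that the negative result depends on restricting $\phi^R$ is right, and it exposes an inaccuracy in the paper's proof. Because the oracles are scalar, the symmetry-group definition only forces $\phi^L_a=\omega_N^a\phi^R_a$. Hence $\phi^R$ can be any representation of $\mc C_N$ on $\mathbb C^N$, contrary to the paper's claim that only the $N$ scalar pairs are admissible. With $\phi^R_a=\mathrm{diag}(\omega_N^{ka})_{k}$, the weak catalyst placed in the $\mc M$ register with $\mc W=\mathbb C$ is strongly covariant and feasible. The paper also misses your $N=2$ exception (for $r=0$).

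Third, the paper only asserts feasibility of the weak catalyst, while you prove it. No convention adjustment is needed: for $z=\omega_N^{j-i}\neq1$ one has $(1-z)\sum_{k=0}^{N-2}(N-1-k)z^k=(N-1)-\sum_{k=1}^{N-1}z^k=N$, which is exactly the constraint as written. Remember to tensor the $\mc W$-vector with a fixed unit vector of $\mc M$; this is harmless since the oracle is scalar.
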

\begin{proof}~
    \paragraph{Derivation of the kernels, and the symmetry group of the problem}The kernels of this problem are
\begin{equation}
    \begin{cases}
        X(j,k)=1\\
        T(j,k)=\delta_{j,k}\\
        K(j,k)=\omega_N^{k-j}
    \end{cases},
\end{equation}
and thus the adversary bound is 
\begin{equation}\label{eq:Ineed2label}
    \operatorname{W}\left(\mathrm{SearchC}^N\right)=\min_{\ket{v_j}}\max_j\norm{v_j}^2\qquad\text{s.t.}\quad 1-\delta_{j,k}=\braket{v_j|v_k}(1-\omega_N^{k-j}) \quad \forall j,k\in \mathcal{I}.
\end{equation}
\paragraph{No strongly covariant feasible point}
The symmetry group of the problem is the cyclic group $\mc C_N$. In this particular case, the left and right representations are not necessarily uniquely defined, but the only possible choices for the representation $\phi_a^R$ and $\phi_a^L$ for an element $a\in \mathcal{C}_N$ are the $N$ pairs:
\begin{equation}
    \phi^R_a=\omega_N^{ra},\quad\phi^L_a=\omega_N^{(r+1)a}\qquad r\in\{0,\dots,N-1\}.
\end{equation}
For a particular choice of $r$, strong covariance implies
\begin{equation}
    \ket{v_k}=\phi^R_{k-j}\ket{v_j}=\omega_N^{r(k-j)}\ket{v_k}\qquad\text{and thus}\qquad\braket{v_j|v_k}=\omega_N^{r(k-j)}\norm{v_0}^2.
\end{equation}
Plugging that in the constraint of the adversary bound one finds
\begin{equation}
    1-\delta_{j,k}=\norm{v_0}^2\omega_N^{r(k-j)}(1-\omega_N^{k-j})=-2i\norm{v_0}^2\omega^{(r+1/2)(k-j)}\sin\left(\frac{\pi(k-j)}{N}\right).
\end{equation}
We can see that the left side is real whereas the right hand side has a non zero imaginary part, this is a contradiction therefore there is no strongly covariant solution. 
\paragraph{Feasible weakly covariant catalyst} 
We can choose as a feasible point the following catalyst
    \begin{equation}
        \ket{v_i}=\sum_{r=0}^{N-2}\sqrt{\frac{N-1-k}{N}}\omega_N^{ki}\ket{k},\quad \norm{v_i}^2=\frac{N-1}{2}.
    \end{equation}
Using similar techniques as in~\ref{sec:AmpEstimation}, one can further prove that this catalyst is optimal.
\end{proof}
\subsubsection{A non-optimal strongly covariant solution with a two-dimensional oracle}
We give a counter-example with three inputs and a two-dimensional oracle. Strongly covariant feasible points exist, but none is optimal.

We first define the constant $\theta = \frac{5\pi}{12}$, $c=\cos\theta$, $s=\sin \theta$ and $\omega = e^{2\pi i/3}$. Additionally, we set
\begin{equation}
    R=\begin{pmatrix}1&0\\0&\omega\end{pmatrix},\qquad
    U=\begin{pmatrix}c&-s\\s&c\end{pmatrix},\qquad
    L=U\begin{pmatrix}\omega&0\\0&\omega^2\end{pmatrix}U^\dagger
\end{equation}
With this mind, we provide the following problem.
\begin{problem}[Conversion to a cosine kernel]\label{prob:CosineCE}
$\mathrm{Con}$ is defined as the triple:
\begin{equation}
    \mathrm{Con} := (\mathcal{X}, \mathcal{T}, \mathcal{O})
\end{equation}
over the input set $\mathcal{I} := \mathbb{Z}_3$ with $\mc H=\mc M=\mathbb C^2$; where the indexed families of input states, target states and oracles are defined as
\begin{equation}
    \mathcal{X} := \left\lbrace \ket{\xi_j}=\ket{0}\right\rbrace_{j\in \mathcal{I}} \quad \mathcal{T} := \left\lbrace \ket{\tau_j}=\cos\left(\frac{2\pi j}{3}\right)\ket{0}
    +\sin\left(\frac{2\pi j}{3}\right)\ket{1}\right\rbrace_{j\in \mathcal{I}} \quad 
    \mathcal{O} := \left\lbrace O_j = L^jR^{-j}\right\rbrace_{j\in \mathcal{I}} 
\end{equation}.
\end{problem}
From the above definition, we have $X(j,k)=1$ and $T(j,k)=\cos(2\pi(j-k)/3)$. In particular, for any $a\in \mathcal{I}$,we have $O_{j+a}=L^aO_jR^{-a}$. So we may choose $\phi^R_a=R^a$ and $\phi^L_a=L^a$.
We will prove that
\begin{proposition}
The minimal norm of a strongly covariant catalyst over all admissible oracle representations is $2$, whereas
\begin{equation}
    \operatorname{W}(\mathrm{Con})\leq\frac{10+4\sqrt3}{10+\sqrt3}<\frac32<2.
\end{equation}
\end{proposition}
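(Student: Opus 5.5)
The plan is to reduce both claims to small semidefinite programs on $\mc M=\mathbb C^2$, using the $\mc C_3$ structure of the problem. The equality $X(j,k)-T(j,k)=1-\cos(2\pi(j-k)/3)$ gives the constraints: $0$ on the diagonal and $\tfrac32$ off the diagonal. So, by the reformulation of~\cref{Eq:DefAdv} in terms of $K(j,k)=O_j^\dagger O_k$, a family $\{\ket{v_j}\}$ is feasible iff
\begin{equation}
    \braket{v_j|\mathds 1\otimes(\mathds 1-O_j^\dagger O_k)|v_k}=\tfrac32\qquad\forall j\neq k .
\end{equation}
The diagonal constraints are automatic.

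\emph{Step 1: admissible representations.} First I will classify all pairs $(\phi^L,\phi^R)$ of representations of $\mc C_3$ on $\mathbb C^2$ satisfying $O_{j+a}=\phi^L_aO_j\phi^{R\dagger}_a$. Since $O_0=\mathds 1$, taking $j=0$ forces $\phi^L_a=O_a\phi^R_a$. Write $P\cq\phi^R_1$. The remaining conditions are $P^3=\mathds 1$ together with $O_2=O_1PO_1P^\dagger$ and $O_1PO_1PO_1P^\dagger=\mathds 1$ (up to the same bookkeeping). Substituting $P=RC$ reduces these to commutation conditions on $C$ with $R$ and $L$. I expect the solutions to be $P=\omega^rR$ for $r\in\{0,1,2\}$, possibly with a discrete extra family, analogous to the phase ambiguity in~\cref{prob:SearchCE1}. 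This step is routine but has to be exhaustive, since the claim is about the minimum over all admissible representations.

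\emph{Step 2: strongly covariant minimum.} Under strong covariance, $\ket{v_j}=(\mathds 1\otimes P^j)\ket{v_0}$. Each constraint then becomes $\operatorname{Tr}[\rho\,M_{jk}]=\tfrac32$, where $\rho=\operatorname{Tr}_{\mc W}\proj{v_0}\succeq0$ is a $2\times2$ matrix and
\begin{equation}
    M_{jk}=P^{-j}(\mathds 1-O_j^\dagger O_k)P^{k}.
\end{equation}
The cost is $\operatorname{Tr}\rho$, so $\mc W$ drops out. By covariance the six off-diagonal constraints collapse to the two orbits $k-j=\pm1$, and these are complex-conjugate pairs. This leaves a linear program over $2\times2$ PSD matrices, i.e.\ over the Bloch ball. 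I will solve it explicitly for each admissible $P$ using the numerical values $c=\cos(5\pi/12)$ and $s=\sin(5\pi/12)$. I expect the optimum to be $2$, certified by a matching dual witness $Y$ with $\mathds 1-\sum y_{jk}M_{jk}\succeq0$, and I expect infeasibility or a larger value for the other choices of $r$.

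\emph{Step 3: better weakly covariant point.} For the upper bound I take $\mc W=\bigoplus_{r}\mathbb C$ carrying the characters $\omega^r$ of $\mc C_3$, following~\cref{theorem:symmetryCat} and the character decomposition of~\cref{sec:AmpEstimation}. With $\ket{v_j}=\sum_r\omega^{rj}\ket r\otimes P^j\ket{w_r}$, the constraints read
\begin{equation}
    \sum_r\omega^{r(k-j)}\operatorname{Tr}[\rho_rM_{jk}]=\tfrac32,
\end{equation}
with cost $\sum_r\operatorname{Tr}\rho_r$. This decouples into Fourier components and gives extra freedom, and it can exploit the redundancy of the oracle that strong covariance wastes. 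I will exhibit explicit rank-one $\rho_r$, found by solving this slightly larger SDP in closed form, with total trace $\frac{10+4\sqrt3}{10+\sqrt3}$, and then verify the two constraint orbits directly. The inequality $\frac{10+4\sqrt3}{10+\sqrt3}<\frac32$ is elementary: it is equivalent to $20+8\sqrt3<30+3\sqrt3$, i.e.\ $5\sqrt3<10$.

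The main obstacle is Step 1 together with the optimality certificate in Step 2. Missing an admissible pair $(\phi^L,\phi^R)$ would invalidate the claim that $2$ is the minimum, so I would organize that step around the commutant argument and produce an explicit dual certificate for each surviving case.
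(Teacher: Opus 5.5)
Your Steps 1 and 2 follow the paper's argument, recast as a small SDP, and they go through once a few points are tightened.
\begin{itemize}
\item With $P=RC$, the covariance conditions say exactly that $C$ commutes with $O_1$ and $O_2$.
\item Your last relation should read $(O_1P)^3=\mathds 1$, equivalently $O_1PO_2P^\dagger=\mathds 1$.
\item Since $[O_1,O_2]\neq0$ in dimension two, $C$ is a scalar. Then $P^3=\mathds 1$ leaves $P=\omega^rR$, with no extra family.
\item You should justify that $\mc C_3$ is the whole symmetry group, e.g.\ via $\det K(j,k)=\omega^{2(k-j)}$, because strong covariance is defined relative to that group.
\end{itemize}
For $P=R$, your single complex constraint is the paper's coefficient matching: $\bra{0}\rho\ket{0}=1$ and $\bra{u}\rho\ket{u}=\tfrac12$, with $\ket{u}=U\ket{1}$. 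The paper's triangle inequality and your dual certificate both give $\operatorname{Tr}\rho\ge2$, attained at $\rho=2\proj{+}$. For $r=1,2$, the real and imaginary parts force a negative diagonal entry, so there is no feasible point.

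The genuine gap is Step 3. The second claim is an explicit numerical upper bound, so its proof is a concrete feasible catalyst plus a check, and you give neither. Moreover, the computation you describe would not produce the stated constant. Write $x_r=\bra{0}\rho_r\ket{0}$ and $q_r=\bra{u}\rho_r\ket{u}$, and use $\operatorname{Tr}[\rho(R^d-L^d)]=x+\omega^d(q-x)-\omega^{2d}q$. Your constraints are then
\[
x_0-x_2+q_2-q_1=1,\qquad x_1-x_0+q_0-q_2=-\tfrac12 ,
\]
and the third Fourier equation is redundant. Each $\rho_r$ enters several Fourier coefficients, so nothing decouples by character.

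The optimum of this coupled SDP lies strictly below $\frac{10+4\sqrt3}{10+\sqrt3}$; it equals $\frac{3}{1+\sqrt{5-2\sqrt3}}\approx1.340$.
\begin{itemize}
\item It is attained with $\rho_1=0$ and rank-one $\rho_0,\rho_2$.
\item It is certified by weighting the two equations with $a$ and $-a$, where $a=2/(1+\sqrt{5-2\sqrt3})$. This gives the dual operators $2a\proj{0}-a\proj{u}$, $-a(\proj{0}+\proj{u})$ and $2a\proj{u}-a\proj{0}$, all $\preceq\mathds 1$.
\end{itemize}
This still implies the inequality. But the stated constant comes from a specific non-optimal point, not from solving the SDP.

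To prove the statement as written, exhibit that point: $\mc W=\mathbb C^2$, $\phi^{\mc W}_a=\mathrm{diag}(1,\omega^{2a})$, $\rho_0=\alpha\proj{0}$, $\rho_1=0$, $\rho_2=\beta\proj{-}$. The equations become $\alpha+\beta/4=1$ and $\alpha c^2+3\beta/4=\tfrac12$. These are solved by the paper's $\alpha=\frac{10}{10+\sqrt3}$ and $\beta=\frac{4\sqrt3}{10+\sqrt3}$, with cost $\alpha+\beta=\frac{10+4\sqrt3}{10+\sqrt3}$.
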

\begin{proof}~
\paragraph{Strongly covariant feasible points.}
For this problem, the adversary bound is
\begin{equation}
    W(\text{Con})=\min_{\ket{v_j}}\max_j\norm{v_j}^2\quad \text{s.t.}\quad1-\cos\left(\frac{2\pi(j-k)}{3}\right)=\braket{v_j|\mathds{1}_{\mc L}-(\mathds{1}_{\mc W}\otimes O_j^\dagger O_k)|v_k} \quad \forall j,k\in \mathcal{I}.
\end{equation}
For our choice of $\phi^L$ and $\phi^R$, strong covariance would imply that every catalyst $\ket{v_j}$ is defined as $\ket{v_j}=(\mathds{1}_{\mc W}\otimes R^j)\ket{v_0}$. Hence, with $d=k-j$, the constraint becomes 
\begin{equation}\label{eq:CosineStrong}
    1-\frac12(\omega^d+\omega^{2d})
    =\braket{v_0|\mathds{1}_{\mc W}\otimes(R^d-L^d)|v_0}.
\end{equation}
Let us write $\ket{v_0}=\ket{a}\otimes\ket{0}+\ket{b}\otimes\ket{1}$ where $\ket{a},\ket{b}\in \mathcal{W}$.
Comparing the coefficients of $1$ and $\omega^{2d}$ gives
\begin{equation}
    \norm{a}^2=1,\qquad \norm{-s a+c b}^2=\frac12.
\end{equation}
Since $s-c=1/\sqrt2$, the triangle inequality implies
\begin{equation}
    c\norm{b}\geq s\norm{a}-\norm{-s a+c b}
    =s-\frac1{\sqrt2}=c.
\end{equation}
Consequently $\norm{v}^2\geq2$. Equality is attained with $\mc W=\mathbb C$ and
\begin{equation}
    \ket{v_j}=R^j(\ket{0}+\ket{1})=\ket{0}+\omega^j\ket{1}.
\end{equation}
The remaining Fourier coefficient in~\eqref{eq:CosineStrong} follows because both sides vanish at $d=0$.

\paragraph{A better weakly covariant catalyst.}
Let $\ket{-}=(\ket{0}-\ket{1})/\sqrt2$ and set
\begin{equation}
    \alpha=\frac{10}{10+\sqrt3},\qquad
    \beta=\frac{4\sqrt3}{10+\sqrt3}.
\end{equation}
With $\mc W=\mathbb C^2$, take
\begin{equation}
    \ket{v_j}=\sqrt\alpha\ket{0}_{\mc W}\otimes\ket{0}_{\mc M}
    +\sqrt\beta\,\omega^{2j}\ket{1}_{\mc W}\otimes R^j\ket{-}_{\mc M}.
\end{equation}
This family is weakly covariant with $\phi^W_a=\operatorname{diag}(1,\omega^{2a})$ and has cost $\alpha+\beta<3/2$.
To check feasibility, put $z=\omega^{k-j}$. The two orthogonal workspace components contribute respectively
\begin{equation}
    f_0(z)=1-c^2z-s^2z^2,\qquad
    f_1(z)=\frac14-\frac34z+\frac12z^2.
\end{equation}
Indeed, $f_0(z)=\braket{0|R^{k-j}-L^{k-j}|0}$ and
$f_1(z)=z^2\braket{-|R^{k-j}-L^{k-j}|-}$, using $z^3=1$.
Since $c^2=(2-\sqrt3)/4$ and $s^2=(2+\sqrt3)/4$, direct substitution gives
\begin{equation}
    \alpha f_0(z)+\beta f_1(z)=1-\frac12(z+z^2)=X_{j,k}-T_{j,k}.
\end{equation}
Thus the strongly covariant feasible point is not optimal.

\paragraph{Other choices of representations.}
For completeness, the symmetry group is exactly $\mc C_3$: since
$\det K_{j,k}=\omega^{2(k-j)}$, preservation of $K$ up to conjugation forces every input permutation to be a translation.
Moreover, direct calculation shows that $O_1$ and $O_2$ do not commute. Their common commutant in dimension two therefore consists only of scalars. Any other admissible right representation must consequently be $\phi^{R\prime}_a=\omega^{ra}R^a$, for $r\in\mathbb Z_3$, with the same twist on the left representation.
For $r=1$, the constant Fourier coefficient of the right-hand side of~\eqref{eq:CosineStrong}, after multiplication by $\omega^{rd}$, is nonpositive and cannot equal $1$.
For $r=2$, the coefficient of $\omega^{2d}$ is nonnegative and cannot be equal to $-1/2$.
Only $r=0$ admits strongly covariant feasible points, and its minimum is $2$ as proved above.
\end{proof}

\section{Acknowledgment}
The authors thank Y. Zhang and S. Apers for fruitful discussions that motivated in particular the search for a transducer for the amplitude estimation problem as defined in~\cref{sec:AmpEstimation}. We also thank A. Darras and A. Flachs for their help and discussions around this paper.

This work was supported by the Belgian Fonds de la Recherche Scientifique - FNRS under Grant O.0013.22 (EoS CHEQS).
Benoît Dubus is a Research Fellow of the Fonds de la Recherche Scientifique - FNRS.
\section{Use of AI}

In this paper, AI was used for technical steps in different proofs, to find the counter-example to strong covariance, to generate the code for figures and for mathematical and language verifications. The authors assume the full responsibility for the text of the paper.
\newpage
\bibliography{references/ref2}
\newpage
\appendix
\section{Proof of Proposition 1.}\label{App:ProofPropEqProb}
\begin{proof}[Proof of $(\Leftarrow)$ from~\cref{Prop:EqProb}]
Assume $X=X'$, $T=T'$ and $[K]=[K']$. We construct $U^\xi, U^\tau\in U(\mc{H})$ and $U^L,U^R\in U(\mc{M})$ as in the definition of unitary equivalence.

\emph{States.} Let $S\cq\mathrm{Span}\{\ket{\xi_i}\}_{i\in\mc{I}}$ and $S'\cq\operatorname{Span}\{\ket{\xi_i'}\}_{i\in\mc{I}}$. Define $V:S\to S'$ on finite linear combinations by
\begin{equation}
    V\Big(\sum_i c_i\ket{\xi_i}\Big)\cq\sum_i c_i\ket{\xi_i'}.
\end{equation}
For any finitely supported $c\in\mathbb{C}^{\mc{I}}$,
\begin{equation}
    \Big\|\sum_i c_i\ket{\xi_i'}\Big\|^2=\sum_{i,j}\bar c_i c_j X'(i,j)=\sum_{i,j}\bar c_i c_j X(i,j)=\Big\|\sum_i c_i\ket{\xi_i}\Big\|^2 .
\end{equation}
Hence $V$ is well defined (a combination that vanishes is mapped to a vector of norm zero) and isometric, and it is surjective onto $S'$ by construction. It thus extends by continuity to a unitary $\bar S\to\bar S'$, where the bar denotes the closure. As stated in the assumption of the proposition we assume $\dim\bar S^\perp=\dim\bar S'^\perp$ and thus picking any unitary $W:\bar S^\perp\to\bar S'^\perp$, set $U^\xi\cq V\oplus W\in U(\mc{H})$. Then $U^\xi\ket{\xi_i}=\ket{\xi_i'}$ for all $i\in\mc{I}$. The same argument applied to $T=T'$ gives $U^\tau$.

\emph{Oracles.} Since $[K]=[K']$, there is $U^R\in U(\mc{M})$ with $K'(i,j)=U^RK(i,j)U^{R\dagger}$ for all $i,j\in\mc{I}$. Fix any $i_0\in\mc{I}$ and define
\begin{equation}
    U^L\cq O'_{i_0}\,U^R\,O_{i_0}^\dagger\in U(\mc{M}).
\end{equation}
Then, for every $j\in\mc{I}$,
\begin{equation}
    O'_j=O'_{i_0}\,O'^\dagger_{i_0}O'_j=O'_{i_0}K'(i_0,j)=O'_{i_0}U^R O_{i_0}^\dagger O_j\,U^{R\dagger}=U^L O_j U^{R\dagger},
\end{equation}
which concludes the proof.
\end{proof}
\section{Proof of symmetry theorems}
\subsection{Proof of~\texorpdfstring{~\cref{theorem:symmetryCat}}{Theorem 5}}\label{proof:SymmetryCat}
\begin{proof}[Proof of \cref{theorem:symmetryCat}] A necessary condition for the algorithm $\mc A=(S^\circ,\{\ket{v_i}\})$ to solve the problem $P\cong(X,T,[K])$ is that
\begin{equation}
    X(i,j)-T(i,j)=\braket{v_i|\mathds{1}_{\mc W}\otimes (\mathds{1}_{\mc M}-K(i,j)|v_j}\qquad\forall i,j\in\mc I,
\end{equation}
for a fixed $K\in[K]$. If this is true for any input pairs $i,j$, it is also true for any rotated pairs $g(i),g(j)$:
\begin{equation}
    X(g(i),g(j))-T(g(i),g(j))=\braket{v_{g(i)}|\mathds{1}_{\mc W}\otimes (\mathds{1}_{\mc M}-K(g(i),g(j))|v_{g(j)}}\qquad\forall i,j\in\mc I,~\forall g\in\mc{G},
\end{equation}
and thus using the transformations of the kernels defining $P$ in~\cref{eq:DefRepProb}, the rotation $g$ disappears from $X$ and $T$, and can be transferred from the oracle kernel $K$ onto the catalyst
\begin{equation}
    X(i,j)-T(i,j)=\left(\bra{v_{g(i)}}\left(\phi^{c}_g\otimes\phi^R_g\right)\right)\left(\mathds{1}_{\mc W}\otimes (\mathds{1}_{\mc M}-K(i,j)\right)\left(\left(\phi^{c}_g\otimes\phi^R_g\right)^\dagger\ket{v_{g(j)}}\right)\qquad\forall i,j\in\mc I,~\forall g\in\mc{G},
\end{equation}
for an arbitrary representation $\phi^c$ of $\mc{G}$ on $\mc{W}$. We thus directly see that a rotated catalyst $\ket{v_{i,g}}\cq\left(\phi^{c}_g\otimes\phi^R_g\right)^\dagger\ket{v_{g(i)}}$ is now a new valid catalyst, in the sense that it is a feasible point of the adversary bound, and thus can be transformed into a valid transducer.
One can note that an algorithm $\mc A_g\cq(S^\circ_g,\{\ket{v_{i,g}}\}$, for some unitary $S^\circ_g$ that as discussed in~\cref{Subsec:PreTra} necessarily exists, would have the same transduction complexity as $\mc A$, indeed
\begin{equation}
    W(\mc A_g)=\sup_{i\in\mc I}\norm{v_{i,g}}=\sup_{i\in\mc I}\norm{v_{g(i)}}=\sup_{i\in\mc I}\norm{v_{i}}=W(\mc A).
\end{equation}
Further, we define a symmetrized catalyst living in a larger private space $\mc L'=\mc H_{\mc G}\otimes\mc L$, where $\mc H_{\mc G}$ is the group Hilbert space introduced in~\cref{sec:PreSym}, as the controlled uniform superposition over all rotated catalysts:
\begin{equation}
    \ket{\tilde{v}_i}\cq \frac{1}{\sqrt{\mu(\mc{G})}}\int_{\mc{G}}dg\ket{g^{-1}}\otimes\ket{v_{i,g}}\qquad\forall i\in\mc{I}.
\end{equation}
Plugging the definitions of $\ket{\tilde{v}_i}$ and $\ket{v_{i,g}}$ and using~\cref{eq:canGroupBasis}, one can check that 
\begin{equation}
    X(i,j)-T(i,j)=\braket{\tilde v_i|\mathds 1_{\mc W}\otimes (\mathds 1_{\mc M}-K(i,j)|\tilde v_j}\qquad\forall i,j\in\mc I,
\end{equation}
therefore $\{\ket{\tilde v_i}\}_{i\in\mc I}$ is a valid family of catalysts. Once again, as discussed in~\cref{Subsec:PreTra}, we thus know that there exists an input-independent unitary $\tilde S^\circ$ such that $\tilde {\mc A}=(\tilde S^\circ,\{\ket{\tilde v_i}\})$ solves $P$. We can now check that $\{\ket{\tilde v_i}\}_{i\in\mc I}$, and by extension $\tilde {\mc A}$, are weakly covariant, and that it has a transduction complexity lower or equal than that of $\mc{A}$. 
First, developing again the definition of $\ket{\tilde{v}_i}$ and $\ket{v_{i,g}}$, and doing the change of variable $g\mapsto gh^{-1}$, one can check that 
\begin{equation}
    \ket{\tilde{v}_{h(i)}}=\left(L_h\otimes\phi^{c}_h\otimes\phi^R_h\right)\ket{\tilde{v}_i},
\end{equation}
with $L$ the left regular representation, and the catalysts (and the algorithm) are thus weakly covariant. Note that the representation $L_h\otimes\phi^{c}_h$ is unitary, and strongly continuous for compact Lie groups.
Further, we can also verify that the norm of the catalyst is bounded by the supremum of the catalysts on the orbit using max-bound on averages
\begin{equation}
    \norm{\tilde{v}_i}^2=\frac{\int_{\mc{G}}dg\norm{v_{g(i)}}^2}{\mu(\mc{G})}=\frac{\int_{\mc{G}i}dj \norm{v_j}^2}{\mu(\mc{G}i)}\leq \sup_{j\in\mc{G}i}\norm{v_j}^2.
\end{equation}
Since any $\ket{\tilde{v}_i}$ is upper-bound by some $\ket{{v}_i}$, we can conclude that globally
\begin{equation}
    W(\tilde{\mc A})=\sup_{j\in\mc{I}}\norm{\tilde{v}_j}^2\leq\sup_{j\in\mc{I}}\norm{v_j}^2=W(\mc A),
\end{equation}
and therefore the transduction complexity can only be reduced, which concludes the proof.
\end{proof}

\subsection{Proof of~\texorpdfstring{~\cref{theorem:symmetryCat}}{Theorem 7}}\label{proof:SymmetryTrans}
\begin{proof}[Proof of \cref{theorem:symmetryTrans}]
Given the input-independent unitary $S^\circ$ and the family of weakly covariant catalysts $\ket{v_i}$, we write $\mc{X}=\mathrm{Span}\left(\ket{\xi_i}\oplus\left(\mathds{1}_{\mc{W}}\otimes O_i\right)\ket{v_i},\forall i\in\mc{I}\right)$ and $\mc{T}=\mathrm{Span}\left(\ket{\tau_i}\oplus\ket{v_i},\forall i\in\mc{I}\right)=S^\circ\mc{X}$. The transduction condition gives that 
\begin{equation}\label{eq:ineedalabel}
    S^\circ\left(\ket{\xi_i}\oplus\left(\mathds{1}_{\mc{W}}\otimes O_i\right)\ket{v_i}\right)=\ket{\tau_i}\oplus\ket{v_i}\qquad\forall i\in\mc I.
\end{equation}
In particular, we can consider $i=g(j)$, with the transformations defined in~\cref{eq:DefRepProb} and in~\cref{def:weakCov}, and find that
\begin{equation}
    S^\circ\left(\phi_g^\xi\oplus\phi_g^{\mc{W}}\otimes\phi_g^L\right)\left(\ket{\xi_j}\oplus\left(\mathds{1}_{\mc{W}}\otimes O_j\right)\ket{v_j}\right)=\left(\phi_g^\tau\oplus\phi_g^{\mc{W}}\otimes\phi_g^R\right)\ket{\tau_j}\oplus\ket{v_j} \qquad\forall j\in\mc I, g\in\mc G.
\end{equation}
Further, reusing~\cref{eq:ineedalabel}, we obtain
\begin{equation}
    S^\circ\left(\phi_g^\xi\oplus\phi_g^{\mc{W}}\otimes\phi_g^L\right)\left(\ket{\xi_j}\oplus\left(\mathds{1}_{\mc{W}}\otimes O_j\right)\ket{v_j}\right)=\left(\phi_g^\tau\oplus\phi_g^{\mc{W}}\otimes\phi_g^R\right)S^\circ\left(\ket{\xi_j}\oplus\left(\mathds{1}_{\mc{W}}\otimes O_j\right)\ket{v_j}\right)\qquad \forall j\in\mc I, g\in\mc G.
\end{equation}
Since the vectors $\left(\ket{\xi_j}\oplus\left(\mathds{1}_{\mc{W}}\otimes O_j\right)\ket{v_j}\right)$ span $\mc X$, and $\mc T$ is the image of $\mc X$ by $S^\circ$, we directly get 
\begin{equation}
     S^\circ\left(\phi_g^\xi\oplus\phi_g^{\mc{W}}\otimes\phi_g^L\right)|_{\mc X}=\left(\phi_g^\tau\oplus\phi_g^{\mc{W}}\otimes\phi_g^R\right)|_{\mc T}S^\circ
\end{equation}
\end{proof}

\end{document}